\newenvironment{myenumerate}
{ \begin{enumerate}
		\setlength{\itemsep}{0pt}
		\setlength{\parskip}{0pt}
		\setlength{\parsep}{0pt}       }
	{ \end{enumerate}                  } 
\newtheorem{Def}{Definition}
\newtheorem{Lemma}{Lemma}
\newtheorem{assume}{Assumption}
\newcommand\RR{\mathbb{R}}
\newcommand\EE{\mathbb{E}}
\newcommand\pcc{p_{\ch{CC}}}
\newcommand\pch{p_{\ch{CH}}}
\newcommand\phc{p_{\ch{HC}}}
\newcommand\phh{p_{\ch{HH}}}
\newcommand\Nc{N_{\ch{C}}}
\newcommand\Nh{N_{\ch{H}}}
\newcommand\Ncc{N_{\ch{CC}}}
\newcommand\Nch{N_{\ch{CH}}}
\newcommand\Nhh{N_{\ch{HH}}}
\newcommand\KHHE{ K_{\ch{HH}}^\text{eff}}
\newcommand\doubleC{\ch[bond-length=3pt, bond-offset=0pt]{C=C}}
\newcommand\KDoubleE{ K_{\doubleC}^\text{eff} }
\newcommand\bp{\bar{p}} 
\newcommand\bG{\bar{G}} 
\newcommand\tp{\tilde{p}}
\newcommand\tq{\tilde{q}} 
\newcommand\tG{\tilde{G}}
\begin{document}
	
	{\title{Cyclic random graph models predicting giant molecules in hydrocarbon pyrolysis}
		
		
		\author[1]{Perrin E. Ruth\thanks{pruth@umd.edu}}
		\author[2]{Vincent Dufour-D\'ecieux\thanks{vdufour@ethz.ch}}
		\author[3]{Christopher Moakler\thanks{cmoakler@gmail.com}}
		\author[1]{Maria Cameron\thanks{mariakc@umd.edu}}
		\affil[1]{\small{Department of Mathematics, University of Maryland, College Park, MD 20742, USA}}
		\affil[2]{\small{Energy \& Process Systems Engineering, ETH Zürich, Zürich 8092, Switzerland}}                               
		\affil[3]{\small{Applied Physics Laboratory, Johns Hopkins.}}
		
		\maketitle
		
		\begin{abstract}
			Hydrocarbon pyrolysis is a complex chemical reaction system at extreme temperature and pressure conditions involving large numbers of chemical reactions and chemical species. Only two kinds of atoms are involved: carbons and hydrogens. Its effective description and predictions for new settings are challenging due to the complexity of the system and the high computational cost of generating data by molecular dynamics simulations. On the other hand, the ensemble of molecules present at any moment and the carbon skeletons of these molecules can be viewed as random graphs. Therefore, an adequate random graph model can predict molecular composition at a low computational cost. We propose a random graph model featuring disjoint loops and assortativity correction and a method for learning input distributions from molecular dynamics data.  The model uses works of Karrer and Newman (2010) and Newman (2002) as building blocks. We demonstrate that the proposed model accurately predicts the size distribution for small molecules as well as the size distribution of the largest molecule in reaction systems at the pressure of 40.5 GPa, temperature range of 3200K--5000K, and H/C ratio range from 2.25 as in octane through 4 as in methane.
		\end{abstract}
		
		\textbf{Keywords:} random graphs; configuration model; motifs; loops/cycles; assortativity; generating functions; sampling; hydrocarbon pyrolysis; molecular dynamics simulations; Arrhenius Law; molecular simulations}


	\begin{figure}[!htbp]
		\centering
		\includegraphics{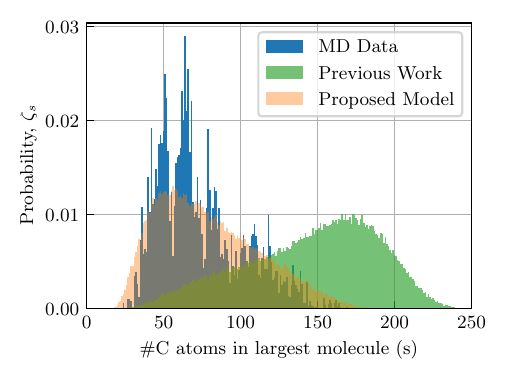}
		\caption{
			Predictions for the distribution of the number of carbon atoms in the largest molecule arising in ReaxFF MD simulations (blue) initialized to \ch{C4H10} at 3600K. The largest molecule corresponds to a giant connected component in random graph models. Previous work~\cite{dufour-decieuxPredicting2023} using the configuration model~\cite{molloyCritical1995,molloySize1998,newmanRandom2001} (green) overestimates the size of the largest molecule, and the proposed model 
			(orange) accurately predicts the size of the largest molecule.}
		\label{fig:largest_mol_ex}
	\end{figure}
	
	\section{Introduction}
	
	Hydrocarbon pyrolysis is a complex chemical reaction system of carbon and hydrogen occurring at extreme temperatures and pressures. It involves thousands of chemical reactions and tens of thousands of chemical species~\cite{yangLearning2017}. In particular, we consider conditions similar to the interiors of the ice giants Uranus and Neptune, where temperature is on the order of thousands of kelvins and pressure is on the order of tens of gigapascals. At sufficiently high pressures, carbon atoms separate from hydrogen atoms to form diamond~\cite{rossIce1981}. The conditions that lead to the formation of diamonds are studied in many works, e.g.~\cite{rossIce1981,ancilottoDissociation1997,krausFormation2017,chengThermodynamics2023}. At lower pressures, the pyrolysis of hydrocarbons results in a polymer state~\cite{hiraiPolymerization2009,krausFormation2017,ancilottoDissociation1997,spanuStability2011}. In this state, large molecules are still able to form and their network structure more closely resembles a tree. 
	
	It is challenging to study hydrocarbon pyrolysis with physical experiment due to extreme thermodynamic conditions and varying experimental setup.
	Experimental studies of hydrocarbon pyrolysis primarily use either diamond anvil cells~\cite{benedettiDissociation1999,hiraiPolymerization2009} or shock compression~\cite{krausFormation2017}. Unfortunately, these setups often disagree with each other, e.g. on which conditions are favorable for the formation of diamond. Thus, computational studies of hydrocarbon pyrolysis are needed to further understand the mechanisms that underlie hydrocarbon pyrolysis.
	
	Typically, computational studies of hydrocarbon pyrolysis use molecular dynamics (MD) simulations. The most accurate and versatile method for simulating reactive chemical systems is \textit{ab initio} MD -- see e.g.~\cite{ancilottoDissociation1997,spanuStability2011} for simulations of hydrocarbon pyrolysis with \textit{ab initio} MD. However, in our present setting involving thousands of atoms and nanosecond timescales, \textit{ab intio} MD simulations are prohibitively expensive.
	Classical MD force fields, in particular ReaxFF~\cite{vanduinReaxFF2001,chenowethReaxFF2008,srinivasanDevelopment2015,ashrafExtension2017}, present a computatationally affordable and commonly used alternative to \textit{ab initio} MD. In this work, our input data is generated using ReaxFF. It is worth mentioning, modelling of force fields is an active area of research~\cite{musilPhysicsInspired2021,unkeMachine2021,chengThermodynamics2023}.
	
	A cheaper, commonly used approach is to simulate hydrocarbon pyrolysis using kinetic Monte Carlo~\cite{gillespieGeneral1976,highamModeling2008} with reactions learned from MD data, see e.g.~\cite{yangDataDriven2019,chenTransferable2019}. Kinetic models of hydrocarbon pyrolysis are limited by combinatorial growth, requiring tens of thousands of reactions to model the formation of small molecules~\cite{chenTransferable2019}. Furthermore, kinetic models of hydrocarbon pyrolysis are unable to make accurate predictions about the size of the largest molecule. Recent work has shown that ``local'' reactions, i.e. reactions that only consider changes near the reaction site, reduce the combinatorial complexity of these kinetic models and make useful predictions for the size of the largest molecule~\cite{dufour-decieuxAtomicLevel2021,dufour-decieuxTemperature2022}. 
	
	MD simulations take on the order of days to weeks to reach nanosecond timescales, where simulation time is highly dependent on hardware. In contrast, the runtime of kinetic Monte Carlo simulations take on the order of minutes to hours. As with MD, kinetic Monte Carlo simulations must be sampled for each set of initial conditions (i.e. temperature, pressure, and initial composition). 
	
	In this work, as in the recent work by Dufour-D\'ecieux \textit{et al.}~\cite{dufour-decieuxPredicting2023}, we pursue a different approach, based on random graph models, to predict the molecule size distribution in hydrocarbon pyrolysis. 
	
	Random graph models are a fundamental tool for the study of complex systems such as social networks, infrastructure networks, and biological networks~\cite{ barabasiNetwork2013,newmanStructure2003,newmanNetworks2018}. Recent works by Torres-Knoop \textit{et al.}~\cite{torres-knoopModeling2018} and Schamboeck \textit{et al.}~\cite{schamboeckColoured2020} applied random graph theory to polymerization systems. Of particular interest to us is the \textit{configuration model}~\cite{molloyCritical1995,molloySize1998,newmanRandom2001} in which nodes have stubs sampled from a prescribed degree distribution and the stubs are randomly matched to form edges.
	
	The previous work~\cite{dufour-decieuxPredicting2023} demonstrated that the \emph{configuration model}~\cite{bollobasRandom2001,molloyCritical1995,newmanRandom2001} captured the small molecule size distribution of the carbon skeleton in hydrocarbon pyrolysis, where the carbon skeleton is the subgraph of a hydrocarbon network induced by carbon atoms. However, it notably overestimated the size of the largest molecule (Fig~\ref{fig:largest_mol_ex}) in systems with hydrogen/carbon (H/C) ratio $\le 2.5$ where the corresponding random graph had a giant component. 
	
	Our understanding is that the reason for this poor prediction is the property of the configuration model that the neighborhoods of nodes are tree-like~\cite{newmanRandom2001}. This property enables an analytical calculation of the component size distribution and the giant component size using a generating function approach~\cite{newmanRandom2001}. On the other hand, the tree-like property implies that small loops are rare. However, our analysis showed that molecules generated by ReaxFF MD simulations of hydrocarbon pyrolysis contain many small loops (Fig.~\ref{fig:Example_Mol}).
	\begin{figure}[!tb]
		\centering
		\includegraphics[width= \textwidth]{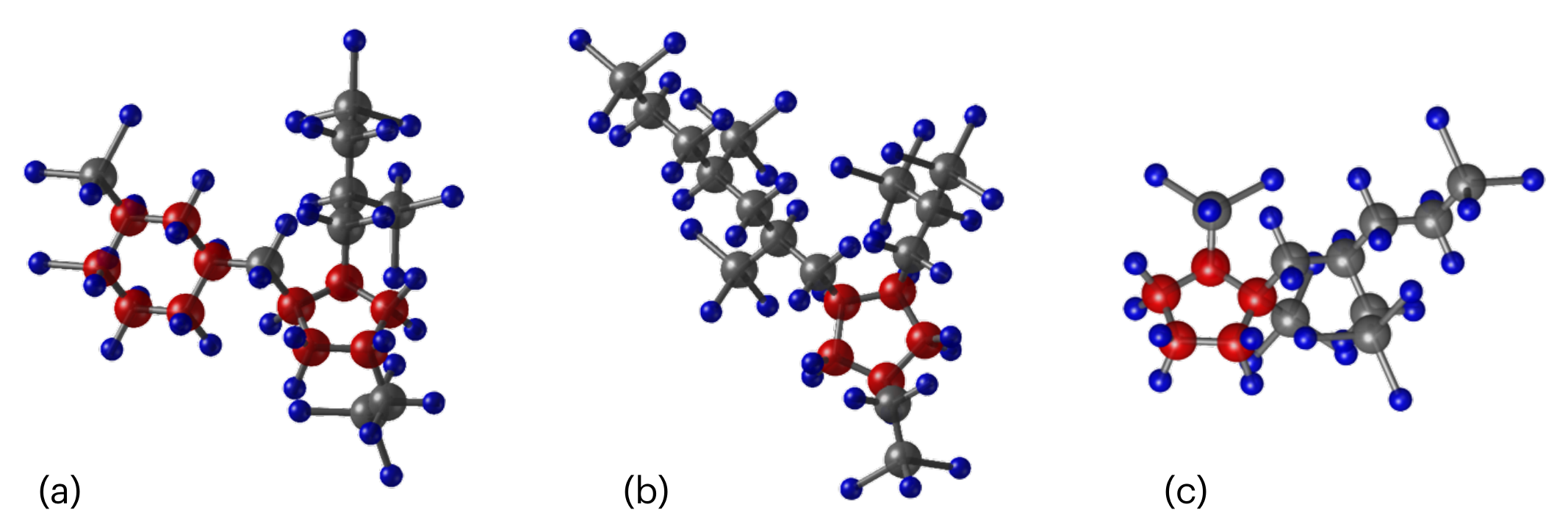}
		\caption{
			Three sample molecules from ReaxFF MD simulations that illustrate the structures we consider in this work. Carbon atoms in loops are shown in red and the remaining are shown in grey. Hydrogen atoms are blue. Molecule (a) has two disjoint carbon rings of length 5 and 6 separated by a single atom. Molecules (b) and (c) have a single carbon ring of length 5.
		}
		\label{fig:Example_Mol}
	\end{figure}

	\emph{The goal of this work is to design a random graph model that accurately describes molecule size distribution, including small molecule counts and the size distribution for the largest molecule, observed in MD simulation data.} Specifically, we aim to predict these distributions for any temperature and initial molecular composition within the span of data from~\cite{dufour-decieuxPredicting2023}, where the pressure is fixed at 40.5 GPa, the temperature ranges from 3200K to 5000K, and H/C ratio ranges from 2.25 as in \ch{C8H18} to 4 as in \ch{CH4}. In this work, we propose an all the way through scheme for predicting the molecule size distribution in hydrocarbon pyrolysis. That is, we model hydrocarbon pyrolysis using random graphs, and we obtain the resulting input parameters as functions of the H/C ratio and temperature using a local reaction model.
	
	The key component of this scheme is a random graph model, the \textit{Disjoint Loop Model} with \textit{Assortativity Correction}. This model accurately predicts the size distribution of the largest molecule as well as the small molecule size distribution. This model can be exercised in two ways: via sampling and via a generating function~\cite{flajoletAnalytic2009,wilfGeneratingfunctionology2005} approach. Sampling is effective when the number of carbons $\Nc$ is not too large, as in MD data, accounting for finite graph size. Generating functions are accurate in the limit $\Nc\to\infty$.
	
	The inputs for this model are the degree distribution, the loop rate per carbon atom, and the loop size distribution. We propose to obtain the degree distribution from a generating function with two parameters, the probability for hydrogen to bond to hydrogen rather than carbon, the probability for carbon to bond to 3 atoms rather than 4. We also develop a scheme for sampling the degree sequence. To obtain these parameters as well as the loop rate and loop size distribution we develop a sequence of local reaction models whose equilibrium constants are functions of this data. These constants are learned using Arrhenius fits to MD data. 
	
	Our scheme is depicted as a flowchart in Fig.~\ref{fig:model-flowchart}. A Python implementation is available on GitHub~\url{https://github.com/perrineruth/Disjoint-Loop-Model-Hydrocarbon-Pyrolysis}.
	
	\begin{figure}[!tbp]
		\centering
		\begin{tikzpicture}
			\small
			
			\node[anchor=south, minimum width = 2.75in, minimum height = 15pt, inner xsep = 0.025in, inner ysep = 1pt,fill=black!10,draw] (MD)
			at (0,2.95in + 30pt) {\textbf{MD Data}};
			
			{
				\node[anchor=south, minimum width = 2.75in, minimum height = 15pt, inner xsep = 0.025in, inner ysep = 1pt,fill=black!10,draw] (ReaxH)
				at (0,2.75in + 15pt) {\textbf{Reaction Model}};
				\node[anchor=south,text width = 2.65in, draw, inner xsep=0.05in, minimum height = 1.35in] (ReaxB)
				at (0,1.4in + 15pt) 
				{\raggedright
					1. Assemble local reactions for bond and \\[1pt]\phantom{1. }ring formation (Sec.~\ref{Sec:Parameter_Fit} and Appendix~\ref{app:Params})\\[1pt]
					2. For each reaction:\\[1pt]
					\phantom{2. }a. Estimate equilibrium constant \\[1pt]\phantom{2. a. }$K$ for all initial conditions\\[1pt]
					\phantom{2. }b. Fit $K$ to Arrhenius law $K=Ae^{-C/T}$ \\[1pt]\phantom{2. a. }(Table~\ref{tab:Arrh_Param})
				};
			}
			
			{
				\node[anchor=south, minimum width = 2.75in, minimum height = 15pt, inner xsep = 0.025in, inner ysep = 1pt,fill=black!10,draw] 
				(ParamH) at (0in,1in) {\textbf{Input parameters:}};
				\node[anchor = south, text width = 2.65in, draw, inner xsep = 0.05in,minimum height = 0.6in] 
				(ParamB) at (0in,0.4in) 
				{   \raggedright
					$\bullet$ $\{p_k\}$ = degree distribution\\[1pt]
					$\bullet$ $\lambda$ = loop rate per node\\[1pt]
					$\bullet$ $\{\phi_k\}$ = loop length distribution
				};
			}
			
			{
				\node[anchor=north, minimum width = 1.725in, minimum height = 15pt, inner xsep = 0.025in, inner ysep = 1pt,fill=black!10,draw] 
				(SampH) at (-0.88in,0in) 
				{\textbf{Sampling}};
				\node[anchor = north, text width =1.675in, draw, inner xsep = 0.025in, minimum height=1.1in] (SampB) at (-0.88in,0in-15pt) 
				{
					1. Sample Disjoint Loop
					\phantom{1. }Model, Algorithm~\ref{alg:DisLoop}\vspace{2pt}\\
					2. Add Assortativity\\
					\phantom{2. }Correction, Algorithm~\ref{alg:rewire}
				};
			}
			
			{
				\node[anchor=north, minimum width = 1.725in, minimum height = 15pt, inner xsep = 0.025in, inner ysep = 2pt,fill=black!10,draw] 
				(GenFH) at (0.88in,0in) 
				{\textbf{Generating Functions}};
				\node[anchor = north, text width =1.675in, draw, minimum height = 1.1in, inner xsep = 0.025in] 
				(GenFB) at (0.88in,0in-15pt) 
				{
					1. Solve for $H_j^E(x)$ \\[0pt]
					\phantom{1. }$j=0,1,2,3$, Eq.~\eqref{eq:BranchAssortRec},\\[0pt]
					\phantom{1. }using simple iteration\\[0pt]
					2. Compute $H(x)$, Eq.~\eqref{eq:HSolveAssort}\\[0pt]
					3. Extract coefficients $\{P_s\}$\\[0pt]
					\phantom{3. }of $H(x)$, Eq.~\eqref{eq:FFT}
				};
			}
			
			\node[anchor=north, minimum width = 2.25in, minimum height = 15pt, inner xsep = 0.025in, inner ysep = 2pt,fill=black!10,draw] 
			(Comp) at (0in,0in-15pt-1.45in) 
			{\textbf{Component Sizes}};
			
			\draw[thick,->] (MD) -- (ReaxH);
			\draw[thick] (MD) -- (-1.8in,2.95in + 37.5pt);
			\draw[thick] (-1.8in,2.95in+37.5pt) -- node[above,midway,rotate=90] () {time-averaging} (-1.8in,0.7in);
			\draw[thick,->] (-1.8in,0.7in) -- (ParamB);
			
			\draw[thick,->] (ReaxB) --
			node[anchor=east,align=right] () {Eqs.~\eqref{eq:phh_param_start}--\eqref{eq:lambdaphi_largeL_end},} 
			node[anchor=west,align=left] () {Sampling: Algorithm~\ref{alg:DegDist},\\
				Gen. Functions: Eq.~\eqref{eq:DegDist_fromParam}}
			(ParamH);
			
			\draw[thick,->] (ParamB) -- node[left] () {Small $N$} (SampH);
			\draw[thick,->] (ParamB) -- node[right] () {Large $N$} (GenFH);
			
			\draw[thick,->] (SampB) -- node[left] () {Depth-first search} (Comp);
			\draw[thick,->] (GenFB) -- node[anchor=west,align=right] () {\ \ Giant comp., Eq.~\eqref{eq:SfromH} \\
				Small comps., Eq.~\eqref{eq:PtoPi}} (Comp);
			
		\end{tikzpicture}
		\caption{
			Flowchart of the computational steps used to implement the Disjoint Loop Model with Assortativity Correction. The input parameters are estimated using a local equilibrium reaction model as a function of temperature and H/C ratio using Eqs.~\eqref{eq:phh_param_start}--\eqref{eq:lambdaphi_largeL_end}. The Disjoint Loop Model with Assortativity Correction can be studied using random graph sampling or generating functions. Random graph sampling accounts for finite graph size, and is useful for reproducing MD data. The degree sequence $\{k_i\}$ for random graph sampling is sampled using Algorithm~\ref{alg:DegDist}. The generating function approach is accurate in the limit of large graphs, and can be used to make predictions for large chemical systems. The degree distribution $\{p_k\}$ for the generating function approach is obtained via Eq.~\eqref{eq:DegDist_fromParam}.
		}
		\label{fig:model-flowchart}
	\end{figure}
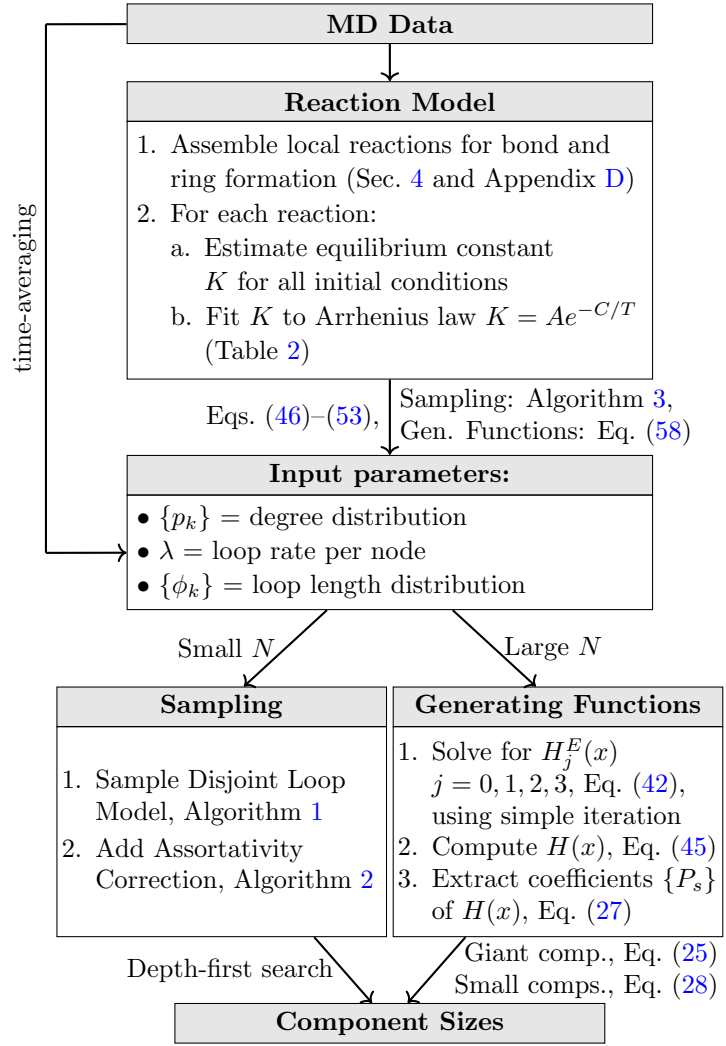
	
	The development of the Disjoint Loop Model with Assortativity Correction relies on several earlier works. The loops in MD data are present but sparse -- see Fig.~\ref{fig:Example_Mol}.  The sparsity and typically small loop sizes suggest a natural setting for Karrer's and Newman's model of random graphs with arbitrary subgraphs~\cite{karrerRandom2010}. The model in~\cite{karrerRandom2010} is tree-like at the level of subgraphs. We adopt this framework to model hydrocarbon pyrolysis, where the featured subgraphs are loops. This Disjoint Loop Model gives notably better predictions of the size of the largest molecule than the configuration model~\cite{dufour-decieuxPredicting2023}. We further improve its predictive power by the Assortativity Correction.
	The problem is that the Disjoint Loop Model has assortative mixing by degree, or edges are more likely to connect two nodes of similar degrees. This observation agrees with other studies of random graphs with subgraphs, e.g.~\cite{millerPercolation2009}. Assortative mixing is known to affect the size of the giant connected component as shown by Newman~\cite{newmanAssortative2002}. Therefore, we adapt the ideas from~\cite{newmanAssortative2002} to remove assortative mixing by degree from the Disjoint Loop Model.
	
	To our knowledge, the use of random graphs to model reactive chemical systems at equilibrium is quite new. We highlight some papers that also use this approach.
	Dobrijevic and Dutour~\cite{dobrijevicRandom2006} developed an inhomogeneous Erd\"os-Renyi-type random graph model to study hydrocarbon pyrolysis. It predicted a power law molecule size distribution of up to 6 carbons which matched data from Saturn's and Titan's atmospheres.
	As mentioned above, Refs.~\cite{torres-knoopModeling2018,schamboeckColoured2020} modelled polymer growth with random graphs.
	In this context, the presence of a giant connected component represents a polymer gel, and the Molloy-Reed criterion~\cite{molloyCritical1995} is used to compute the transition point for the formation of a gel.
	It is worth noting that the configuration model~\cite{torres-knoopModeling2018} underestimates the gel point and overestimates the size of the giant component.
	Importantly, this discrepancy is also explained by the presence of loops. 
	
	The rest of this paper is organized as follows. We provide the relevant background for random graph theory in Sec.~\ref{sec:Background_RGT}. In Sec.~\ref{sec:CycModel}, we introduce our Disjoint Loop Model with Assortativity Correction. In Sec.~\ref{Sec:Parameter_Fit}, we demonstrate how the parameters for our model are learned from MD data. 
	In Sec.~\ref{Sec:Case_Study}, we show that the proposed model predicts the size of the largest molecule in ReaxFF MD data. The results are summarized and perspectives of this approach are discussed in Sec.~\ref{sec:conclusion}.

	\section{Background: Extensions of the Configuration Model}
	\label{sec:Background_RGT}
	The configuration model~\cite{molloyCritical1995,molloySize1998,newmanRandom2001} is a classical random graphs model that outputs graphs with a prescribed degree sequence $\{k_i\}_{i=1}^N$, where $k_i$ is the degree of node $i$. The configuration model is sampled as follows:
	\begin{myenumerate}
		\item assign each node $i$, $k_i$ stubs representing half-edges,
		\item combine stubs pairwise uniformly at random,
		\item remove self-loops and multiple edges.
	\end{myenumerate}
	This procedure is depicted in Fig.~\ref{fig:ConfModelAlg}. Steps 1 and 2 output a multigraph (a graph with self-loops and multiple edges) where each pair of stubs is equally likely to be joined by an edge. Self-loops and multiple edges are removed in Step 3 to obtain a simple graph. In this sense, self-loops and multiple edges are errors, but they are rare. Specifically, the probability that a node participates in a self-loop or multiple edge is $O(N^{-1})$.
	
	\begin{figure}[!tbp]
		\centering
		\includegraphics{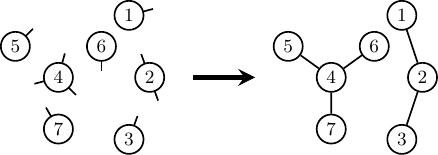}
		\caption{Example of sampling the configuration model when the degree sequence is $k_1=1$, $k_2=2$, $k_3=1$, $k_4=3$, $k_5=1$, $k_6=1$, and $k_7=1$. Each node $i$ is given $k_i$ stubs (left) then stubs are paired into edges uniformly at random (right). Here there are no self-loops or multiple edges that need to be removed.}
		\label{fig:ConfModelAlg}
	\end{figure}
	
	It is natural to study the configuration model in the limit of large graphs, i.e. number of nodes $N\to\infty$. In this limit, the behavior of the configuration model is a function of the degree distribution $\{p_k\}$, where $p_k$ is the fraction of nodes of degree $k$. 
	
	As the famous Erd\"os-R\'enyi model~\cite{solomonoffConnectivity1951,erdosEvolution1960}, the configuration model has a phase transition from the state where all components are small to the state where a giant component of size $O(N)$ is present~\cite{molloyCritical1995}. 
	
	Newman, Strogatz, and Watts~\cite{newmanRandom2001} proposed a generating function approach to compute the component size distribution in the limit $N\to\infty$. The generating function for the sequence $\{a_k\}$ is
	\begin{equation}
		G(x) = \sum_{k=0}^\infty a_k x^k.
	\end{equation}
	The generating function approach is implemented using the degree distribution $\{p_k\}$ and the excess degree distribution $\{q_k\}$ obtained from $\{p_k\}$ as follows. Let us pick a random edge, follow it in a random direction, and arrive at a node $i$. The number of other edges attached to node $i$ obey the excess degree distribution:
	\begin{equation}
		q_k = \frac{(k+1)p_{k+1}}{\sum_{k'}k'p_{k'}}.\label{eq:Back_ExDeg}
	\end{equation} 
	Indeed, the probability to reach a node with excess degree $k$ is proportional to the product of its degree $k+1$ and the probability for a node to have degree $k+1$ which is $p_{k+1}$.
	
	We refer the reader for more details on the configuration model and the generating function approach to the seminal paper by Newman, Strogatz, and Watts~\cite{newmanRandom2001}.

	In the next two subsections, we discuss two generalizations of the configuration model relevant to our work.
	
	\subsection{Motifs}
	\label{sub:motif}
	It has been observed as early as 1948 by Rapoport that small loops are rare in classical random graphs~\cite{rapoportCycle1948}. 
	Watts and Strogatz found that more triangles appear in real networks than in random graphs~\cite{wattsCollective1998}. For social networks, triangles are a consequence of edge transitivity, i.e. if A and B share a common friend C then A and B are more likely to be friends. Many models have been developed to account for the lack of triangles in random graphs, including~\cite{millerPercolation2009,newmanRandom2009,wattsCollective1998}. Subgraphs that appear more frequently in real networks than random graphs are often called network {motifs}, a term coined by Milo \textit{et al.}~\cite{miloNetwork2002}.
	
	The expected number of loops of length $\ell$ in the configuration model is given by
	\begin{equation}
		\frac{1}{2\ell}\left(\frac{\langle k^2\rangle -\langle k\rangle}{\langle k\rangle}\right)^\ell + O(N^{-1}),
		\label{eq:loop_count_Conf}
	\end{equation}
	where $\langle k\rangle$ and $\langle k^2\rangle$ are the expected degree and square degree, respectively. By Eq.~\eqref{eq:loop_count_Conf}, the expected number of loops of length $\ell$ is $O(1)$ and the probability a random node is in a loop of length $\ell$ is $O(N^{-1})$. Thus, if the second moment of the degree distribution is finite then small loops are rare in the configuration model. For completeness, we outline the computations that lead to Eq.~\eqref{eq:loop_count_Conf} in Appendix~\ref{app:Conf_Loop}. 
	
	The lack of loops in the configuration model is what makes it analytically tractable. Specifically, the configuration model is locally tree-like in the limit $N\to\infty$. This allows us to derive a self-consistency relationship for computing the small component size distribution -- see Fig. 3 in~\cite{newmanRandom2001}. 
	Thus, in the limit of large graphs, the size of the giant component and size distribution of small components can be obtained through a branching process~\cite{newmanRandom2001}. 
	
	Karrer and Newman~\cite{karrerRandom2010} developed a general framework for constructing random graphs that exhibit a finite collection of subgraphs with non-zero density, which we review here. To distinguish these subgraphs from arbitrary subgraphs, we will call these subgraphs ``motifs'' motivated by Milo \textit{et al.}~\cite{miloNetwork2002}. 
	The analysis of random graphs in~\cite{karrerRandom2010} relies on two sets of data: the motifs that they exhibit and the rate at which nodes participate in each motif. When the only motif is an edge, this model reduces to the configuration model. A more interesting example highlighted in Ref.~\cite{karrerRandom2010} 
	has edges and triangles as motifs. The probabilities of these motifs are described by the joint degree distribution $\{p_{s,t}\}$, where $p_{s,t}$ is the probability a node participates in $s$ edges (stubs) and $t$ triangles. {We emphasize, in this context edge motifs are the edges not in triangles.} The joint degree distribution is generated by a bivariate function
	\begin{equation}
		G(x,z) = \sum_{s,t} p_{s,t} x^s z^t.
	\end{equation}
	The dimensionality of the joint degree increases with the number of motifs that are introduced.
	
	Unlike the configuration model, the model in Ref.~\cite{karrerRandom2010} is not locally tree-like. This model remains tractable, however as it is locally tree-like at the level of the motifs. One corollary of this is that two motifs may not share two or more nodes. Due to the tree-like property at the level of motifs, the component size distribution in Ref.~\cite{karrerRandom2010} is computed using an upgraded version of the generating function approach in Ref.~\cite{newmanRandom2001}. In Section~\ref{sec:CycModel}, we restrict the model in Ref.~\cite{karrerRandom2010} to the case where the motifs are edges and simple loops.

	\subsection{Assortativity}
	\label{sub:Assort}
	A network is assortative if edges are more likely to connect nodes that are similar by a specified characteristic. For instance, our ReaxFF MD data exhibits disassortative mixing by atom type, i.e. carbon-hydrogen bonds are more favorable than carbon-carbon and hydrogen-hydrogen bonds. We construct an estimate for the degree distribution of the carbon skeleton relying on this bias.
	
	The other assortative mixing of interest is the one by degree.
	{One way to measure assortative mixing by degree is to use a Pearson correlation coefficient originally suggested by Newman~\cite{newmanAssortative2002}:
		\begin{equation}
			r = \frac{1}{\sigma_q^2}\sum_{j,k}jk(e^{\rm excess}_{jk} - q_j q_k).\label{eq:assort_coeff}
		\end{equation}
		Here $e^{\rm excess}_{jk}$ is the fraction of edges connecting nodes of excess degree $j$ and $k$, and $\{q_k\}$ is the excess degree distribution defined in Eq.~\eqref{eq:Back_ExDeg} with variance $\sigma_q^2$.} Random graphs that exhibit assortative mixing by degree ($r>0$) have a lower threshold for a giant connected component~\cite{newmanAssortative2002}. An intuition for this result is that networks with assortative mixing by degree will have a giant connected component composed of high-degree nodes that are more likely to be joined by an edge.
	
	In hydrocarbon networks, atom type can be determined by their degrees. Specifically, carbon atoms typically participate in 4 bonds and hydrogen atoms typically participate in a single bond. Thus, the degree assortativity coefficient $r$ on the global hydrocarbon network is an indirect measure for assortative mixing by atom type. Indeed, we find the global hydrocarbon network is disassortative by degree ($-0.8\le r\le -0.5$) -- listed in Table~\ref{tab:summary+giant}. This value of $r$ is much lower than in many social, technological, and biological networks~\cite[Table~I]{newmanAssortative2002}.
	
	For the configuration model, the degrees of nodes connected by an edge are independent, i.e. $e^{\rm excess}_{jk}=q_j q_k$. Thus, the configuration model lacks assortative mixing by degree ($r=0$). This makes the configuration model a useful baseline model for testing the impact of assortativity. 
	However, random graphs with motifs often exhibit assortative mixing by degree, e.g.~\cite{millerPercolation2009}.
	To see this in our setting, let edges and loops be the subgraphs of interest. To be in a loop, a node must be degree 2 or higher, and edge motifs are introduced in the same manner as in the configuration model. 
	Thus, edges in loops occur between pairs of high-degree nodes adding assortative mixing by degree ($r>0$).
	In our MD data, however, we find that the carbon skeleton either lacks assortative mixing by degree or is slightly disassortative -- see Table~\ref{tab:summary+giant} for the value $r$ for the carbon skeleton. In Sec.~\ref{subsec:rewire}, we show that the Disjoint Loop Model is assortative and correct for this using methods from Ref.~\cite{newmanAssortative2002}.
	
	In practice, the desired edge excess degree distribution $\{e^{\rm excess}_{jk}\}$ can be achieved by a rewiring algorithm proposed in Ref.~\cite{newmanAssortative2002}. Moreover, in~\cite{newmanAssortative2002}, the generating function approach was adapted for achieving the desired edge excess degree distribution $\{e^{\rm excess}_{jk}\}$ in the limit of large graphs.

	\section{The Proposed Model}
	\label{sec:CycModel}
	
	Now we present our model designed to describe and predict molecule size distributions in hydrocarbon pyrolysis. This model is developed in two stages. First, we setup the Disjoint Loop Model referring to its key feature. Then, we subject it to an Assortativity Correction to remove assortative mixing by degree. The resulting model is called the Disjoint Loop Model with Assortativity Correction.
	
	The Disjoint Loop Model can be studied using two methods. The first method is to sample from the random graph ensemble. The sampling algorithm (Algorithm~\ref{alg:DisLoop}) is introduced in Sec.~\ref{subsec:ModelConstruction}. The component sizes of a sample graph can be obtained via standard methods such as depth-first search (see e.g.~\cite{cormenIntroduction2022}). The second method analyzes the Disjoint Loop Model via generating functions in the limit $N\to\infty$. This approach is given in Sec.~\ref{subsec:Gen_DisLoop}, which is adapted from the work of Karrer and Newman~\cite{karrerRandom2010}. 
	In Sec.~\ref{subsec:rewire}, we introduce the Assortativity Correction, a method for removing assortative mixing by degree from the Disjoint Loop Model. Assortativity Correction modifies both the sampling scheme and the generating function formalism of the Disjoint Loop Model. These methods are adapted from the work~\cite{newmanAssortative2002}.
	
	
	

	The flowchart of the proposed model is depicted in Fig.~\ref{fig:model-flowchart}.
	
	\subsection{Disjoint Loop Model}
	\label{subsec:ModelConstruction}
	
	\subsubsection{Model assumptions}
	We start by listing the major assumptions that underlie the Disjoint Loop Model. These assumptions are restrictions we add to the general framework introduced by Karrer and Newman~\cite{karrerRandom2010}.
	As with any model, one must balance rigor with usability. Mathematically, our model must be simple enough to compute properties such as subgraph counts and component sizes. Practically, the parameters for this model must be attainable from data, e.g. large subgraphs are not necessarily observed in MD data and their counts may be noisy otherwise. To make our model tractable we introduce three assumptions. The first assumption is the subgraphs that interest us.
	\begin{assume}
The list of motifs consists of edges and simple loops of length $3,4,\hdots,L_{\max}$ {where $3 \le L_{\max}<\infty$.}
\label{asp:motifs}
\end{assume}
This choice of motifs has several benefits. From a percolation standpoint, loops contribute redundant edges. In other terms, an edge may be removed from a loop without breaking the loop into multiple components. From a chemistry standpoint, ring length impacts the stability of a molecule. A small ring, e.g. cyclopropane \ch{C3H6}, induces small, unstable bond angles. Larger rings, specifically six-membered rings, are much more stable and contribute to many interesting molecular structures such as diamond and graphene. From a practical standpoint, loops are easy to analyze. All nodes in a loop are equivalent under a rotation automorphism. Thus nodes in loops all play the same ``role'' and we do not need to consider the full role distribution introduced in~\cite{karrerRandom2010}. {The motifs that Assumption~\ref{asp:motifs} neglects are biconnected components that contain multiple loops. In this sense, Assumption~\ref{asp:motifs} is a first-order approximation to the motifs in our graphs.}

With our motifs fixed, we must determine how often each node participates in each motif, i.e. we must specify the fraction of nodes that participate in $k$ edge motifs, $\ell_3$ triangle motifs, $\ell_4$ square motifs, etc. This introduces a joint probability distribution of dimension $L_{\max}-1$. To avoid learning this distribution from MD data we make another restriction.
\begin{assume}
Let $u$ be a random node that participates in a loop motif. Then, the degree of $u$ and the length of the loop motif are independent.
\label{asp:LoopLenSep}
\end{assume}
This assumption decouples the length of a loop motif from the degrees of its participating nodes. This suggests that two probability distributions should be learned from MD data: (1) the probability a node is in $k$ edge motifs and $\ell$ loop motifs, $\tp_{k\ell}$, and (2) the probability a random loop motif is of length $k$, $\phi_k$. Here we use the tilde to distinguish the degree distribution split by edge and loop motifs $\{\tp_{k,\ell}\}$ from the classical degree distribution.   

As mentioned in Sec.~\ref{sub:motif}, motifs in the model introduced by Karrer and Newman rarely overlap by more than a single node. To enable the use of the tree-like hypothesis at the level of motifs in the generating sampling formalism we eliminate the possibility for a node to participate in multiple loops.
\begin{assume}
A node may only participate in a single loop motif, i.e. loop motifs are disjoint.
\label{asp:DisLoop}
\end{assume}
\noindent Assumption~\ref{asp:DisLoop} reduces the notational complexity of the Disjoint Loop Model and reduces the number of parameters that need to be learned. Namely, this assumption constrains the $\ell$ index of $\tp_{k,\ell}$ to binary values. 

While Assumption~\ref{asp:DisLoop} prohibits crystallization of carbons, it is adequate to accurately describe the MD data we have. To amend this, we would need to allow loops to share an edge in our random graph model. To make this change and keep the random graph model tree-like at the level of motifs, we would need to add new motifs to Assumption~\ref{asp:motifs}.
One could also introduce loops that are joined by a node. This does not break the locally tree-like assumption, but we do not consider it because it would introduce new parameters and we find this is rare in MD data. In particular, loops joined by a node introduce a node of degree 4 while only 1--3\% of C atoms participate in 4 carbon-carbon bonds, most of which do not participate in multiple loops.

\subsubsection{Sampling: Disjoint Loop Model}

\begin{algorithm}[t]
\KwIn{$N$ = number of nodes,\\
	\phantom{\textbf{Input: }}$k_i$ = degree of node $i=1,\hdots,N$ drawn from\\ \phantom{\textbf{Input: }}\quad degree distribution,\\
	\phantom{\textbf{Input: }}$L_j$ = length of loop $j=1,\hdots,N_L$ to be\\ \phantom{\textbf{Input: }}\quad included in the graph}
\KwOut{graph {$G=(V,E_E,E_L)$} where $E_E$ is the set of edge motifs and $E_L$ is the set of edges in loop motifs}
${ V}\gets\{1,\hdots,N\}$\;
Assign $k_i$ stubs to each node $i$\;
${S}\gets{}\emptyset$\tcp*{auxiliary set of nodes in loops}

\For{${n}=1$ to $\sum_{j=1}^{N_L} L_j$}{
	Add a random node ${ i}\in { V} \setminus {S}$ to ${S}$ with probability proportional to ${ w_{k_i}}{ = \binom{k_i}{2}}$\;
	Remove two stubs from node { $i$}\;
}
\tcp{Assign loop nodes to loops with shuffling}
$E_L\gets \emptyset$\;
\For{$j=1 \text{ to } {N_L}$}{
	Draw nodes $u_1,u_2,\hdots,u_{L_j}$ at random from {$S$}\;
	Remove $\{u_1,u_2,\hdots,u_{L_j}\}$ from  {$S$}\;
	Add $\{(u_1,u_2),(u_2,u_3),\hdots,(u_{L_j},u_1)\}$ to {$E_L$}\;
}
Attach remaining stubs together pairwise at random to make the set {$E_E$}\;
Remove self-loops and multiple edges (including those shared with $E_L$) from {$E_R$}\;
\caption{Disjoint Loop Model}
\label{alg:DisLoop}
\end{algorithm}

With Assumptions~\ref{asp:motifs},~\ref{asp:LoopLenSep} and~\ref{asp:DisLoop} 
fixed, we introduce a method for sampling random graphs. It is challenging to create a procedure for constructing random graphs using the node level distributions $\{\tp_{k,\ell}\}$ and $\{\phi_k\}$. Instead, it is easier to start by sampling loop motifs and assigning nodes to those loops. Our sampling scheme uses the following input parameters.
\begin{itemize}
\item The degree sequence $\{k_i\}_{i=1}^N$ where $k_i$ is the degree of node $i$ and $N$ is the number of nodes. These are obtained by $N$ independent and identically distributed (i.i.d.) samples from the degree distribution $\{p_k\}$ where $p_k$ is the probability a node has degree $k$.
\item Loop lengths $\{L_j\}_{j=1}^{N_L}$ where $L_j$ is the length of loop $j$ and $N_L$ is the number of loops. For hydrocarbon pyrolysis, loops are local structures that appear with some density. Thus, we assume the expected number of loops scales linearly with the number of nodes. We find that rings are sparse in MD data. In particular, we find that the number of loops $N_L$ closely follows a Poisson distribution, following the intuition that a binomial distribution with large $n$ and small $p$ is well-approximated by a Poisson distribution. Using these observations, let $\lambda$ be the loop rate per node and $\{\phi_k\}$ the loop length distribution. The loop lengths $L_1,\hdots,L_{N_L}$ are sampled as follows:
\begin{enumerate}
	\item let $N_L\sim\text{Poisson} (\lambda N)$ be the number of loops,
	\item let the loop lengths $L_1,\hdots,L_{N_L}$ be $N_L$ i.i.d. samples with PMF $\{\phi_k\}_{k=3}^{L_{\max}}$.
\end{enumerate}
\item Node weights $\{w_k\}$. These weights are used to sample nodes for loops, i.e. the probability a node of degree $k$ is drawn for a loop is proportional to $w_k$. We set $w_k=\binom{k}{2}$ indicating that there are $\binom{k}{2}$ ways of assigning two edges from a node of degree $k$ to serve as a ``corner'' of a loop. By a corner, we mean a node belonging to a loop with two stubs attached to it that participate in the loop.
\end{itemize}
Our sampling scheme is outlined in Algorithm~\ref{alg:DisLoop}. In short, nodes are sampled for loops using weighted sampling without replacement, i.e. nodes are chosen one at a time and at each step, the probability an unsampled node of degree $k$ is selected is proportional to $w_k$. Early draws in weighted sampling without replacement are biased towards large weights, so we shuffle nodes in loops before assigning them to particular loops. Finally, edge motifs are sampled using the configuration model.

\subsubsection{Loop inclusion probabilities}
The parameters we learn from MD data are $\{p_k\}$, $\lambda$, and $\{\phi_k\}$ matching the parameters used to sample random graphs. The generating function formalism for the Disjoint Loop Model uses the distributions $\{\tp_{k,\ell}\}$ and $\{\phi_k\}$. We now relate the distribution $\{\tp_{k,\ell}\}$ to the parameters used for sampling random graphs.

It turns out that this relationship is not easy to derive. We formulate it precisely in Lemma~\ref{lemma:inclusion_prob} below and prove it in Appendix~\ref{app:Inc_Prob}. Prior to this, we would like to give an intuition for relating $\{\tp_{k,\ell}\}$ to $\{p_k\}$, $\lambda$, and $\{\phi_k\}$. 

The nodes for loops can be sampled as follows:
\begin{myenumerate}
\item for each node $i$, add $w_{k_i}=\binom{k_i}{2}$ copies of $i$ to a list, where $k_i$ is the degree of node $i$,
\item sample nodes from this list uniformly at random until $\sum_{j=1}^{N_L} L_j$ unique nodes have been drawn.
\end{myenumerate}
Remark, that each node $i$ can be sampled any number of times from $0$ to $w_{k_i}$. As a result, there will be a fraction of the total node copies that are sampled equal to
\begin{equation*}
\hat{q} = \frac{\sum_{i=1}^N\text{\# sampled copies of node $i$}}{\sum_{i=1}^N w_{k_i}}.
\end{equation*}
A node of degree $k$ is not in a loop motif if none of its $w_k$ copies are drawn. In the limit of large graphs, this occurs with probability $(1-\hat{q})^{w_k}$. Therefore, the probability $f(k)$ a node of degree $k$ is in a loop motif is $1-q^{w_k}$, where $q=1-\hat{q}$, that satisfies the following equation
\begin{equation}
\sum_{k=2}^\infty p_k {(1-q^{w_k})} 
= \lambda \langle \phi_k\rangle, \quad \langle \phi_k\rangle=\sum_{k=3}^{L_{\max}}k\phi_k.  
\label{eq:q_identity}
\end{equation}
In other terms, Eq.~\eqref{eq:q_identity} states the fraction of nodes in loop motifs is equal to the expected number of loop motifs per node times the average loop length $\langle\phi_k\rangle$. There exists $q\in(0,1)$ that solves Eq.~\eqref{eq:q_identity} as long as there are enough nodes with degree $k\ge 2$ to construct the loop motifs, i.e.
\begin{equation}
\sum_{k=2}^\infty p_k > \lambda\langle \phi_k \rangle>0.
\label{eq:Loop_Feasibility}
\end{equation}
Furthermore, this solution $q$ is unique because the left hand side of Eq.~\eqref{eq:q_identity} is decreasing with respect to $q$. 


\begin{Lemma}
\label{lemma:inclusion_prob}
Assume $\{p_k\}$, $\lambda$, and $\{\phi_k\}$ satisfy Eq.~\eqref{eq:Loop_Feasibility} and let $q$ satisfy Eq.~\eqref{eq:q_identity}.
Then, the probability a node participates in $k$ edge motifs and $\ell$ loop motifs is
\begin{equation}
	\tp_{k,\ell} = \begin{cases}
		p_k (1-f(k)), &\ell=0\\
		p_{k+2}f(k+2), &\ell=1
	\end{cases}
	\label{eq:joint_deg_p}
\end{equation}
as $N\to\infty$, where
$f(k)$ is the asymptotic probability that a node of degree $k$ {is in a loop motif} given by
\begin{equation}
	f(k) 
	:={\begin{cases}0,& k = 0,1\\1-q^{w_k},& k\ge 2. \end{cases}}
	\label{eq:WSWRIncProb}
\end{equation}
	\end{Lemma}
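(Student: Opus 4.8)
My plan is to reduce the whole statement to one quantity --- the asymptotic probability $f(k)$ that a node of degree $k$ ends up inside a loop motif --- and then to compute $f(k)$ by analyzing the weighted sampling without replacement in Algorithm~\ref{alg:DisLoop}. For the reduction, note that under Assumption~\ref{asp:DisLoop} a loop node is a single ``corner'', so it spends exactly two of its stubs on that loop and keeps the remaining $k-2$ as edge stubs, whereas a non-loop node keeps all $k$ stubs as edge stubs; and by Assumption~\ref{asp:LoopLenSep} whether a loop node sits in a loop is independent of that loop's length. Hence a degree-$d$ node contributes to $\tp_{d,0}$ with the conditional probability that it avoids every loop and to $\tp_{d-2,1}$ with the conditional probability that it lies in one. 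Summing against $\{p_d\}$ and re-indexing gives $\tp_{k,0}=p_k(1-f(k))$ and $\tp_{k,1}=p_{k+2}f(k+2)$, which is Eq.~\eqref{eq:joint_deg_p}. The values $f(0)=f(1)=0$ are immediate, since $w_0=w_1=0$ and degree $0$ or $1$ nodes receive no copies and can never be selected, so it remains only to prove $f(k)=1-q^{w_k}$ for $k\ge 2$.

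To compute $f(k)$ I would first replace the loop-node selection of Algorithm~\ref{alg:DisLoop} --- ``draw $M:=\sum_{j=1}^{N_L}L_j$ distinct nodes, each new node chosen with probability proportional to $w_{k_i}$'' --- by the equivalent ``copies'' process from the discussion preceding Eq.~\eqref{eq:q_identity}: place $w_{k_i}$ indistinguishable copies of each node $i$ into a list of total length $W:=\sum_{i=1}^{N}w_{k_i}$, draw copies one at a time without replacement, and record the distinct nodes seen, stopping once $M$ of them have been recorded. The equivalence is a ``first copy wins'' argument: conditioned on the set of already-seen nodes, the next distinct node is $j$ with probability $w_{k_j}\big/\sum_{l\ \mathrm{unseen}}w_{k_l}$, because the leftover copies of already-seen nodes do not affect the relative proportions of the unseen nodes' copies. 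A degree-$k$ node ends up in a loop exactly when at least one of its $w_k$ copies is drawn before the process stops.

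Let $T$ be the random number of copy-draws at the stopping time (so a fraction $\hat q:=T/W$ of the copies is drawn), and let $D(t)$ be the number of distinct nodes seen after $t$ draws, so $D(T)=M$. Boundedness of the $L_j$ (by $L_{\max}$) and of the degrees gives, by the law of large numbers, $W/N\to\bar w:=\sum_k p_k\binom k2$ and $M/N\to\lambda\langle\phi_k\rangle$. For a fixed fraction $\theta\in[0,1]$, a given degree-$k$ node is hit within the first $\lfloor\theta W\rfloor$ draws with probability $1-\binom{W-w_k}{\lfloor\theta W\rfloor}\big/\binom{W}{\lfloor\theta W\rfloor}\to 1-(1-\theta)^{w_k}$, hence $\EE[D(\lfloor\theta W\rfloor)]/N\to h(\theta):=\sum_k p_k\bigl(1-(1-\theta)^{w_k}\bigr)$; the hit-indicators of distinct nodes are negatively associated under sampling without replacement, so $\mathrm{Var}(D/N)=O(1/N)$ and $D(\lfloor\theta W\rfloor)/N\to h(\theta)$ in probability, which monotonicity of $D$ in $t$ and of $h$ in $\theta$ promotes to uniform convergence. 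Now $h$ is continuous and strictly increasing on $[0,1]$ with $h(0)=0$ and $h(1)=\sum_{k\ge2}p_k$, and the feasibility condition Eq.~\eqref{eq:Loop_Feasibility} says precisely $0<\lambda\langle\phi_k\rangle<h(1)$, so there is a unique $\theta^\ast\in(0,1)$ with $h(\theta^\ast)=\lambda\langle\phi_k\rangle$; combining $D(T)/N\approx h(T/W)$ with $M/N\to\lambda\langle\phi_k\rangle$ forces $\hat q=T/W\to\theta^\ast$. Setting $q:=1-\theta^\ast$, the identity $h(\theta^\ast)=\lambda\langle\phi_k\rangle$ reads $\sum_{k\ge2}p_k(1-q^{w_k})=\lambda\langle\phi_k\rangle$, i.e.\ Eq.~\eqref{eq:q_identity}, whose root in $(0,1)$ is unique by the monotonicity argument already given in the text, so this $q$ is the one in the statement. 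Finally, working with the aggregate to avoid the $O(1)$ influence of a single node on $T$, the number of degree-$k$ nodes drawn is $\approx N p_k\bigl(1-(1-T/W)^{w_k}\bigr)\to N p_k(1-q^{w_k})$, so $f(k)=1-q^{w_k}$, which together with the reduction above completes the proof.

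The step I expect to be the main obstacle is the concentration argument: one must obtain \emph{uniform} (not merely pointwise) convergence of $D(\cdot)/N$ to $h$, use the strict monotonicity of $h$ near $\theta^\ast$ to conclude $T/W\to\theta^\ast$, and then justify interchanging the $N\to\infty$ limit with conditioning on the random stopping fraction $T/W$ --- equivalently, that deleting one node does not perturb the macroscopic stopping fraction --- while discarding the vanishing-probability event that $M$ exceeds the number of degree-$\ge2$ nodes. Given the boundedness of the $L_j$ and of the degrees in our setting, each of these estimates is routine, but they are the technical heart of the argument; everything else is bookkeeping about stubs.
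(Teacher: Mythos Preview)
Your proof is correct and your reduction to computing $f(k)$ is exactly what the paper does. The route you take to $f(k)$, however, is genuinely different from the paper's.

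The paper explicitly flags the ``copies'' approach (the one sketched before the lemma and which you carry out) as challenging, and instead switches to the Efraimidis--Spyrou order-statistics representation of weighted sampling without replacement: give node $i$ the key $y_i=U_i^{1/w_{k_i}}$ with $U_i\sim\mathrm{Unif}[0,1]$, select the $\mathcal N$ largest keys, and show that the acceptance threshold $y^\ast$ converges in probability to $q$. The payoff is that the keys $y_i$ are i.i.d., so the fraction of nodes with $y_i>a$ concentrates by the ordinary law of large numbers; a Chebyshev bound on $\mathcal N/N$ then closes a sandwich $q-\delta<y^\ast<q+\delta$. No negative-association machinery, no random stopping fraction, and no second-moment condition on $\{p_k\}$ is needed (the proof only uses finiteness of $\mathrm{Var}\,L$, guaranteed by $L\le L_{\max}$). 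Your approach trades the key trick for a more hands-on urn argument: you need negative association of the hit-indicators to control $\mathrm{Var}(D/N)$, a Glivenko--Cantelli step to pass from pointwise to uniform convergence of $D(\cdot)/N$, and then an inversion through the random stopping time $T$. This is all correct as you outline it, and it has the virtue of not requiring one to know the Efraimidis--Spyrou equivalence in advance; but it leans on bounded degrees (for $W/N\to\bar w$), which the paper's proof does not. In the application degrees are bounded by~$4$, so this costs nothing here, but it is worth noting that the key-based proof establishes the lemma in the generality stated.
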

	
	

	\subsubsection{Unintended loops}
	\label{ssub:err_loop}
	We end this subsection by justifying that small, unintended loops are rare in the Disjoint Loop Model. In particular, the probability a node participates in an unintended loop of length $\ell$ is $O(1/N)$. Alternatively, the expected number of extra loops of length $\ell$ is $O(1)$, and the expected fraction of nodes that are in unintended loops of length $\ell$ is $O(1)\ell/N = O(1/N)$.
	
	All extra loops must have at least one edge motif as loop motifs are disjoint.
	We find that it is useful to collapse loop motifs into a single node to justify that extra loops are rare. Under this perspective, extra loops are generalized nodes joined into a loop (including self-loops and multiple edges) by edge motifs. Edge motifs are constructed by connecting pairs of stubs at random, as in the configuration model. Using the results for the configuration model reviewed in Sec.~\ref{sub:motif}, we find that extra loops are rare if the degree distribution has finite second moment. Here the degree of a generalized node is the number of edge motifs it participates in. To verify this notion of degree distribution has finite second moments, it is sufficient to assure that $\{p_k\}$ and $\{\phi_k\}$ have finite second moments. We verify this in Appendix~\ref{app:Rare_Loops}. We also show how to compute the expected number of extra triangles in the Disjoint Loop Model in the limit $N\to\infty$ in Appendix~\ref{app:Triangles_DLM}.

\subsection{Generating function approach}
\label{subsec:Gen_DisLoop}



\subsubsection{A tree-like representation}
Here we compute the size distribution of small components and the size of the giant component for the Disjoint Loop Model in the limit of large graphs, $N\to\infty$. This is adapted from the generating function formalism in~\cite{karrerRandom2010}. The key observation introduced in~\cite{karrerRandom2010} is that nodes in a motif can be connected through a new node. For the Disjoint Loop Model this procedure is as follows. For each loop motif:
\begin{myenumerate}
\item remove all edges from the loop motif,
\item add a ``loop node'' to the graph,
\item add an edge between the loop node and every node in the loop motif.
\end{myenumerate}
This procedure is depicted in Fig.~\ref{fig:treelike_rep}. Ignoring loop nodes, this procedure results in a graph that has the same component sizes as the original graph. Furthermore, this representation of a graph is locally tree-like because loop motifs are broken and any additional loops are rare (as discussed in Sec.~\ref{ssub:err_loop}). Hence, we call this the ``tree-like representation'' of the Disjoint Loop Model. There are two types of nodes in the tree-like representation, which we call ``regular nodes'' and ``loop nodes''. Edges that connect pairs of regular nodes are called ``regular edges'' and edges that connect regular nodes to loop nodes are called ``loop edges''.

\begin{figure}[!tbp]
\centering
\includegraphics{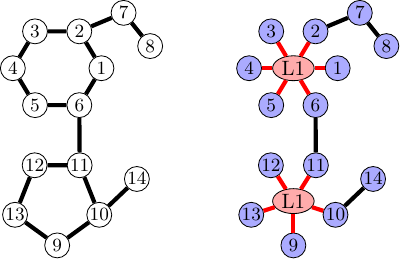}    
\caption{A sample graph (left) and its tree-like representation (right). This graph has a cycle of length 5 and 6 represented by the loop nodes labeled L1 and L2, respectively. In the tree-like representation regular edges (black) connect pairs of regular nodes (blue circles) and loop edges (red) connect regular nodes to loop nodes (red ellipses).}
\label{fig:treelike_rep}
\end{figure}

We remark that the tree-like representation differs from the ``bipartite graph representation'' given in~\cite{karrerRandom2010}. Recall, a bipartite graph is a graph with two types of nodes (here regular and loop nodes) such that edges only connect nodes of the opposite type. The tree-like representation is not bipartite because pairs of regular nodes may be connected by a regular edge. To make the graph bipartite, steps 1-3 would need to be repeated for edge motifs. We omit this step in this manuscript.

We will use the tree-like representation of a graph to compute the size distribution of small components as well as the size of the giant component. Specifically, the size of a connected component is the number of regular nodes it has. In the limit $N\to\infty$, traversing the tree-like representation is equivalent to a branching process. This branching process is depicted in Fig.~\ref{fig:Branching_Disjoint}. As outlined in the work~\cite{karrerRandom2010}, this branching process can be analyzed using multivariate generating functions, generalizing the work~\cite{newmanRandom2001}. For more information about generating functions see~\cite{wilfGeneratingfunctionology2005,flajoletAnalytic2009}. 

Ultimately, we want to compute the generating function $H(x)$ given by
\begin{equation}
H(x) = \sum\nolimits_s P_s x^s.
\end{equation}
where $P_s$ is the probability a random regular  node belongs to a connected component with $s$ regular nodes. Loop nodes are fictitious as they do not correspond to nodes in our original graph. In terms of our original graph, $P_s$ is the probability a random node belongs to a component to a connected component of size $s$. The small component size distribution and the size of the giant connected component may be computed from $\{P_s\}$ using the Newman-Strogatz-Watts generating function formalism~\cite{newmanRandom2001}. The generating functions used to compute $H(x)$ are summarized in Table~\ref{tab:Gen_Funcs}.

\begin{table*}[]
\centering
\def\w{9cm}
\begin{tabular}{|m{0.175\linewidth}<{}|m{0.175\linewidth}<{}|p{\w}|}
		\hline
		Generating Function\centering & Distribution\centering &  Description \centering\arraybackslash \\
		\hline\hline
		\multirow{1}{\linewidth}{\centering$G(x)$} & 
		\multirow{1}{\linewidth}{\centering$\{p_k\}$} & Classical Degree Distribution.  \\
		\hline
		\multirow{2}{\linewidth}{\centering$\tG(x,z)$} & 
		\multirow{2}{\linewidth}{\centering$\{\tp_{k,\ell}\}$} 
		& \multirow{2}{\w}{Degree distribution of a regular node in the tree-like representation split by regular and loop edges.} \\ & & \\
		\hline
		\multirow{2}{\linewidth}{\centering$\tG^E(x,z)$} & 
		\multirow{2}{\linewidth}{\centering$\{\tq_{k,\ell}^E\}$} & \multirow{2}{\w}{Excess degree distribution of a regular node reached by a regular edge.} \\ & & \\
		\hline
		\multirow{2}{\linewidth}{\centering$\tG^L(x)$\footnotemark[1]} & 
		\multirow{2}{\linewidth}{\centering$\{\tq_{k}^L\}$} & 
		\multirow{2}{\w}{Excess degree distribution of a regular node reached by a loop edge.} \\ & & \\
		\hline 
		\multirow{2}{\linewidth}{\centering$\Phi_0(x)$} & 
		\multirow{2}{\linewidth}{\centering$\{\phi_k\}$} & 
		\multirow{2}{\w}{Degree distribution of a random loop node or the loop length distribution.}\\ & &\\
		\hline 
		\multirow{2}{\linewidth}{\centering$\Phi_1(x)$} & 
		\multirow{2}{\linewidth}{\centering---} & 
		\multirow{2}{\w}{Excess degree distribution of a loop node reached by a random loop edge.}\\ & & \\
		\hline
		\multirow{3}{\linewidth}{\centering$H(x)$} & 
		\multirow{3}{\linewidth}{\centering$\{P_s\}$} & 
		\multirow{3}{\w}{Distribution for the number of regular nodes in a connected component reached by a random regular node.} \\ & & \\ & & \\
		\hline
		\multirow{3}{\linewidth}{\centering$H^E(x)$} & 
		\multirow{3}{\linewidth}{\centering---} & 
		\multirow{3}{\w}{Distribution for the number of regular nodes in a branch rooted by a regular node reached by a random regular edge.} \\ & & \\ & & \\
		\hline
		\multirow{3}{\linewidth}{\centering$H^E_j(x)$\footnotemark[2]} & \multirow{3}{\linewidth}{\centering---} & 
		\multirow{3}{\w}{Distribution for the number of regular nodes in a branch rooted by a regular node reached by a regular node of excess degree $j$.}\\ & &\\ & &  \\
		\hline
		\multirow{3}{\linewidth}{\centering$H^L(x)$} & 
		\multirow{3}{\linewidth}{\centering---} & 
		\multirow{3}{\w}{Distribution for the number of regular nodes in a branch rooted by a regular node reached by a random loop edge.} \\ & &\\ & &  \\
		\hline
\end{tabular}
\caption{List of generating functions used to analyze the tree-like representation of the Disjoint Loop Model and their associated probability distributions, `---' if undefined. A verbal description of each probability distribution is given as well. The letters $G$ and $\Phi$ denote degree distributions of regular and loop nodes, respectively. Tildes (e.g. $\tG(x,z)$) are used to denote a bivariate degree distribution that is split by regular and loop nodes. The letter $H$ denotes a generating function for the size distribution of a component or branch. Superscripts are used to denote edge type when a node is reached by a random edge.}
\footnotetext[1]{$\tG^L(x)=\tG^L(x,z)$ is univariate because a node reached by a loop edge can not participate in a second loop edge by Assumption~\ref{asp:DisLoop}.}
\footnotetext[2]{The generating functions $H_j^E(x)$ for $j=0,1,2,3$ replace $H^E(x)$ with assortativity correction. Implemented in Sec.~\ref{subsec:rewire}.}
\label{tab:Gen_Funcs}
\end{table*}

\subsubsection{Bivariate degree distribution for regular nodes}
We start by expressing the degree and excess degree distributions through generating functions. Let the degree distribution of our original graph, $\{p_k\}$, be generated by
\begin{equation}
G(x) = \sum_k p_k x^k.
\end{equation}
In the tree-like representation, it is useful to split neighbors by node type. The degree distribution of the tree-like representation is generated by a bivariate function
\begin{equation}
	\tG(x,z) = \sum_{k,\ell} \tp_{k,\ell}x^kz^\ell
	= \sum_k p_k\Big((1-f(k)) x^k + f(k) x^{k-2} z\Big),
\label{eq:DegDistGFGen}
\end{equation}
where $\tp_{k,\ell}$ is the probability given by Eq.~\eqref{eq:joint_deg_p} that a randomly drawn regular node has $k$ regular neighbors and $\ell$ loop neighbors. The tilde 
{emphasizes that } the degree distribution 
is split by regular and loop nodes for the tree-like representation. Note {the term $f(k)x^{k-2}z$ in Eq.~\eqref{eq:DegDistGFGen}. When} converting to the tree-like representation, one loop edge replaces two stubs of a vertex in a loop motif. Algebraically, $z$ replaces $x^2$ corresponding to the identity
\begin{equation}
G(x)=\tG(x,x^2).
\end{equation}
This can be verified by direct substitution into Eq.~\eqref{eq:DegDistGFGen}.

\subsubsection{Excess degree distributions for regular nodes}
Next, we generalize the notion of excess degree or the number of additional neighbors a node reached by an edge has. 
We use superscripts to denote the edge type that the node is reached by. Namely, let $\tq_{k,\ell}^E$ ($\tq_{k,\ell}^L$) be the probability that a node reached from a random regular (loop) edge has $k$ additional regular neighbors and $\ell$ additional loop neighbors. A node with $k$ regular neighbors contributes to $k$ stubs for regular edges. Thus, $\tq_{k,\ell}^E$ is proportional to $(k+1)\tp_{k+1,\ell}$, and the excess degree distribution of a regular node reached by a regular edge is:
\begin{equation}
\tq_{k,\ell}^E = \frac{(k+1)  \tp_{k+1,\ell}}{\sum_{k',\ell'}(k'+1)  \tp_{k'+1,\ell'}}.
\end{equation}
The {function generating the excess degree distribution of nodes reached by} regular edges is given by
\begin{equation}
	\tG^E(x,z) = \sum_{k,\ell} \tq_{k,\ell}^E x^k z^\ell
	=\frac{\sum_{k,\ell}(k+1) \tp_{k+1,\ell}x^kz^\ell}{\sum_{k',\ell'}(k'+1) \tp_{k'+1,\ell'}} = \frac{\tG_x(x,z)}{\tG_x(1,1)}.
\label{eq:GE_def}
\end{equation}
Similar arguments may be used to show that the excess degree distribution by loop edges, $\{\tq_{k,\ell}^L\}$, is generated by
\begin{align}
\tG^L(x,z) &= \frac{\sum_{k} p_{k}f(k) x^{k-2}}{\sum_{k'} p_{k'}f(k')}
= \frac{\tG_z(x,z)}{\tG_z(1,1)}.\label{eq:GL_def}
\end{align}
{Here and further, the subscripts $x$ and $z$ refer to the partial derivatives of the corresponding functions of $x$ and $z$.}
Note that $\tG^L(x,z)$ is constant with respect to $z$ because $\tG(x,z)$ is affine in $z$. Indeed, by Assumption~\ref{asp:DisLoop}, a node may only participate in a single loop, so a regular node reached by a loop edge cannot participate in an excess loop edge. Thus, we define the following shorthand $\tG^L(x)\equiv\tG^L(x,z)$ which generates $\tq_{k}^L\equiv\tq_{k,0}^L$.

\subsubsection{Degree distribution for loop nodes in the tree-like representation}
{Next, we} introduce a generating function for the degree distribution of loop nodes {in the tree-like representation:} 
\begin{equation}
\Phi_0(x) = \sum_{k=3}^{L_{\max}} \phi_k x^k,
\end{equation}
where $\phi_k$ is the probability a loop motif is of length $k$. $\Phi_0$ is univariate because all neighbors of a loop node are regular nodes. The excess degree of a loop node is generated by
\begin{align}
\Phi_1(x) &= \frac{\sum_{k=2}^{L_{\max}-1} k\phi_k x^k}{\sum_{k'=2}^{L_{\max}-1}k'\phi_{k'}}
= \frac{\Phi_0'(x)}{\Phi_0'(1)}.
\end{align}

We remark that the consistency equation, Eq.~\eqref{eq:q_identity}, may be written in terms of generating functions. The expected loop length is $\Phi_0'(1)$. The fraction of nodes that participate in a loop motif is given by
\begin{equation}
\tG_z(1,1) = \sum_{k} p_{k+2}f(k+2).
\end{equation}
Thus, Eq.~\eqref{eq:q_identity} is equivalent to
\begin{equation}
\tG_z(1,1) = \lambda \Phi_0'(1)
\label{eq:q_identity_GF}
\end{equation}
where $\tG_z(1,1)$ implicitly depends on $f$.

\begin{figure}[!tbp]
\centering
\includegraphics[]{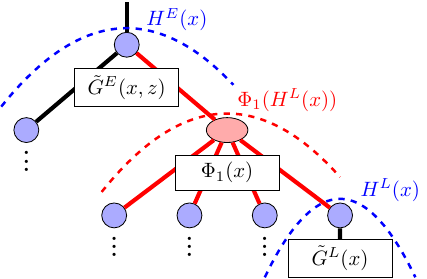}
\caption{Depiction of the branching process used to compute component sizes for the Disjoint Loop Model. The number of sub-branches a branch has, split by node type, follow the excess degree distributions generated by the functions $\tG^E(x,z)$, $\tG^L(x)$, and $\Phi_1(x)$. The nodes rooting sub-branches are also reached by a random edge. The size of sub-branches, in terms of regular nodes, follow the same distributions (generated by $H^E(x)$ and $H^L(x)$) as their parent branches (memoryless). Thus, the generating functions $H^E(x)$ and $H^L(x)$ are computed recursively.
}
\label{fig:Branching_Disjoint}
\end{figure}

\subsubsection{Branch size distributions}
The generating functions for the generalized degree distributions of the tree-like representation can be used to compute the size distribution of branches. This procedure is depicted in Fig.~\ref{fig:Branching_Disjoint}. Let $H^E(x)$ ($H^L(x)$) generate the number of regular nodes in a branch whose root is a regular node reached by a random regular (loop) edge. 
{The branch size distribution can be computed using the multiplication property of generating functions, i.e. if $X$ and $Y$ are independent random variables generated by $\mathcal{F}(x)$ and $\mathcal{G}(x)$ respectively, then $X+Y$ is generated by $\mathcal{F}(x)\mathcal{G}(x)$.}
The number of regular nodes in $k$ independent branches rooted by a regular node reached by a random regular edge is generated by $[H^E(x)]^k$. By conditioning on the excess degree of the root node, $H^L(x)$ is a function of $H^E(x)$
\begin{equation}
H^L(x) = x\sum_{k} \tq_{k}^L [H^E(x)]^k = x\tG^L(H^E(x))
\end{equation}
where the leading $x$ corresponds to the root node of the branch.
By the same conditioning argument, the number of regular nodes in a branch rooted by a loop node reached by a random loop edge is generated by
\begin{equation}
\Phi_1(H^L(x)) = \Phi_1(x\tG^L(H^E(x))).
\end{equation}
The leading $x$ is missing because the root is an artificially added loop node. 
By conditioning on the excess degree of a regular node reached by a regular edge, we get the following recurrence
\begin{equation}
	H^E(x) = x\sum_{k,\ell} \tq_{k,\ell}^E [H^E(x)]^k \big[\Phi_1(x\tG^L(H^E(x)))\big]^\ell
	= x\tG^E(H^E(x),\Phi_1(x\tG^L(H^E(x)))).
\label{eq:H_rec}
\end{equation}
The generating functions $\tG^E(x,z)$, $\tG^L(x)$, and $\Phi_1(x)$ may be computed directly from the input distributions $\{\tp_{k,\ell}\}$ and $\{\phi_k\}$.
With these distributions, we may compute $H^E(x)$ by solving Eq.~\eqref{eq:H_rec}. Without Assumption~\ref{asp:DisLoop}, $H^E(x)$ and $H^L(x)$ would have to be computed using a joint recurrence.

\subsubsection{Small component size distribution and the giant component size}
Finally, by the same conditioning arguments, the number of regular nodes in a connected component reached by a random regular node is generated by
\begin{equation}
H(x) = x\tG(H^E(x),\Phi_1(x\tG^L(H^E(x)))).
\label{eq:H_solve}
\end{equation}
Specifically, $H(x)$ is computed directly from $H^E(x)$ by Eq.~\eqref{eq:H_solve}. The fraction of nodes that are in the giant connected component is given by
\begin{equation}
\label{eq:SfromH}
S=1-\sum\nolimits_s P_s = 1-H(1).
\end{equation}
If $H(1)=1$ then all components are small.

The small component size distribution is computed using the coefficients of $H(x)$. On the unit circle $x=e^{i\theta}$, $H(x)$ is an absolutely convergent Fourier series
\begin{equation}
H(e^{i\theta}) = \sum\nolimits_s P_s e^{is\theta}.
\end{equation}
{Thus, the coefficients of $H(x)$ are computed using the Fast Fourier Transform. Explicitly, the Fast Fourier Transform of $\{H(e^{i\theta_j})\}_{j=0}^{2^J-1}$ is given by
\begin{equation}
	\label{eq:FFT}
	{P_s = \frac{1}{2^J}\sum_{j = 0}^{2^J-1}H(e^{i\theta_j})e^{-i\theta_js},\quad\theta_j = \frac{2\pi j}{2^J}.}
\end{equation}
Note that only the first $2^{J-1}$ values of $P_s$ should be used due to the discrete nature of the Fast Fourier Transform. We evaluate $H(x)$ by solving Eq.~\eqref{eq:H_rec} for $H^E(x)$ and substituting $H^E(x)$ into Eq.~\eqref{eq:H_solve}.


Recall, $P_s$ is the probability a random regular node belongs to a connected component with $s$ regular nodes. Note, a connected component of size $s$ has $s$ nodes that can be randomly selected, so the small component size distribution is obtained by accounting for multiplicity:}
\begin{equation}
\pi_s = \frac{P_s/s}{\sum_{s'}P_{s'}/s'}.
\label{eq:PtoPi}
\end{equation}
In the tree-like representation, $\pi_s$ is the probability a randomly selected connected component has $s$ regular nodes. Alternatively, $\pi_s$ is the probability a random connected component in the original graph has $s$ nodes.

\subsection{Assortativity Correction}
\label{subsec:rewire}
As observed by Miller~\cite{millerPercolation2009}, the addition of triangles into a random graph model results in assortative mixing by degree. We find the same result is true for the Disjoint Loop Model (Algorithm~\ref{alg:DisLoop}). {We do this by computing the degree assortativity coefficient $r$ and showing that it is non-negative. To address this, we propose a method for removing assortative mixing from the Disjoint Loop Model, i.e. the Assortativity Correction, which modifies both the sampling and generating function approaches.}

\subsubsection{The Pearson correlation coefficient via generating functions}
To characterize assortativity, we compute  {the Pearson correlation coefficient $r$ as} 
\begin{equation}
r=\frac{1}{\sigma_q^2} \sum_{j,k} jk(e_{jk}-q_{j-1} q_{k-1}).
\label{eq:assort_shift}
\end{equation}
Here, $e_{jk}$, is the fraction of all edges that connect nodes of total degree $j$ and $k$, and $q_k$ is the probability a node reached by a random edge from a node of excess degree $k$ given by
\begin{equation}
q_{k} = \frac{(k+1)p_{k+1}}{\sum_{k'}k'p_{k'}}.
\end{equation}
$\sigma_q^2$ is the variance of the excess degree distribution.

Note that $e_{jk} = e^{\rm excess}_{j-1,k-1}$. Using simple algebra, one can verify that  Eq.~\eqref{eq:assort_shift} is equivalent to the degree assortativity coefficient defined in~\cite{newmanAssortative2002} which is given by Eq.~\eqref{eq:assort_coeff}. We prefer using $e_{jk}$ rather than $ e^{\rm excess}_{j-1,k-1}$ because the excess degree depends on the type of motif a node is reached by.

Let $e_{jk}^E$ and $e_{jk}^L$ be the fraction of respective edges in edge and loop motifs in the original graph that connect nodes of degree $j$ and $k$.
These satisfy
\begin{equation}
e_{jk} = \frac{\tG_x(1,1)}{G'(1)} e_{jk}^E + \frac{2\tG_z(1,1)}{G'(1)} e_{jk}^L.
\label{eq:DisLoopEdgeRel}
\end{equation}
The ratios $\tG_x(1,1)/G'(1)$ and $\tG_z(1,1)/G'(1)$ are the fractions of edges in edge and loop motifs respectively. Using Eqs. \eqref{eq:WSWRIncProb} and \eqref{eq:DegDistGFGen} to compute the partial derivatives $\tG_x(1,1)$ and $\tG_z(1,1)$, we find
\begin{equation}
\frac{\tG_x(1,1)}{G'(1)} = \frac{\sum_k (k-2f(k))p_k}{\sum_k k p_k} = 1- \frac{2\tG_z(1,1)}{G'(1)}.
\label{eq:gen_d_rel}
\end{equation}
Eq.~\eqref{eq:gen_d_rel} can be verified by applying the chain rule to $G(x)=\tG(x,x^2)$.

Substituting Eq.~\eqref{eq:DisLoopEdgeRel} into Eq.~\eqref{eq:assort_shift} gives the following expression of the Pearson correlation coefficient:
\begin{equation}\begin{split}
	r = \frac{1}{\sigma_q^2}\sum_{j,k}jk\bigg[\frac{\tG_x(1,1)}{G'(1)}e_{jk}^E &+ \frac{2\tG_z(1,1)}{G'(1)}e_{jk}^L \\[-10pt]&\quad\quad\quad- q_{j-1} q_{k-1}\bigg].
\end{split}
\label{eq:assort_DisLoop_a}
\end{equation}

We now compute the values $e_{jk}^E$ and $e_{jk}^L$ in terms of excess degree distributions. Since the excess degrees of nodes connected by a regular edge are independent, the fraction of regular edges that connect nodes of degrees $j$ and $k$ is
\begin{equation}
e_{jk}^E = q_{j-1}^E q_{k-1}^E.
\label{eq:ex_reg_edge}
\end{equation}
Here 
\begin{equation}
q_k^E = \frac{(k+1-2f(k+1))p_{k+1}}{\sum_j (j-2f(j))p_j},\quad k=0,1,2,3,
\label{eq:ex_reg_deg}
\end{equation}
is the probability a node reached by a random regular edge has excess degree $k$. This should not be confused with the excess degree distribution $\{\tq_{k,\ell}^E\}$ which is split by motif type. Specifically, $q_k^E = \tq_{k,0}^E+\tq_{k-2,1}^E$ which is generated by $\tG^E(x,x^2)$. 

The fraction of loop edges that connect nodes of degree $j$ and $k$ is
\begin{equation}
e_{jk}^L = q_{j-2}^L q_{k-2}^L
\label{eq:ex_loop_edge}
\end{equation}
where 
\begin{equation}
q_k^L = \frac{f(k+2)p_{k+2}}{\sum_j f(j)p_j}, \quad k=0,1,2
\label{eq:ex_loop_deg}
\end{equation}
is the fraction of nodes reached by a loop motif that participate in $k$ excess edge motifs. Here $q_k^L=\tq_k^L$ is generated by $\tG^L(x)$. By conditioning on edge type the probability $q_k$ can be computed in terms of $q_{k}^E$ and $q_{k-1}^L$
\begin{equation}
q_k = \frac{\tG_x(1,1)}{G'(1)} q_k^E + \frac{2\tG_z(1,1)}{G'(1)} q_{k-1}^L 
\label{eq:q_equivalence}
\end{equation}
where $q_{-1}^L=0$. Alternatively, Eq.~\eqref{eq:q_equivalence} can be derived by summing Eq.~\eqref{eq:DisLoopEdgeRel} with respect to $j$. To simplify notation, define the moments
\begin{equation}
\begin{split}
	\mu_E &= \sum_k k q_{k-1}^E,  \quad\quad \mu_L = \sum_k k q_{k-2}^L, \quad\quad
	\mu = \sum_k kq_{k-1} = \frac{\tG_x(1,1)}{G'(1)}\mu_E + \frac{2\tG_z(1,1)}{G'(1)}\mu_L.
	\label{eq:ex_mean_id}
\end{split}
\end{equation}
If we represent the values $e_{jk}^E$ and $e_{jk}^L$ in Eq.~\eqref{eq:assort_DisLoop_a} by their excess degree distributions and replace the moments by their corresponding variables we find
\begin{equation}
	r = \frac{1}{\sigma_q^2}\left[\frac{\tG_x(1,1)}{G'(1)}\mu_E^2 + \frac{2\tG_z(1,1)}{G'(1)}\mu_L^2 - \mu^2\right] = \frac{2\tG_x(1,1)\tG_z(1,1)(\mu_E-\mu_L)^2}{\big(G'(1)\sigma_q\big)^2}.
\label{eq:assort_DisLoop_b}
\end{equation}
The second line in Eq.~\eqref{eq:assort_DisLoop_b} requires some tedious algebra. In short, we replace $\mu$ by $\mu_E$ and $\mu_L$ using Eq.~\eqref{eq:ex_mean_id} and simplify using Eq.~\eqref{eq:gen_d_rel}.

All terms in Eq.~\eqref{eq:assort_DisLoop_b} are non-negative verifying that the assortativity coefficient of the Disjoint Loop Model is non-negative, which we also see in graphs generated by Algorithm~\ref{alg:DisLoop}. In particular, if the expected degree of a node reached by an edge motif ($\mu_E$) does not match the expected degree of a node reached by a loop motif ($\mu_L$) then the Disjoint Loop Model is assortative ($r>0$).

\subsubsection{Assortativity Correction: Sampling}

\begin{algorithm}[t]
\KwData{Graph $G$ from Algorithm~\ref{alg:DisLoop}}
\For{$x = 1:N_{\text{Resample}}$}{
	Let $(u_1,v_1)$ and $(u_2,v_2)$ be regular edges whose nodes have degrees $i_1,j_1$ and $i_2,j_2$, respectively\;
	Let $U$ be a uniform random variable on $[0,1]$\;
	\If{$U \le (\hat{e}_{i_1 i_2} \hat{e}_{j_1 j_2})/(\hat{e}_{i_1 j_1} \hat{e}_{i_2 j_2})$}{
		Remove $(u_1,v_1)$ and $(u_2,v_2)$ from $G$\;
		Add $(u_1,u_2)$ and $(v_1,v_2)$ to $G$\;
	}
}
\caption{Assortativity Correction}
\label{alg:rewire}
\end{algorithm}

We add disassortative mixing to regular edges to remove assortative mixing from the Disjoint Loop Model. 
{Specifically,} we rewire regular edges so that the fraction of all edges that connect nodes of degree $j$ and $k$ is $e_{jk}=q_{j-1} q_{k-1}$. {The loop motifs are preserved.} 
The desired fraction of regular edges that connect nodes of excess degree $j$ and $k$ is given by Eq.~\eqref{eq:DisLoopEdgeRel}, namely
\begin{equation}
\hat{e}_{jk}^E = \frac{G'(1)}{\tG_x(1,1)}q_{j-1} q_{k-1} - { 2}\frac{\tG_z(1,1)}{\tG_x(1,1)} q_{j-2}^L q_{k-2}^L.
\label{eq:des_reg_edge}
\end{equation}
Given the values $\hat{e}_{jk}^E$, we rewire regular edges according to Algorithm~\ref{alg:rewire} adapted from~\cite{newmanAssortative2002}. The rewiring step in Algorithm~\ref{alg:rewire} preserves node degrees, it is ergodic, and the fraction of regular edges equilibriates to $\hat{e}_{jk}$ by a detailed balance argument. This procedure is feasible when the desired edge counts are non-negative, $\hat{e}_{jk}>0$. 

Note that when we run both Algorithms~\ref{alg:DisLoop} and~\ref{alg:rewire} we remove self-loops and multiple edges only after the rewiring is done.

\subsubsection{Assortativity Correction: Generating function formalism}
{The Assortativity Correction needs to be done also in the formulas resulting from the generating function approach described in Section \ref{subsec:Gen_DisLoop}.  In this Section, we derive a replacement for the generating function $H(x)$ for the small component size distribution $\{P_s\}$, Eq. \eqref{eq:H_solve}, using an approach adapted from ~\cite{newmanAssortative2002}.  }


Let $H^E_j(x)$ generate the size distribution of a branch whose root is reached by a regular edge from a node of degree $j=1,2,3,4$. The functions $H_j^E$ satisfy the following nonlinear system of algebraic equations:
\begin{equation}
	H^E_j(x) = \sum_{k} \frac{\hat{e}_{jk}}{\sum_{k'}\hat{e}_{jk'}}\Big((1-f_1(k))[H^E_k(x)]^{k-1} + f_1(k)[H^E_k(x)]^{k-3}\Phi_1(H^L(x)) \Big).
\label{eq:BranchAssortRec}
\end{equation}
The ratio ${\hat{e}_{jk}}/{\sum_{k'}\hat{e}_{jk'}}$ is the probability the root has degree $k$ and 
\begin{equation}
f_1(k) = \frac{(k-2)\tp_{k-2,1}}{k\tp_{k,0}+(k-2)\tp_{k-2,1}} 
= \frac{(k-2)f(k)}{k-2f(k)}
\end{equation}
is the probability a node of degree $k$ reached by a random regular edge is in a loop motif. 

The size distribution of a branch whose root is a regular node reached by a loop edge is generated by
\begin{equation}
H^L(x) = \sum_k q_k^L [H_{k+2}^E(x)]^{k}.
\label{eq:HL_assort}
\end{equation}
Note $H^L(x)$ is simply a function of the generating functions $H_j^E(x)$, which is a direct consequence of Assumption~\ref{asp:DisLoop}. Recall that $P_s$ is the probability a random node belongs to a component of size $s$. The values $\{P_s\}$ are generated by
\begin{equation}
	H(x) = xp_0 + x\sum_k p_k\Big((1-f(k))[H_{k}^E(x)]^k + f(k)[H_{k}^E(x)]^{k-2}\Phi_1(H_L(x))\Big).
\label{eq:HSolveAssort}
\end{equation}
Once $H(x)$ is computed, the fraction of nodes in the giant component and the small component size distribution can be computed using the methods discussed in Sec.~\ref{subsec:Gen_DisLoop}. To evaluate $H(x)$, we first solve     Eqs.~\eqref{eq:BranchAssortRec} and~\eqref{eq:HL_assort}  by simple iteration and input these values into Eq.~\eqref{eq:HSolveAssort}.

\section{Estimation of the Model Parameters}
\label{Sec:Parameter_Fit}


%

In this section, we discuss how the parameters for the Disjoint Loop Model with Assortativity Correction are obtained from MD data.
Specifically, we learn the following parameters used to model the carbon skeleton: (1) the degree distribution of the carbon skeleton, $\{p_k\}$, (2) the loop rate per carbon atom, $\lambda$, and (3) the loop length distribution, $\{\phi_k\}$. These parameters are a function of temperature and H/C ratio. The pressure is fixed. 

\subsection{Summary}

\begin{table}[]
\centering
\begin{tabular}{|>{\centering\arraybackslash}p{1.275in}| >{\centering\arraybackslash}p{.9in} | >{\centering\arraybackslash}p{.9in}|}
	\hline
	Equilibrium Constant & $A$ & $C$\\
	\hline\hline
	$\KHHE$ & $4.4056$ & 18704 \\
	$\KDoubleE$ & $6.46\times10^{-4}$ & 30526 \\
	$K_3^\text{eff}$ & $1.2390\times10^{-1}$ & 5451.1 \\
	$K_4^\text{eff}$ & $1.5152\times 10^{-4}$ & -28110\\
	$K_5^\text{eff}$ & $6.7734\times10^{-5}$ & -36238\\
	$K_6^\text{eff}$ & 4.8330 & 18949\\
	$K_7^\text{eff}$ & 5.2197 & 22168\\
	$K_8^\text{eff}$ & 4.4314 & 20251\\
	$K_L^\text{eff}$ & 4.4209 & 18254\\
	\hline
\end{tabular}
\caption{Arrhenius fits for all equilibrium constants $K^\text{eff}=Ae^{-C/T}$. Coefficients $A$ and $C$ are computed using linear least squares for the equation $\ln K^\text{eff} = \ln A - C/T$. For $\KHHE$ and $\KDoubleE$ Arrhenius laws are fit using \ch{C4H10} data. The remaining Arrhenius laws are fit using \ch{C4H10} data with temperature $T\ge3400K$ because some large loop lengths were not observed at low temperatures. The associated reactions are presented in Sec.~\ref{sub:deg_param_fit} and Appendix~\ref{app:Params_Loop}}
\label{tab:Arrh_Param}
\end{table}

We learn the input parameters for our model from a series of local equilibrium reactions. 
The procedure for this is as follows. For each parameter, we {choose an appropriate} local reaction, i.e. a reaction {that describes changes only near the reaction site}. An effective equilibrium constant is then fit to an Arrhenius law $K^\text{eff}=Ae^{-C/T}$, where $T$ is temperature and $K^\text{eff}$ is proportional to the equilibrium constant $K$. The constant of proportionality is either a combinatorial factor 
(e.g. $\KHHE=4K_{\rm HH}$) or a chemical parameter that does not depend on temperature or H/C ratio (e.g. $\KDoubleE=c_P K_{\ch{C=C}}$ where $c_P$ is a function of pressure). These effective equilibrium constants are listed in Table~\ref{tab:Arrh_Param}.

The input parameters 
for the degree distribution $\{p_k\}$ are $\bp_3$, $\phh$, and $\pch$ where $\bp_3$ is the probability a carbon is bonded to 3 atoms, $\phh$ is the probability a hydrogen bonds to another hydrogen, and $\pch$ is the probability for a bond from a carbon to attach to hydrogen. 
These input parameters are obtained from $\Nc$, $\Nh$, the equilibrium constants $\KHHE$ and $\KDoubleE$ from the following relationships: 
\begin{align}
\phh &= 1 - \frac{4\frac{\Nc}{\Nh}{+}1 - \sqrt{\left(4\frac{\Nc}{\Nh}{-}1\right)^2+16\KHHE\frac{\Nc}{\Nh}}}{2(1-\KHHE)},\label{eq:phh_param_start}\\
\pch &= \frac{1}{3\bp_3 + 4\bp_4}\frac{\Nh}{\Nc} (1-\phh),\\
\bp_3 &= \frac{\KDoubleE}{\KDoubleE+\tfrac{\phh}{(12.011\Nc+\Nh)T}}\label{eq:p3_param_end}
\end{align}
{where $12.011$ is the atomic mass of carbon.} Here $\KHHE$ and $\KDoubleE$ are associated with reactions of forming \ch{H2} and double bonds respectively.

The loop rate $\lambda$ and the loop size distribution $\{\phi_k\}$ are obtained from appropriate equilibrium constants associated with either ring formation or expansion reactions.
Counting loops or rings in a graph is non-trivial. Many loops should not be counted because they are composed of smaller loops. In this work, we only consider loops (cycles) that are part of a ``minimum cycle basis''. Minimum cycle bases are frequently avoided for two reasons: (1) they are computationally expensive and (2) they are not unique. In our MD data, there are relatively few overlapping loops. This means that it is computationally feasible to compute a minimum cycle basis at every time step. As for (2), the lengths of loops in a minimum cycle basis (as a multiset) are unique. We discuss the theory for minimum cycle bases in further detail in Appendix~\ref{app:MCB}.

The complete list of equations relating $\lambda$ and $\{\phi_k\}$ to the appropriate equilibrium constants is the following:
	\begin{align}
		\label{eq:lambdaphi3_start}
		\lambda\phi_3 &= K_3^\text{eff}\bp_3,\\
		\lambda\phi_4 &= K_4^\text{eff}\bp_3^2,\\
		\lambda\phi_5 &= K_5^\text{eff}\bp_3^2,\\
		\lambda\phi_{\ell} &= K_\ell^\text{eff}\lambda\phi_{\ell-1}\frac{\Nc3\pch^2(2-\bp_3)}{\Nh\phh},\quad \ell=6,7,8,\\
		\lambda\phi_{\ell} &= K_L^\text{eff}\lambda\phi_{\ell-1}\frac{\Nc3\pch^2(2-\bp_3)}{\Nh\phh},\quad\ell\ge9.\label{eq:lambdaphi_largeL_end}
	\end{align}

\noindent The equilibrium constants $K^\text{eff}_\ell$, $\ell=3,4,...,8$ and $K^\text{eff}_L$ are given in Table~\ref{tab:Arrh_Param}. 
The loop rate per carbon atom is obtained by summing $\lambda=\sum\lambda\phi_k$, and the loop length distribution is obtained by normalizing $\{\lambda\phi_k\}$. 

The derivation of Eqs.~\eqref{eq:phh_param_start}--\eqref{eq:lambdaphi_largeL_end} is rather complicated. To give the reader a taste of it, we elaborate on the derivation of the degree distribution parameters in the remainder of Sec.~\ref{Sec:Parameter_Fit}. Further details are placed in Appendix~\ref{app:Params}.

\subsection{Degree distribution of the carbon skeleton}

\begin{figure}[!tbp]
	\centering
	\includegraphics{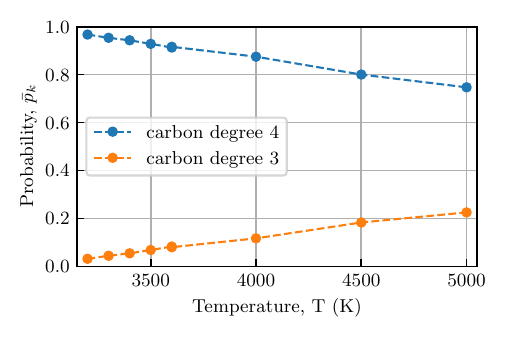}
	\caption{Time-averaged fraction of carbon atoms $\bp_3$ and $\bp_4$ bonded to 3 and 4 atoms, respectively, in ReaxFF MD simulations initialized to \ch{C4H10}. The fraction of carbon atoms bonded to 3 atoms increases with temperature. For computation, we normalize the values $\bp_3$ and $\bp_4$ to sum to 1. In this figure, we do not normalize $\bp_3$ and $\bp_4$ to emphasize that the majority of carbon atoms are bonded to 3 or 4 atoms. This is most accurate at low temperatures.}
	\label{fig:p3_rate}
\end{figure}

\label{sub:skel_deg}
Here we introduce a parametric model for the degree distribution of the carbon skeleton. To do so, we consider the more general problem of modeling the degree distribution of carbon atoms in the global hydrocarbon network split by atom type. We start with the following restriction.
\begin{assume}
	Carbon atoms are bonded to 3 or 4 atoms. Hydrogen atoms are bonded to a single atom.
	\label{asp:HCN_Deg}
\end{assume}
\noindent Carbon atoms that are bonded to 3 atoms are a result of double bonds or radicals. In our MD simulations, Assumption~\ref{asp:HCN_Deg} holds for 98\% of carbon atoms and 99\% of hydrogen atoms. Let $\bp_3$ and $\bp_4$ be the probability a carbon atom is bonded to $3$ or $4$ atoms, respectively. Here $\bp_3 + \bp_4 = 1$. See Fig.~\ref{fig:p3_rate}. The bar is used to distinguish between the global hydrocarbon network and the carbon skeleton, e.g. $p_3$ is the probability a carbon atom is bonded to three other carbon atoms.


Define $\Ncc$, $\Nch$, and $\Nhh$ to be the number of carbon-carbon, carbon-hydrogen, and hydrogen-hydrogen bonds, respectively. To account for the bias of simulated MD networks to have many carbon-hydrogen bonds, let 
\begin{equation}
	\pch=\frac{\Nch}{2\Ncc+\Nch}
\end{equation} 
be the fraction of bonds starting from a carbon atom that lead to a hydrogen atom. We define $\pcc$, $\phc$, and $\phh$ in a similar manner. These probabilities satisfy the following three identities
\begin{equation}
	\pcc+\pch = 1,\quad \phc+\phh=1,\quad (3\bp_3+4\bp_4)\Nc \pch = \Nh \phc.
	\label{eq:pHH_equivs}
\end{equation}
The third identity equates two methods for computing $\Nch$, where $3\bp_3+4\bp_4$ is the mean degree of a carbon atom. By Eq.~\eqref{eq:pHH_equivs}, we only need to compute one bond bias probability. In practice, we use the probability $\phh$, the fraction of H atoms in molecular hydrogen (\ch{H2}).

\begin{algorithm}[t]
	\KwData{$\Nc$ = Number of C nodes,\\ 
		\phantom{\textbf{Data:} }$\Nh$ = Number of H nodes,\\
		\phantom{\textbf{Data:} }$\bar{p}_3$ = fraction of C atoms of degree $3$,\\
		\phantom{\textbf{Data:} }$\phh$ = probability H node is bonded to\\
		\phantom{\textbf{Data:}}\quad another H node}
	\KwResult{Hydrocarbon graph $G$ with unassigned CC bonds}
	Assign $\text{\textit{round}}(\bar{p}_3 \Nc)$ C nodes $3$ stubs\;
	Assign the remaining C nodes $4$ stubs\;
	Assign each H node a single stub\;
	Pair enough H stubs to create $\text{\textit{round}}(\phh \Nh/2)$ edges\;
	Pair the remaining H stubs to C stubs uniformly at random\;
	For each C node extract its degree equal to the number of its stubs that are unassigned\;
	\caption{Estimated Degree Distribution}
	\label{alg:DegDist}
\end{algorithm}

The parameters $\bp_3$ and $\phh$ lead to a natural method for sampling the degree distribution of the carbon skeleton, given in Algorithm~\ref{alg:DegDist}. The parameter $\bp_3$ indicates how many stubs should be assigned to each carbon atom. The parameter $\phh$ assigns the number of hydrogen-hydrogen bonds or \ch{H2} molecules. For carbon-hydrogen bonds, we make the following approximation: all hydrogen atoms that are not bonded to hydrogen are attached randomly to carbon stubs. This approximation accounts for the mean degree of the carbon skeleton, $\langle k \rangle$, through the number of carbon-carbon bonds $\Ncc = \Nc\langle k \rangle/2$. This is a simplification of previous work~\cite{dufour-decieuxPredicting2023} as we do not fit the full degree distribution of the carbon skeleton. Nonetheless, this approximation is accurate because the carbon skeleton has a simple, unimodal degree distribution. See Fig.~\ref{fig:degDist_Ex}. Furthermore, we improve upon previous work~\cite{dufour-decieuxPredicting2023} by considering degree 3 carbon atoms. We remark that Algorithm~\ref{alg:DegDist} does not assign carbon-carbon bonds. Instead, the unassigned carbon stubs are used to sample the degree sequence of the carbon skeleton. This sample degree sequence can be used as input for random graph models, e.g. the configuration model.

\begin{figure}[!tbp]
	\centering
	\includegraphics{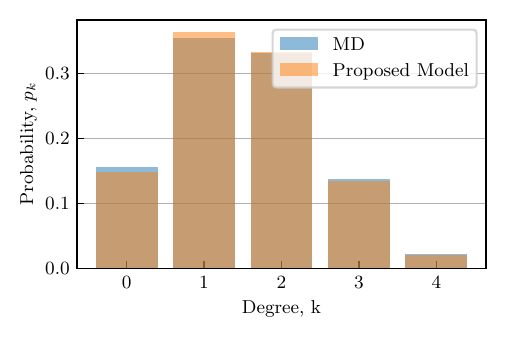}
	\caption{Degree distribution of the carbon skeleton for simulations initialized to \ch{C4H10} at 3600K with $320$ C atoms. The time-averaged degree distribution from MD simulations (blue) has a simple, unimodal shape. This distribution is well-approximated by the binomial-like distribution obtained using the parameters $\phh$ and $\bp_3$ (orange).}
	\label{fig:degDist_Ex}
\end{figure}

\def\pkckh{\bar{p}_{k_{\rm C},k_{\rm H}}}
Under this sampling scheme, we may obtain an explicit formula for the degree distribution of the carbon skeleton. Let $\pkckh$ be the probability a carbon atom is bonded to $k_{\rm C}$ carbon atoms and $k_{\rm H}$ hydrogen atoms. In Algorithm~\ref{alg:DegDist}, each stub of a carbon atom is attached independently to a carbon (hydrogen) atom with probability $\pcc$ ($\pch$). Thus, $\bp_{k_{\ch{C}},k_{\ch{H}}}$ is generated by
\begin{equation}
	\bG(x,y) = \sum_{k_{\ch{C}},k_{\ch{H}}}\bar{p}_{k_{\ch{C}},k_{\ch{H}}} x^{k_{\ch{C}}} y^{k_{\ch{H}}} = \bar{p}_3(\pcc x+\pch y)^3 + \bar{p}_4(\pcc x+\pch y)^4.
	\label{eq:DegDistGlobal}
\end{equation}
The degree distribution of the carbon skeleton is a marginal distribution of $\{\bp_{k_{\ch{C}},k_{\ch{H}}}\}$ generated by
\begin{equation}
	G(x) = \bar{G}(x,1) = p_0 + p_1 x + p_2 x^2 + p_3 x^3 + p_4 x^4
\end{equation}
where
\begin{equation}
	\begin{split}
		p_k &= \bp_3\binom{3}{k} \pcc^k\pch^{3-k}+\bp_4\binom{4}{k} \pcc^k\pch^{4-k},\quad 0{\le}k{\le}3\\
		p_4 &= \bp_4 \pcc^4.\label{eq:DegDist_fromParam}
	\end{split}
\end{equation}
We will primarily use the generating function ${G}(x)$ in our analysis. We note that Algorithm~\ref{alg:DegDist} accurately predicts the degree distribution of the carbon skeleton but not the global hydrocarbon network -- see Appendix~\ref{app:deg_error}, particularly Fig. \ref{fig:Deg_Validation} (right), for further details.   Nonetheless, we suggest the notation $\bar{G}(x,y)$ is useful for studying the hydrocarbon network as a whole. The accuracy of Algorithm~\ref{alg:DegDist} is discussed in further detail in Appendix~\ref{app:deg_error}.



\subsection{Estimation of degree distribution parameters}
\label{sub:deg_param_fit}

Here we fit the degree distribution parameters $\phh$ and $\bp_3$ to MD data. The general trends of these parameters are given in Fig.~\ref{fig:deg_param_fit}. Both $\phh$ and $\bp_3$ increase with temperature, where hydrogen atoms separate from carbon atoms resulting in radicals and double bonds. This does not result in a major change in the average degree of the carbon skeleton. However, when the H/C ratio increases, $\phh$ increases and $\bp_3$ decreases, where the additional hydrogen atoms either form \ch{H2} or bond to carbon. This results in fewer carbon-carbon bonds, so the mean degree of the carbon skeleton decreases with the H/C ratio. This explains why the presence of a giant molecule is primarily a function of the H/C ratio.

\begin{figure}[!tbp]
\centering
\includegraphics{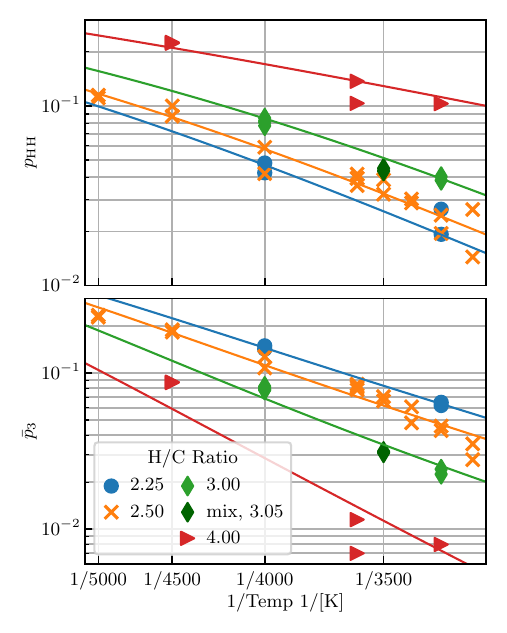}
\caption{
	Parameters $\phh$ (top) and $\bp_3$ (bottom) as estimated by time-averaging MD data (markers) compared to their fit values (solid lines). Specifically, the solid lines are obtained by fitting the equilibrium constants $\KHHE$ and $\KDoubleE$ to an Arrhenius law, which are used to estimate $\phh$ and $\bp_3$ as a function of the H/C ratio and temperature. The Arrhenius laws for $\KHHE$ and $\KDoubleE$ are trained using MD samples initialized with \ch{C4H10}, and they extrapolate to other initial conditions by changing the H/C ratio (via $\Nc$ and $\Nh$) in Eqs.~\eqref{eq:Phh_RR} and~\eqref{eq:P3_RR}. The values $\phh$ and $\bp_3$ can then be used to estimate the degree distribution.
}
\label{fig:deg_param_fit}
\end{figure}

\subsubsection{Estimation of \texorpdfstring{$p_{\rm HH}$}{phh}}

To fit the parameter $\phh$ to MD data, we consider a reaction consisting of two bonds swapping atoms
\begin{equation}
\ch{2 C-H}\quad\ch{<=>}\quad\ch{C-C}\enspace+\enspace\ch{H-H}.
\label{reac:pHH}
\end{equation}
The number of reactants in Reaction~\eqref{reac:pHH} matches the number of products. Thus, we may cancel volumes and write the equilbrium constant in terms of bond counts
\begin{equation}
K_{\ch{HH}} = \frac{[\chemfig{C-C}][\chemfig{H-H}]}{[\chemfig{C-H}]^2} = \frac{\Ncc\Nhh}{\Nch^2}.
\end{equation}

We will express $K_{\ch{HH}}$ as a function of the H/C ratio and $p_{\ch{HH}}$. By Assumption~\ref{asp:HCN_Deg}, each hydrogen participates in a bond. Hence we may write $\Nhh=\Nh \phh/2$ and $\Nch=\Nh\phc = \Nh(1-\phh)$. To count carbon-carbon bonds, we make the approximation $4\Nc \approx 2\Ncc+\Nch$. 
This approximation is not needed for numerical computations, but it allows us to estimate $\phh$ separately from $\bp_3$. Furthermore, this approximation is accurate at low temperatures, where $\bp_3$ is small and the mean degree of carbon atoms is $4-\bp_3\approx4$. Under this approximation, the number of carbon-carbon bonds is $\Ncc \approx (4\Nc - \Nh(1-\phh))/2$ resulting in the effective equilibrium constant
\begin{equation}
\KHHE = \frac{\left(4\frac{\Nc}{\Nh} - (1-\phh)\right) \phh}{(1-\phh)^2} = A_{\ch{HH}} e^{-C_{\ch{HH}}/T}.
\label{eq:Phh_RR}
\end{equation}
Given $\KHHE$, we may rearrange Eq.~\eqref{eq:Phh_RR} into a quadratic equation in $\phh$. This gives two candidate solutions for $\phh$, only one of which is contained in $[0,1]$. This is given by
\begin{equation}
\phh = 1 - \frac{4\frac{\Nc}{\Nh}+1 - \sqrt{\left(4\frac{\Nc}{\Nh}-1\right)^2+16\KHHE\frac{\Nc}{\Nh}}}{2(1-\KHHE)}.
\label{eq:Phh_Sol}
\end{equation}

We measure the percent error of $\phh$ as
\begin{equation}
100\times \Big|\phh-\phh^{(MD)}\Big|/\phh
\end{equation}
where $\phh$ is given by Eq.~\eqref{eq:Phh_Sol} and $\phh^{(MD)}$ is time-averaged from MD simulations. The average percent error is 9.1\% for the \ch{C4H10} training data and 10.7\% for the test data. The total average percent error is 9.9\%. The values $\phh$ and $\phh^{(MD)}$ are depicted in Fig.~\ref{fig:deg_param_fit} (top). We find the amount of molecular hydrogen increases with temperature and H/C ratio.

\subsubsection{Estimation of \texorpdfstring{$\bp_3$}{p3}}

Next, we discuss a method for fitting the parameter $\bp_3$, the probability that a randomly picked carbon is bonded to a total of three other atoms, to MD data. We find that a bond starting from a degree 4 carbon atom is more likely to lead to hydrogen than a bond starting from a degree 3 carbon atom--see Fig.~\ref{fig:Deg_Validation} (right) in Appendix~\ref{app:deg_error} for more details. Let's consider the dissociation of a carbon-hydrogen bond
\begin{equation}
\ch{C4-H} \quad\ch{<=>}\quad\ch{C3}\enspace+\enspace \ch{H}
\label{Reac:P3}
\end{equation}
where $\ch{C}_i$ is a carbon atom bonded to $i$ atoms. The numbers of carbon atoms of degree 3 and 4 are $\Nc\bp_3$ and $\Nc\bp_4$, respectively. By Assumption~\ref{asp:HCN_Deg}, we assume radical hydrogen atoms immediately bond to another atom. Thus, we count the lone hydrogen atoms in Eq.~\eqref{Reac:P3} as the $\Nh \phh$ hydrogen atoms not bonded to carbon. This results in the equilibrium constant
\begin{equation}
K_{\doubleC} = \frac{\Nc \bp_3 \Nh \phh}{\Nc \bp_4 V}
\end{equation}
where $V$ is volume.

We do not have volume data so we assume that the volume is approximately linear in molar mass and temperature under our given conditions, i.e.
\begin{equation}
V \approx c_P (12.011\Nc + \Nh) T.
\label{eq:Vol_Approx}
\end{equation}
Our simulation conditions match that of a supercritical fluid phase. The ability of a supercritical fluid to compress depends on both pressure and temperature. Thus, Eq.~\eqref{eq:Vol_Approx} assumes the compressibility factor $Z:=pV/nRT$ is roughly constant in the conditions studied here. Using this approximation, we define the effective equilibrium constant
\begin{equation}
\KDoubleE = \frac{\bp_3 \Nh\phh}{\bp_4 (12.011 \Nc + \Nh) T} = A_{\doubleC} e^{-{C_{\doubleC}}/{T}}
.
\label{eq:P3_RR}
\end{equation}
Given the Arrhenius coefficients $A_{\doubleC}$ and $C_{\doubleC}$, we may compute $\bp_3$ as follows
\begin{equation}
\bp_3 = \frac{\KDoubleE}{\KDoubleE+\tfrac{\phh}{(12.011\Nc+\Nh)T}}.
\label{eq:P3_Sol}
\end{equation}    
The average percent error of $\bp_3$ is 11.6\% for training data and 16\% for test data. The average percent error over all data points is 13.7\%. The value $\bp_3$ as estimated by Eq.~\eqref{eq:P3_Sol} and by time-averaging MD data are shown In Fig.~\ref{fig:deg_param_fit} (bottom).

\section{Results}
\label{Sec:Case_Study}

\subsection{Test settings}
\subsubsection{Auxiliary models}
Here we measure the ability of the Disjoint Loop Model with Assortativity Correction to predict component sizes. Formally, let \emph{Proposed Model} be the Disjoint Loop Model with Assortativity Correction (as described in Fig.~\ref{fig:model-flowchart}) where the parameters are learned from a local equilibrium model (as described in Sec.~\ref{Sec:Parameter_Fit}). We also introduce three auxiliary models, \emph{Models 1, 2, and 3}. These models {help} to measure the importance of each component of the Proposed Model. These models are ordered by increasing complexity. Model 1 is the configuration model. Model 2 is the Disjoint Loop Model without assortativity correction. Model 3 accounts for assortativity induced by loops using Assortativity Correction. The parameters for Models 1--3 are time-averaged from MD data. For Models 1--3 and Proposed Model, the degree distribution is estimated 
as discussed in Sec.~\ref{sub:skel_deg}. We also compare to previous work~\cite{dufour-decieuxPredicting2023} which uses the configuration model and does not include carbon atoms bonded to 3 atoms. These models are summarized in Table~\ref{tab:RG_models}.

\begin{table*}[t]
	\setlength{\tabcolsep}{1.72pt}
\def\fillertext{\phantom{Deg 3 C atoms}}
\newcommand\tab[2]{\begin{tabular}{c}#1\\#2\end{tabular}}
\centering
\newcommand{\PreserveBackslash}[1]{\let\temp=\\#1\let\\=\temp}
\newcolumntype{C}[1]{>{\PreserveBackslash\centering}p{#1}}
\footnotesize
\begin{tabular}{|C{2.75cm}|C{4cm}|C{4cm}|C{4cm}|}
	\hline
	Model & \tab{Parameter Estimation}{Method} & \tab{Degree Distribution}{Model} & \tab{Random Graph}{Model}
	\\\hline\hline
	Previous Work~\cite{dufour-decieuxPredicting2023}
	& from Arrhenius fit & \tab{a ten-reaction model}{(no deg. 3 C atoms)} & configuration model
	\\\hline
	Model 1 & time-averaged from MD & \tab{using parameters $\phh$}{and $\bp_3$ (Sec.~\ref{sub:skel_deg})} & configuration model
	\\\hline
	Model 2 & time-averaged from MD & \tab{using parameters $\phh$ }{and $\bp_3$ (Sec.~\ref{sub:skel_deg})} & Disjoint Loop Model
	\\\hline
	Model 3 & time-averaged from MD & \tab{using parameters $\phh$}{and $\bp_3$ (Sec.~\ref{sub:skel_deg})} & \tab{Disjoint Loop Model}{+Assortativity Correction}
	\\\hline
	Proposed Model & from Arrhenius fit & \tab{using parameters $\phh$}{and $\bp_3$ (Sec.~\ref{sub:skel_deg})} & \tab{Disjoint Loop Model}{+Assortativity Correction}
	\\\hline
\end{tabular}
\caption{List of random graph models compared in this study. The model from~\cite{dufour-decieuxPredicting2023} is used as a baseline.}
\label{tab:RG_models}
\end{table*}

All of the models listed in Table~\ref{tab:RG_models} can be analyzed using two methods. When the number of nodes is small, we can use random graph sampling (RGS). When sampling random graphs, we make the number of nodes $\Nc,\Nh$ match MD data. In the limit $\Nc\to\infty$, the random graph model can be analyzed via generating functions (GF). For all models, we may evaluate the generating function $H(x)$ with coefficients $P_s$, where $P_s$ is the probability a random node belongs to a connected component of size $s$. {Then the fraction of nodes in the giant component, $S$, and the small component size distribution, $\{\pi_s\}$, are found using Eqs. \eqref{eq:SfromH} and \eqref{eq:PtoPi} respectively.}  

\subsubsection{Loop rate distribution}

\begin{figure}[!tbp]
\centering
\includegraphics{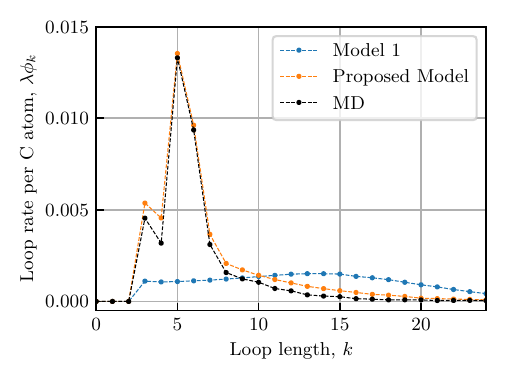}
\caption{
	Loop rates split by length $\{\lambda\phi_k\}$ from MD data compared to Model 1 and Proposed Model for the sample \ch{C8H18} (H/C ratio 2.25). Data for Model 1 and Proposed Model are averaged over 10,000 samples, where $2\Nc^2 \lambda+500$ rewiring steps are used per sample of Proposed Model. Differences between Proposed Model and MD are due to unintended loops and fitting from the local reaction model. 
}
\label{fig:LoopSampling}
\end{figure}

Fig.~\ref{fig:LoopSampling} shows that the Disjoint Loop Model with Assortativity Correction (Model 3 and Proposed Model) accurately recreates the loop rate distribution $\{\lambda\phi_k\}$. In contrast, the configuration model (Model 1) has a very different loop rate distribution which caused poor prediction for the size of the giant component in~\cite{dufour-decieuxPredicting2023}. We also find that Proposed Model has fewer unintended loops than the configuration model. This can be verified by noting that Proposed Model has fewer large loops than Model 1 in Fig.~\ref{fig:LoopSampling}.

\subsubsection{Assortativity Correction setup}

For sampling the Disjoint Loop Model with Assortativity Correction (Model 3 and Proposed Model) one must specify how many rewiring steps are performed by Algorithm~\ref{alg:rewire}. 
The number of steps should increase with the number of loops. 
First, we perform $2\Nc^2\lambda + 500$ rewiring steps and check if the assortativity coefficient is small enough, i.e. $|r|\le 0.01$, averaged over 2000 samples. If not, we perform additional steps until this stopping criterion is satisfied. 
For comparison, the assortativity coefficient of the Disjoint Loop Model can approach values closer to $r=0.1$. See Fig~\ref{fig:AssortComp}. This rewiring scheme can be quite slow. To account for this, we sample the giant connected component multiple times after each initial rewiring scheme. This consists of 50 samples each separated by 50 iterations of edge rewiring. Each of the 50 samples come from roughly the same distribution but they are correlated.

\begin{figure}
\centering
\includegraphics{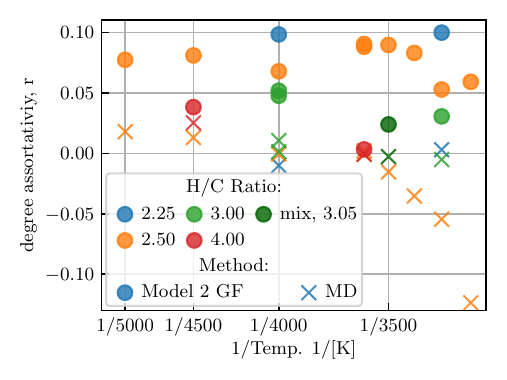}
\caption{Degree assortativity coefficient $r$ of the carbon skeleton. The Disjoint Loop Model (circles) has a higher degree assortativity coefficient than MD simulations (crosses). Thus, we expect a model with neutral assortativity $r=0$ (as in Model 3 and Proposed Model) to be more accurate than a model with positive assortativity $r>0$ (as in Model 2). The assortativity coefficient for Model 2 is obtained using generating functions using Eq.~\eqref{eq:assort_coeff}.}
\label{fig:AssortComp}
\end{figure}

\subsection{Size distribution of the largest molecule}

\begin{figure}[!tbp]
\centering
\includegraphics[width=\textwidth]{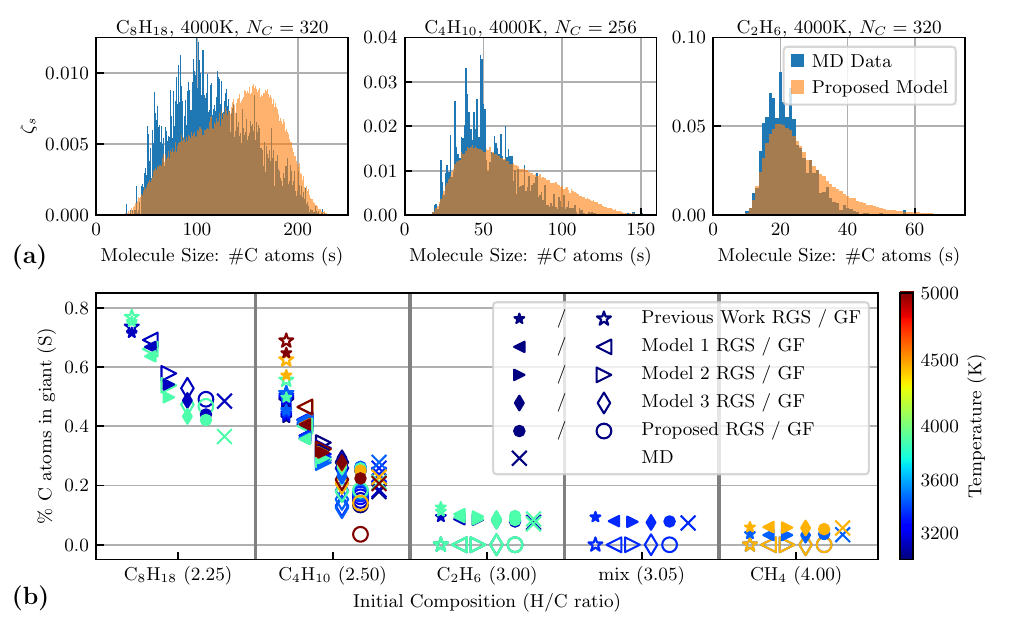}
\caption{(a) Sample histograms for $\zeta_s$ the probability the largest molecule has $s$ carbon atoms for Proposed Model versus MD data. All histograms in (a) are obtained by random graph sampling. The ten-reaction model from Previous Work~\cite{dufour-decieuxPredicting2023}, Model 1, and Model 2 are averaged over 100,000 samples. Model 3 and Proposed Model are averaged over 20,000 initial rewire runs followed by 50 samples each. 
	(b) Expected fraction of carbon atoms in the largest molecule under all random graph models compared to MD data. Samples are organized by increasing H/C ratio. Solid markers correspond to the sampling of the largest molecule and unfilled markers correspond to $1-H(1)$. When the H/C ratio is 3 or higher all random graph models predict $1-H(1)=0$ or that there is no giant connected component in the limit $\Nc\to\infty$. }
\label{fig:GiantComps}
\end{figure}

First, we estimate the largest molecule size distribution. 
All of the random graph models listed in Table~\ref{tab:RG_models} suggest there is a phase transition of hydrocarbon pyrolysis between an H/C ratio of 2.5 and 3 for the temperatures and pressure given here. When the H/C ratio is 3 or higher, no giant component is predicted ($H(1)=1$). 
For at an H/C ratio between 2.5 and lower, a giant component is predicted ($H(1)<1$). 

To measure the largest molecule size, we report the fraction, $S$, of carbon atoms that are in the largest molecule. In Fig.~\ref{fig:GiantComps}(b), we show the expected value of $S$ for all random graph models and MD data. When there is no giant connected component (H/C ratio 3 or higher) all models {accurately} predict the size of the largest molecule. When the giant component is present (H/C ratio 2.5 or lower), the configuration model overestimates the size of the largest molecule. In this regime, Models 1, 2, and 3 incrementally decrease the estimated size of the giant connected component. The size of the largest molecule is estimated reasonably well by Model 3 and Proposed Model even when the H/C ratio is 2.5 or lower. 
In Fig.~\ref{fig:GiantComps}(a), we show the distribution of the number of carbon atoms in the largest molecule, $\{\zeta_s\}$, sampled from Proposed Model versus averaged MD data. {Additional plots of the largest molecule size distribution are given in Appendix~\ref{app:extra_figs}.} 
We observe that $\{\zeta_s\}$ as estimated by Proposed Model has a heavier tail than ReaxFF MD data. 
Nevertheless, for most samples, we find that Proposed Model matches $\{\zeta_s\}$ from MD data surprisingly well.
The exception is the case of \ch{C8H18}, e.g. see the left histogram in Fig.~\ref{fig:GiantComps}(a).
The reason explaining the discrepancy between the actual and predicted distributions of the giant component at low H/C ratio such as 2.25 is that there are many loops, and they do overlap. Hence, Assumption~\ref{asp:DisLoop} of loops being disjoint underlying Proposed Model is not valid. We leave the elimination of Assumption~\ref{asp:DisLoop} for future work.

\begin{figure}[!tbp]
\centering
\includegraphics[width= \textwidth]{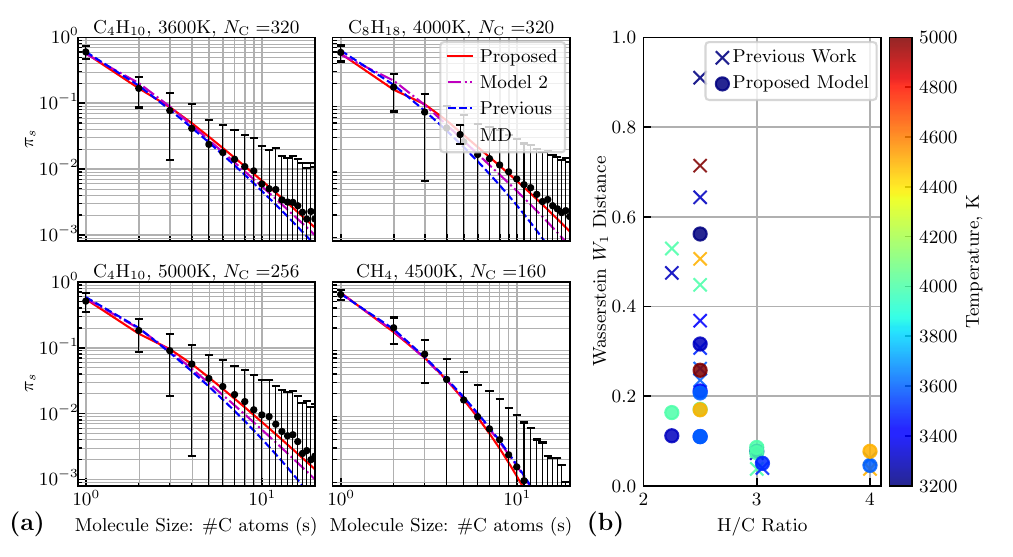}
\caption{Analysis of the distribution of the number of carbon atoms in a random molecule $\{\pi_s\}$. (a) Sample distribution of $\{\pi_s\}$ given by Proposed Model, Model 2, Previous Work~\cite{dufour-decieuxPredicting2023}, and MD data. All samples of $\{\pi_s\}$ for Previous Work~\cite{dufour-decieuxPredicting2023}, Model 2, and Proposed Model are given by generating functions. One standard deviation of $\{\pi_s\}$ is shown for samples from MD data. 
	(b) Wasserstein distance of $\{\pi_s\}$ as estimated by Previous Work~\cite{dufour-decieuxPredicting2023} (crosses) and Proposed Model (circles) versus $\{\pi_s\}$ time-averaged from MD data.}
\label{fig:SmallComps}
\end{figure}

\subsection{Size distribution of small molecules }
Next, we measure how well the random graph models predict the size distribution of small molecules. We choose 20 carbons to be the cutoff size for small molecules. To achieve this, $\pi_s$ is evaluated for $s=1,\hdots,20$ and normalized to sum to 1. {The Wasserstein $W_1$ distance~\footnotemark[1] measuring the discrepancy between the small molecule size distribution of Proposed Model and the MD data is displayed in Fig.~\ref{fig:SmallComps}(b). 
\footnotetext[1]{The Wasserstein $W_1$ distance between probability mass functions $\{P_j\}$ and $\{Q_j\}$ is $W_1(P,Q)=\sum_s\left|\sum_{j=1}^s P_j - \sum_{j=1}^s Q_j\right|$}
For comparison, the similar result obtained in the previous work~\cite{dufour-decieuxPredicting2023} is also shown there.} We find that both models predict $\{\pi_s\}$ reasonably well when the giant component is absent. When the giant component is present, Proposed Model predicts the size distribution of small molecules better than the model in~\cite{dufour-decieuxPredicting2023}. Sample small molecule size distributions are given in Fig.~\ref{fig:SmallComps}(a) and the rest are in Appendix~\ref{app:extra_figs}. Note the impact of Assortativity Correction on the molecule size distribution. 
In assortative networks, more edges connect pairs of low-degree nodes. As a result, Model 2 (the Disjoint Loop Model without Assortativity Correction) overestimates the fraction of molecules with two carbon atoms ($\pi_2$), i.e. the number of bonds connecting carbon atoms with skeleton degree 1. See the purple line in the top-right plot in Fig.~\ref{fig:SmallComps}(a). 



\section{Conclusion}
\label{sec:conclusion}

We proposed a random graph model for estimating molecule size distributions in hydrocarbon pyrolysis, the Disjoint Loop Model with Assortativity Correction. The input for this model is obtained from a series of local equilibrium reaction models trained on MD data. This model has the following advantages.
\begin{enumerate}
\item \textit{Low computational cost.} Using this model, one can predict the size of the largest molecule and the small molecule size distribution in hydrocarbon pyrolysis, at any H/C ratio from 2.25 to 4, and any temperature between 3200K and 5000K at a very low computational cost. Specifically, if the number of atoms is roughly 1,000 then 100,000 random graph samples are generated within half an hour on a laptop with an Intel i7 10th gen. The prediction for the fraction for the largest molecule size and small molecule size distribution in the limit $\Nc\to\infty$ is obtained using generating functions and solving a nonlinear system of equations within a second.

\item \textit{Small number of required input parameters.} Our model uses 9 reaction equilibrium constants learned from MD data. In contrast, the local kinetic Monte Carlo model that predicted the size of the largest molecule comparably~\cite{dufour-decieuxTemperature2022} needed 154 local reaction rates. 

\item \textit{The largest molecule.} The model accurately predicted the molecule size distribution for H/C ratio $\ge 2.5$ where the loops were mostly non-overlapping. It somewhat exaggerated the largest molecule size at H/C ratio at 2.25 as in \ch{C8H18}.

\item \textit{The small molecule size distribution.} The model accurately predicted the small molecule size distribution for the whole range of H/C ratios and temperatures.
\end{enumerate}

The term disjoint in the Disjoint Loop Model with Assortativity Correction refers to the strong assumption (Assumption~\ref{asp:DisLoop}) that the loops are disjoint. This assumption enabled us to use existing methods~\cite{karrerRandom2010} to derive the molecule size distribution and largest component size.
This assumption implies that the small connected components are tree-like at the level of motifs (edges and small loops). 
But to what extent is Assumption~\ref{asp:DisLoop} true?  In hydrocarbon networks initialized with \ch{CH4}, \ch{C2H6}, \ch{C4H10}, and \ch{C8H18}, over 98\%, 90\%, 75\%, and 66\% of biconnected components are composed of a single loop. Thus, when there is no giant connected component (H/C ratio $\ge3$), most biconnected components are accounted for. When there is a giant connected component (H/C ratio $\le 2.5$), other biconnected components appear. These mostly consist of a small number of loops (2 to 4) that are connected in a tree-like fashion. Examining Fig.~\ref{fig:GiantComps} we observe that the size of the largest molecule for the system initialized by C$_8$H$_{18}$ (H/C ratio 2.25) is predicted the least accurately. This means that more carbon-rich systems cannot be adequately described by the Disjoint Loop Model with Assortativity Correction. To model such systems, we plan to generate more ReaxFF MD data and develop a new random graph model that abandons the disjoint loop assumption.  

Finally, we believe that hydrocarbon pyrolysis can be studied using random graph theory because of its extreme conditions. The high temperatures and high pressure of hydrocarbon pyrolysis make it ergodic so many molecular configurations can form. Thus, we theorize that these methods may be extendable to other chemical systems at extreme conditions, such as combustion. Multivariate distributions and multivariate generating functions may be instrumental for their modeling by random graphs. 


\section*{Acknowledgements} 
This material is based upon work supported by the National Science Foundation Graduate Research Fellowship Program under Grant No. DGE 2236417 (PR). Any opinions, findings, and conclusions or recommendations expressed in this material are those of the author(s) and do not necessarily reflect the views of the National Science Foundation.
This work was partially supported by AFOSR MURI grant FA9550-20-1-0397 (MC). V.\ D.-D.\ acknowledges the Rütli Stiftung and the ETH Zurich Foundation.

\bibliographystyle{ieeetr}
\bibliography{bib_Zotero.bib}

\newpage

\appendix

\section{Loops in the Configuration Model}
\label{app:Conf_Loop}
Here we compute the expected number of loops of fixed length $L\ge 3$ that are present in the configuration model.  The calculations carried out here use methods from \cite[Chapter 12]{newmanNetworks2018}. We assume that the degree distribution is fixed with finite second moment, $\langle k^2\rangle<\infty$. The expected number of loops of length $L$ approaches a constant as $N\to\infty$. Thus, the rate of loops of length $L$ per node decays as $O(N^{-1})$. In Appendix~\ref{app:Triangles_DLM} we show how these computations generalize to the Disjoint Loop Model

We start by computing edge probabilities. In the configuration model, each pair of stubs is equally likely to form an edge. The total number of stubs is $N\langle k\rangle$ where $\langle k\rangle$ is the mean degree. Then, the probability two specific stubs are attached by an edge is
\begin{equation*}
\mathbb{P}({\rm stub}_i\sim{\rm stub}_j) = \frac{1}{N\langle k\rangle-1} = \frac{1}{N\langle k\rangle}+O(N^{-2}).
\end{equation*}
Let $u$ and $v$ be two nodes with fixed degrees $k_u$ and $k_v$. Define $u\sim v$ to be the event that $u$ and $v$ are joined by an edge. There are $k_u k_v$ ways we may select a stub from $u$ and $v$ and the probability that $u$ and $v$ are joined by two or more edges is $O(N^{-2})$. Thus, $u$ and $v$ are joined by an edge with probability
\begin{equation}
\mathbb{P}(u\sim v) = \frac{k_u k_v}{N\langle k\rangle}+O(N^{-2}).
\end{equation}
Let $w$ be a third node with fixed degree $k_w$. Conditioned on the edge $u\sim v$, node $v$ only has $k_v-1$ stubs that may attach to node $w$. This happens with probability
\begin{equation}
\mathbb{P}(v\sim w | u\sim v) = \frac{(k_v-1)k_w}{N\langle k\rangle} + O(N^{-2}).
\end{equation}

Now, we may compute the probability that $u$, $v$, and $w$ form a triangle. Conditioned on the edges $u\sim v\sim w$, the nodes $u$ and $w$ have $k_u-1$ and $k_w-1$ stubs that may be attached, respectively. Thus, the probability of a triangle is
\newcommand\hone{\hspace{0pt}}
\newcommand\hpar{\hspace{0pt}}
\begin{equation}
	\begin{split}
	\mathbb{P}(\hone u \sim v\sim w\sim u\hone) &= \mathbb{P}\hone(\hpar u\sim v\hpar)\hone \mathbb{P}\hone(\hpar v\sim w\hone|\hone u\sim  v\hpar)\hone \mathbb{P}\hone(\hpar w\sim u\hone|\hone u\sim v\sim w\hpar)\\&= \frac{k_u k_v}{N\langle k\rangle}\frac{(k_v- 1)k_w}{N\langle k\rangle}\frac{(k_w - 1)(k_u- 1)}{N\langle k\rangle} + O(N^{-4}).
	\end{split}
	\label{eq:tri_spec_B1}
\end{equation}
If the nodes $u,v,w$ are sampled randomly, then the degrees $k_u,k_v,k_w$ are independent samples from the degree distribution $\{p_k\}$. Thus, summing Eq.~\eqref{eq:tri_spec_B1} over node triples gives the expected number of triangles
\begin{equation}
	\mathbb{E}[\#\bigtriangleup] = \binom{N}{3}\left(\sum_k p_k \frac{k(k-1)}{N\langle k\rangle}\right)^3 + O(N^{-1})
	=\frac{1}{6}\left(\frac{\langle k^2\rangle - \langle k\rangle}{\langle k\rangle}\right)^3 + O(N^{-1}).
	\label{eq:exp_Tr_1_B1}
\end{equation}

The probability $L$ nodes form a loop $u_1\sim u_2\sim \hdots \sim u_L\sim u_1$ can be obtained by extending the product in Eq.~\eqref{eq:tri_spec_B1} to obtain the leading term:
\begin{equation}
\prod_{i=1}^L \frac{k_{u_i}(k_{u_i}-1)}{N\langle k \rangle}.
\end{equation}
The expected number of loops of length $L$ is
\begin{align}
\mathbb{E}[\#\text{$L$-loops}] & = \binom{N}{L}\frac{L!}{2L}\left(\sum_k p_k \frac{k(k-1)}{N\langle k\rangle}\right)^L + O(N^{-1})\notag\\
&=\frac{1}{2L}\left(\frac{\langle k^2\rangle - \langle k\rangle}{\langle k\rangle}\right)^L + O(N^{-1})&.
\label{eq:appA_exp_loop_L}
\end{align}
where $L!$ is the number of orderings of $L$ nodes and $2L$ is the number of automorphisms of a loop of length $L$.

This result can also be obtained using expected excess degrees. Let $u$ be a node reached by a random stub. The probability $u$ has $k$ excess neighbors is $q_k=(k+1)p_{k+1}/\sum_{k'} k'p_{k'}$ with mean
\begin{equation}
\tau = \sum_k kq_k = \frac{\langle k^2\rangle - \langle k\rangle}{\langle k\rangle}.
\end{equation}
The excess neighbors of $u$ also reached by a random edge, so their excess degree has distribution $\{q_k\}$. The expected number of second neighbors to $u$ reached by following excess edges is $\tau^2$. In general, the expected number of nodes distance $L$ to $u$ reached by following excess edges is given by $\tau^L$. These $L$'th excess neighbors are also reached by a random edge. The probability a particular one of these edges leads to the stub used to reach $u$ (resulting in a loop) is approximately $1/N\langle k\rangle$. There are $N\langle k\rangle$ stubs that could have been used to draw $u$. These quantities cancel so---accounting for automorphisms---the expected number of loops of length $L$ is approximately $\tau^L/2L$ matching Eq.~\eqref{eq:appA_exp_loop_L}.

\section{Probabilistic Ingredients for the Disjoint Loop Model}

\subsection{Proof of Lemma 1}
\label{app:Inc_Prob}
{ 
The proof of Lemma~\ref{lemma:inclusion_prob} requires the verification of the asymptotic loop inclusion probabilities given by Eq.~\eqref{eq:WSWRIncProb}. It is challenging to prove Eq.~\eqref{eq:WSWRIncProb} using the classical scheme for weighted sampling without replacement. Instead, we prove Eq.~\eqref{eq:WSWRIncProb} using an equivalent sampling scheme that utilizes order statistics introduced in Ref.~\cite{efraimidisWeighted2006}:
\begin{enumerate}
	\item for each node $i$ draw a random number (key) in $[0,1]${  $$y_i=\begin{cases}U_i^{1/w_{k_i}}, & w_{k_i}>0\\ 0, &w_{k_i}=0\end{cases},$$ where $k_i$ is the degree of the node $i$, $w_{k_i} = \binom{k_i}{2}$, and $U_i\sim\text{Uniform}[0,1]$}
	\item return the {$\mathcal{N} := \sum_{j=1}^{N_L}L_j$} nodes with the largest keys $y_i$.
\end{enumerate}
The node degrees $\{k_i\}$ and thus the weights $\{w_{k_i}\}$ are assumed to be i.i.d., so the fraction of nodes with keys $y_i > a$ approaches a fixed value as $N\to\infty$. We will show a node of degree $k\ge 2$ is in a loop if its key is sufficiently large as $N\to\infty$, which happens with probability
\begin{equation}
	\mathbb{P}(U^{1/w_k} > q) = \mathbb{P}(U>q^{w_k})=1-q^{w_k}
\end{equation}
matching Eq.~\eqref{eq:WSWRIncProb}.

We remark, the time complexity of weighted sampling without replacement is small compared to random graph sampling, e.g. the Assortativity Correction step. Nevertheless, the above sampling scheme is readily implemented by sorting the keys $\{y_i\}$ to find the $\mathcal{N}$ maximizers in $O(N \ln(N))$ time or by using the optimizations discussed in Ref.~\cite{efraimidisWeighted2006}.

The following proof of Lemma~\ref{lemma:inclusion_prob} is adapted from~\cite[Lemma 2.1]{rosenInclusion2000}, with minor modifications listed here. First, the number of nodes in loop motifs, $\mathcal{N}$, and weights $\{w_{k_i}\}$ are random in our setting while they are fixed in~\cite{rosenInclusion2000}. Second, \cite{rosenInclusion2000} implements weighted sampling without replacement using order statistics of exponential random variables, where only small values are accepted. However, the methods in~\cite[Lemma 2.1]{rosenInclusion2000} are suitable for the sampling scheme~\cite{efraimidisWeighted2006} which we will use in the following proof. Last, we do  not discuss convergence rates here. The reader can find this discussion in~\cite{rosenInclusion2000}.

\begin{proof}
	Let Eq.~\eqref{eq:Loop_Feasibility} hold. Then Eq.~\eqref{eq:q_identity} has a unique solution $q\in(0,1)$. Let $i$ be an arbitrary node with random degree $k_i$. Ultimately, we want to compute the limit
	\begin{equation}
		f(k) = \lim_{N\to\infty}\mathbb{P}_k(\text{node $i$ in a loop motif}),
		\label{eq:f_lim}
	\end{equation}
	if the limit exists, where $\mathbb{P}_k(\cdot):=\mathbb{P}(\cdot|k_i=k)$.
	
	Let $y^*$ be the minimum accepted key, i.e. the smallest random number selected 
	at the end of step 2. We will show that $y^*\rightarrow q$ as $N\rightarrow\infty$.
	For finite $N$, node $i$ is in a loop motif if its key is at least as large as $y^*$, corresponding to the event $\{y_i\ge y^*\}$. For any $a\in\RR$,
	\begin{equation}
			\mathbb{P}_k(y_i\ge y^*) \ge \mathbb{P}_k(y_i>a,y^*\le a)
			= \underbrace{\mathbb{P}_k(y_i>a)}_{I} - \underbrace{\mathbb{P}_k(y_i>a,y^*> a)}_{II}.
		\label{eq:B.1_prob_split}
	\end{equation}
	We choose $a$ to be slightly larger than the value $q$, i.e. $a=q+\delta$ for sufficiently small $\delta >0$. For this value of $a$, term $I$ will be close to the desired limit given by Eq.~\eqref{eq:WSWRIncProb},
	\begin{equation}
		I = \mathbb{P}_k(U_i^{1/w_k}>q+\delta)=1-(q+\delta)^{w_k},
		\label{eq:term_I}
	\end{equation}
	where $U_i$ is independent of $k_i$.

	We will now show the term $II$ in Eq.~\eqref{eq:B.1_prob_split} goes to $0$ as $N\to\infty$. It is sufficient to show that $y^*$ will not deviate far from its limit (i.e. $\mathbb{P}(y^*>q+\delta)\to0$ as $N\to\infty$) because
	\begin{equation}
			\mathbb{P}_k(y_i>a,y^*>a) = {\mathbb{P}(y_i>a,y^*>a,k_i=k)}/{p_k}\le \mathbb{P}(y^*>a)/p_k.
	\end{equation}
	The event $\{y^*>a\}$ is equivalent to stating that $\mathcal{N}$ nodes have keys larger than $a$, i.e.
	\begin{equation}
		\{y^*>a\}\equiv\left\{\sum\nolimits_{i=1}^N \bm{1}[y_i>a]\ge \mathcal{N}\right\},
		\label{eq:ys_rewrite}
	\end{equation}
	where $\bm{1}$ is the indicator function. The loop lengths are i.i.d. so the moments of their sums are given by
	\begin{align*}
		\EE[\mathcal{N}] &= \EE N_L\EE L = \lambda N\langle\phi_k\rangle,\\
		\EE\left[ \mathcal{N}^2\right] &= \sum_{n=1}^\infty P(N_L=n)\underbrace{\EE\left[\left(\sum\nolimits_{j=1}^{n}L_j\right)^2\right]}_{\text{Var}({\sum L_j})+\EE[\sum L_j]^2}
		=\lambda N\text{Var}L+(\lambda N+(\lambda N)^2)\langle\phi_k\rangle^2,\\
		\text{Var}\left(\mathcal{N}\right) &= \lambda N(\text{Var}L+\langle\phi_k\rangle^2).
	\end{align*}
	By Chebychev's inequality, the fraction of nodes in loop motifs will converge to its mean $\lambda \langle\phi_k\rangle$ in probability:
	\begin{equation}
		\mathbb{P}\left(\left| \frac{\mathcal{N}}{N}
		-\lambda \langle\phi_k\rangle\right|\ge \epsilon\right)\le \frac{\lambda (\text{Var}L+\langle\phi_k\rangle^2)}{\epsilon^2 N}.
		\label{eq:loop_count_sum}
	\end{equation}
	The keys $y_i$ are i.i.d., so by the Law of Large Numbers,
	\begin{align}
		\lim_{N\rightarrow\infty}\frac{1}{N}\sum_{j=1}^N \bm{1}[y_j>a] &= P(y_j>a) 
		= \sum_k p_k (1-a^{w_k}).
		\label{eq:drawn_nodes_conc}
	\end{align}
	The functions $1-a^{w_k}$ are decreasing on $[0,1]$, so
	\begin{equation}
		\sum_k p_k (1-(q+\delta)^{w_k})<\sum_k p_k (1-q^{w_k})=\lambda\langle\phi_k\rangle,
		\label{eq:key_sum}
	\end{equation}
	where the final equality is due to Eq.~\eqref{eq:q_identity}.  Eq.~\eqref{eq:key_sum} shows that
	\begin{equation}
		\label{eq:B9prime}
		\lim_{N\rightarrow\infty}\frac{1}{N}\sum_{j=1}^N \bm{1}[y_j>q+\delta]<
		\lambda \langle\phi_k\rangle =  \frac{\mathbb{E}[\mathcal{N}]}{N}.
	\end{equation}
	
	\noindent Furthermore, Eqs.~\eqref{eq:loop_count_sum} and~\eqref{eq:drawn_nodes_conc} tell us that the random variables $\mathcal{N}/N$ and $N^{-1}\sum_{j=1}^N \bm{1}[y_j>q+\delta]$ concentrate around their means. This concentration and the fact that the inequalities in Eqs.~\eqref{eq:ys_rewrite} and \eqref{eq:B9prime} are opposite imply that the probability of the event in Eq.~\eqref{eq:ys_rewrite} tends to zero as $N\rightarrow\infty$, i.e.
	\begin{equation}
		\lim_{N\rightarrow\infty}\mathbb{P}\left[\frac{1}{N}\sum\nolimits_{i=1}^N \bm{1}[y_i>q+\delta]\ge \frac{\mathcal{N}}{N}\right] = 0.
	\end{equation}
	
	\noindent Hence, term $II$ in Eq.~\eqref{eq:B.1_prob_split} tends to zero as $N\to\infty$.
	Thus, taking Eq.~\eqref{eq:B.1_prob_split} as $N\to\infty$ with $a=q+\delta$ gives 
	\begin{equation}
		\liminf_{N\to\infty} P_k(y_i\ge y^*)\ge 1-(q+\delta)^{w_k}.
		\label{eq:B.1_liminf}
	\end{equation}
	Taking the supremum of Eq.~\eqref{eq:B.1_liminf} over $\delta>0$ gives
	\begin{equation}
		\liminf_{N\to\infty} P_k(y_i\ge y^*)\ge 1-q^{w_k}.
		\label{eq:B.1_lower}
	\end{equation}
	
	The argument for the upper bound is similar by taking $a=q-\delta$ to bound $\mathbb{P}_k(y_i< y^*) = 1-\mathbb{P}_k(y_i\ge y^*)$ where
	\begin{equation}
		\mathbb{P}_k(y_i < y^*) \ge \mathbb{P}_k(y_i\le a, y^*>a). 
	\end{equation}
	Combined, Eq.~\eqref{eq:B.1_lower} and the upper bound give
	\begin{equation}
		\begin{split}
			1-q^{w_k}&\le\liminf_{N\to\infty} P_k(y_i\ge y^*)\le \limsup_{N\to\infty} P_k(y_i\ge y^*)\le1-q^{w_k}.
		\end{split}
	\end{equation}
	This guarantees that the limit in Eq.~\eqref{eq:f_lim} exists and $f(k) = \lim_{N\to\infty}P_k(y_i\ge y^*)=1-q^{w_k}$ as claimed.
\end{proof}
}

\subsection{Extra Loops are Rare}
\label{app:Rare_Loops}
{
Here we verify the sufficiency condition for extra loops to be rare in the Disjoint Loop Model. In particular, let $\{p_k\}$ and $\{\phi_k\}$ have finite second moments. Then we want to show the number of edge motifs attached to a random node not in a loop motif or a random loop motif has finite second moments. 

We start by considering a random node that is not in loop motif. The probability this node has degree $k$ is $\tp_{k,0}/\sum_{k'}\tp_{k',0}$ where $\tp_{k,0}=(1-f(k))p_k$. This has second moment
\begin{equation}
	\label{eq:k2pk0}
	\frac{\sum_k k^2(1-f(k))p_k}{\sum_{k'} (1-f(k'))p_{k'}}\le\frac{\langle k^2\rangle}{\sum_{k'}(1-f(k'))p_{k'}}
\end{equation}
where $0\le f(k)\le 1$. The right-hand side of Eq. \eqref{eq:k2pk0} is necessarily finite as long as the fraction of nodes not in loop motifs satisfies $\sum_{k'} p_{k',0}>0$.

Next, we consider a random loop motif. The number of edge motifs attached to a node in a loop motif is generated by the excess degree distribution $\tG^L(x)$. The number of edge motifs attached to a loop of length $\ell$ is generated by $[\tG^L(x)]^\ell$. By conditioning on loop length, the number of edge motifs attached to a loop motif is generated by $\Phi_0(\tG^L(x))$. See Table~\ref{tab:Gen_Funcs} for the definition of these generating functions. To calculate the second moment of the PMF generated by $\Phi_0(\tG^L(x))$ we use the following identity relating the second moment of a PMF $\{g_k\}$ to its probability generating function $\mathcal{G}(x)$:
\begin{equation}
	\frac{d}{dx}[x\mathcal{G}'(x)]_{x=1} = \sum k^2 g_k.
\end{equation}
Thus, the second moment of the number of edge motifs attached to a random loop motif is given by
\begin{equation}
	\Phi_0'(1)[\tG^L_x(1)+\tG^L_{xx}(1)] + \Phi_0''(1)[\tG^L_x(1)]^2. 
\end{equation}
These second derivatives can be bounded using the second moments of $\{p_k\}$ and $\{\phi_k\}$
\begin{align}
	\Phi_0''(1) &= \sum_k k(k-1)\phi_k,\\
		\tG^L_{xx}(1) &= \frac{\tG_{xxz}(1)}{\tG_z(1)} = \frac{\sum_k f(k+2)k(k-1)p_{k+2}}{\tG_z(1)}
		\le \frac{\langle k^2\rangle}{\tG_z(1)}.\label{eq:Gxxz}
\end{align}
The identity in Eq.~\eqref{eq:Gxxz} uses
\begin{equation}
	f(k+2)k(k-1)\le k(k-1) \le (k+2)^2.
\end{equation}
Eq.~\eqref{eq:Gxxz} is well-defined if the expected number of nodes in loops $\tG_z(1)$ is non-zero.
}

\subsection{Extra Triangle Counts}
\label{app:Triangles_DLM}
\newcommand\simL{\overset{L}{\sim}}

Here we compute the expected number of triangles in the Disjoint Loop Model that are not explicitly introduced as loop motifs. Ultimately, we express these counts in terms of the generating functions listed in Table~\ref{tab:Gen_Funcs}. We split this into cases by the motifs used to construct the triangle. Let $u\sim v$ and $u\simL v$ be the events that nodes $u$ and $v$ are joined by an edge in an edge motif and loop motif, respectively. Let $f(n)\simeq g(n)$ imply that $f(n)$ is asymptotically equal to $g(n)$, i.e. $f(n)/g(n)\to1$ as $n\to\infty$.

\textit{Case 1.} We first compute the expected number of triangles that are formed by three edge motifs. The total number of stubs for edge motifs is
\begin{equation}
N\sum\nolimits_k k \tp_{k,\ell} = N\tG_x(1,1).
\end{equation}
Each pair of stubs for edge motifs (i.e. after loop nodes are assigned) is equally likely to be connected by an edge. Let $u$, $v$, and $w$ be nodes that participate in $k_u$, $k_v$, and $k_w$ edge motifs, respectively. By the same reasoning used to reach Eq.~\eqref{eq:tri_spec_B1} they form a triangle with probability
\begin{equation}
	P(u\sim v\sim w\sim u) \simeq \frac{k_u k_v}{N\tG_x(1,1)}\frac{(k_v-1)k_w}{N\tG_x(1,1)}\frac{(k_w-1)(k_u-1)}{N\tG_x(1,1)}.
\label{eq:tr_Pr_B2}
\end{equation}
Then, by summing Eq.~\eqref{eq:tr_Pr_B2} over triples gives the expected number of triangles composed of only edge motifs
\begin{equation}
\binom{N}{3}\left(\frac{\sum_k k(k{-}1)\tp_{k,\ell}}{N\tG_x(1,1)}\right)^3 =\frac{[\tG^E_x(1,1)]^3}{6}.\label{eq:DLM_LoopCount1}
\end{equation}
as $N\to\infty$. Note stubs for edge motifs are combined pairwise at random as in the configuration model. In particular, one could derive Eq.~\eqref{eq:DLM_LoopCount1} from Eq.~\eqref{eq:exp_Tr_1_B1} where $\tG^E_x(1,1)=\tG_{xx}(1,1)/\tG_x(1,1)$ is the expected number of excess edge motifs a node reached by a random edge motif has.

\textit{Case 2.} Next, we consider the number of extra triangles that consist of two edges in a loop motif joined by a regular edge.
Let $u\simL w\simL v$ be a connected triple in a loop motif whose endpoints have $k_u$ and $k_v$ stubs for edge motifs. The probability these nodes form a triangle is then
\begin{equation}
P(u\sim v|u\simL w\simL v) \simeq \frac{k_u k_v}{N\tG_x(1,1)}.
\label{eq:appB2_C2_1Mid}
\end{equation}
Nodes are shuffled before they are assigned to loops. Thus, the degrees $k_u$ and $k_v$ of nodes $u$ and $v$ in a random triple $u\simL w\simL v$ are independent (ignoring finite sample space effects which can be ignored in the limit $N\to\infty$). In particular, $k_u$ and $k_v$ have PMF $\{q_k^L\}$, where $q_k^L$ is the probability a node in a loop motif has excess degree $k$, generated by $\tG^L(x)$. There is a unique triple $u\simL w\simL v$ for each node $w$ in a loop motif, and the expected number of nodes in loop motifs is $\lambda N \langle \phi_k\rangle$. Finally, the number of triangles of the form $u\simL w\simL v\sim u$ is obtained from Eq.~\eqref{eq:appB2_C2_1Mid} by taking the expectation of $k_u$ and $k_v$ and multiplying by the expected number of loop nodes. This is given by
\begin{equation}
\lambda N \langle \phi_k\rangle\frac{(\sum_{k}k \tq_k^L)^2}{N\tG_x(1,1)} = \lambda\Phi_0'(1) \frac{[\tG^L_x(1)]^2}{\tG_x(1,1)}.\label{eq:app_Tr_DLM_2.0}
\end{equation}
	By Eq.~\eqref{eq:q_identity_GF}, $\lambda\Phi_0'(1)$ may be replaced by $\tG_z(1)$. This cancels with the denominator of a single factor $\tG^L_x(1)=\tG_{xz}(1)/\tG_z(1)$ in Eq.~\eqref{eq:app_Tr_DLM_2.0}. The expected number of connected triples in loop motifs that are closed into a triangle by an edge motif is
	\begin{equation}
\lambda\Phi_0'(1) \frac{[\tG^L_x(1)]^2}{\tG_x(1,1)} = \tG^E_z(1)\tG^L_x(1).\label{eq:DLM_LoopCount2}
\end{equation}
Here $\tG^E_z(1)=\tG_{xz}(1)/\tG_x(1,1)$ is the probability a node reached by an edge motif is in a loop motif and $\tG^L_x(1)$ is the expected excess degree of a node reached by a loop motif.

\textit{Case 3.} Last, we consider unintended loops that are composed of an edge in a loop motif joined by two edge motifs. Let $u$, $v$ and $w$ be nodes that participate in $k_u$, $k_v$, and $k_w$ edge motifs, respectively. Furthermore, assume $u$ and $v$ are joined by an edge in a loop motif. Then, the probability they form a triangle is
\begin{equation}
	P(u\hspace{-2pt}\sim \hspace{-2pt}w\hspace{-2pt}\sim \hspace{-2pt}v|u\hspace{-2pt}\simL \hspace{-2pt}v) = P(u\sim w)P(w\sim v| u\sim w)\simeq  \frac{k_uk_w}{N\tG_x(1,1)} \frac{(k_w-1)k_v}{N\tG_x(1,1)}.
\label{eq:2path}
\end{equation}
Summing over nodes $w$ and edges $u\simL v$ gives the expected number of triangles
\begin{equation}
	\frac{\lambda N\langle\phi_k\rangle(\sum k \tq_k^L)^2 N (\sum k(k-1)\tp_{k,\ell})}{[N\tG_x(1,1)]^2}
	=\lambda\Phi_0'(1)\frac{[\tG_x^L(1)]^2 \tG_{xx}(1,1)}{[N\tG_x(1,1)]^2}.
\end{equation}
As in Case 2, we simplify this equation by replacing $\lambda \Phi_0'(1)$ with $\tG_z(1)$:
	\begin{equation}
\lambda\Phi_0'(1)\frac{[\tG_x^L(1)]^2 \tG_{xx}(1,1)}{[\tG_x(1,1)]^2} {=} \tG^E_z(1)\tG^L_x(1)\tG^E_x(1,1).\label{eq:DLM_LoopCount3}
\end{equation}
Here, $\tG^E_z(1)$ is the probability a node reached by an edge motif belongs to a loop motif, $\tG^L_x(1)$ is the expected excess degree of a node reached by a loop motif, and $\tG^E_x(1,1)$ is the expected number of excess edge motifs a node reached by an edge motif belongs to.

Thus, the expected number of unintended triangles given by the sum of Eqs.~\eqref{eq:DLM_LoopCount1}, \eqref{eq:DLM_LoopCount2},  and \eqref{eq:DLM_LoopCount3} tends to a finite number independent of $N$ as $N\rightarrow\infty$. Hence, the probability that a randomly picked node participates in an unintended triangle decays as $O(N^{-1})$ as $N\rightarrow\infty$.

\section{Minimum Cycle Basis}
\label{app:MCB}
{
Here we provide a brief overview of relevant topics about cycle spaces. We also verify that the lengths of cycles in a minimum cycle basis (MCB) are unique. For a more detailed review of algorithms for minimum cycle bases see~\cite{kavithaCycle2009}. Many terms here have equivalent definitions in chemistry. A good discussion of these terms is given in~\cite{bergerCounterexamples2004}.

\def\op{\oplus}

Let $G=(V,E)$ be a graph with $|E|=M$ edges and $|V|=N$ vertices. Here a cycle is a walk on the graph $G$ where only the first and last nodes are equal. A cycle of length $\ell$ can be represented by its edges $C = \{(x_1,x_2),(x_2,x_3),\hdots,(x_{\ell},x_1)\}\subseteq E$. The set $C$ can be encoded as a binary incidence vector in $\{0,1\}^M$ where element $i$ designates whether edge $i$ is in $C$. The operation considered on the set of cycles is the symmetric difference defined by $C_1\op C_2=(C_1\setminus C_2)\cup (C_2\setminus C_1)$. In terms of incidence vectors, the symmetric difference is equivalent to addition modulo 2 which can be implemented using the element-wise XOR operation. In this sense, the set of incidence vectors $\{0,1\}^M$ form a vector space over $GF(2)$ (where scalars are also $\{0,1\}$ with addition modulo 2). The set of cycles are not closed under the operation $\op$. See Fig.~\ref{fig:Cyc_Ex}. The cycle space is the vector subspace formed by the closure of cycles under $\op$. Elements of the cycle space are subgraphs where every vertex has even degree. As a vector subspace, the cycle space has a natural notion of basis defined below.

\begin{figure}[!tbp]
	\centering
	\includegraphics{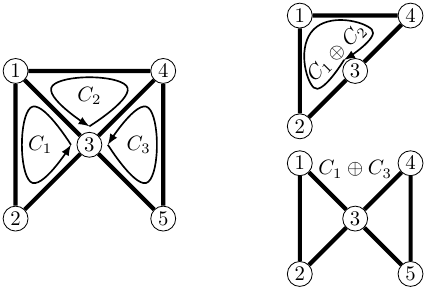}
	\caption{A graph composed of three overlapping cycles $C_1,C_2,C_3$ (left) and depictions of the symmetric differences $C_1\op C_2$ and $C_1\op C_3$ (right). $C_1\op C_2$ forms a cycle while $C_1\op C_3$ does not. The set $\{C_1,C_2,C_3\}$ forms a minimum cycle basis for this graph.}
	\label{fig:Cyc_Ex}
\end{figure}

\newcommand\CS{\mathcal{S}}

\begin{Def}        
	A collection of cycles $B_1,B_2,\hdots,B_k\subset E$ is called a cycle basis if each element $C$ of the cycle space can be uniquely represented as a linear combination
	\begin{equation}
		C = B_{i_1} \op B_{i_2} \op \hdots \op B_{i_\ell},\quad 1\le i_1\le i_2\le \hdots\le i_\ell\le k.
		\label{eq:Cycle_Exp}
	\end{equation}
\end{Def}

A cycle basis can be constructed starting with forming a spanning forest $T$. Consider an edge not contained in our spanning forest. The endpoints of this edge are contained in the same connected component, so there must be a unique path between them in $T$. By combining this edge with the unique path inside $T$ we get a cycle. These cycles are linearly independent as a vector space since they contain an edge that the others do not. A cycle basis constructed in this manner is called a fundamental cycle basis. From this, we find that the circuit rank or the size of the cycle basis is the number of edges not in $T$. This is given by 
\begin{equation}
	M - N + \#CC
\end{equation}
where $\#CC$ is the number of connected components in the graph. More theory on cycle spaces can be found in many texts on graph theory, e.g.~\cite{bollobasModern1998,diestelGraph2017}, or in the review paper~\cite{kavithaCycle2009}.

A graph may have many cycle bases. For instance, $\{C_1,C_2,C_3\}$ and $\{C_1,C_1\op C_2,C_3\}$ are cycle bases of the graph depicted in Fig.~\ref{fig:Cyc_Ex}. However, it is natural to say that a large loop is composed of smaller loops than the other way around. Thus, $\{C_1,C_2,C_3\}$ is a preferable choice of cycle basis in this example. More generally, we want a minimum cycle basis that satisfies the following definition.
\begin{Def}
	A cycle basis $\{B_1,\hdots,B_k\}$ is a minimum cycle basis if it minimizes
	\begin{equation}
		|B_1| + \hdots + |B_k|
	\end{equation}
	where $|B_i|$ is the number of edges in cycle $i$.
\end{Def}

It was first shown in~\cite{hortonPolynomialTime1987} that a minimum cycle basis can be computed in polynomial time. To compute a minimum cycle basis, we use the algorithm from~\cite{kavithaTilde2008} implemented in NetworkX~\cite{hagbergExploring2008}. For unweighted graphs, this algorithm has complexity $O(M^2 N)$. In our experience, the runtime of this algorithm improves by partitioning the graph into biconnected components. We note, in chemistry, the terms the \textit{smallest set of smallest rings} and \textit{minimum cycle basis} are often used interchangeably~\cite{bergerCounterexamples2004}. The smallest set of smallest rings is implemented in many chemistry packages such as the Chemistry Development Kit~\cite{mayEfficient2014,willighagenChemistry2017}.

\begin{figure}[!tbp]
	\centering
	\includegraphics{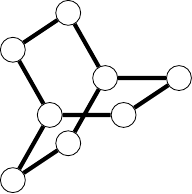}
	\caption{Graph that does not have a unique minimum cycle basis. The cycle space of this graph contains three cycles of length 6, any pair of which forms a minimum cycle basis. This structure is physically feasible and is a building block of the hexagonal diamond lattice}
	\label{fig:MCB_Unique}
\end{figure}

A minimum cycle basis is not necessarily unique. A canonical example of this is the graph depicted in Fig.~\ref{fig:MCB_Unique}. For this reason, minimum cycle bases are often avoided in studying chemical structures. However, the lengths of cycles is the same for every minimum cycle basis. We give a proof of this result in Lemma~\ref{Lem:Unique_Cyc_Len} below. 

In the work~\cite{gleissInterchangeability2000}, a more general result was shown that the set of \textit{relevant cycles} can be partitioned into equivalence classes, where the set of relevant cycles is the union of all minimum cycle bases introduced in~\cite{vismaraUnion1997}. These equivalence classes dictate which cycles can be interchanged for another, and all cycles in an equivalence class are of the same length.

\begin{figure}[!tbp]
	\centering
	\includegraphics[]{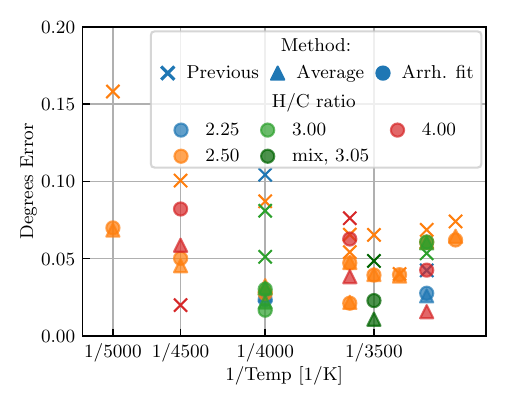}
	\includegraphics{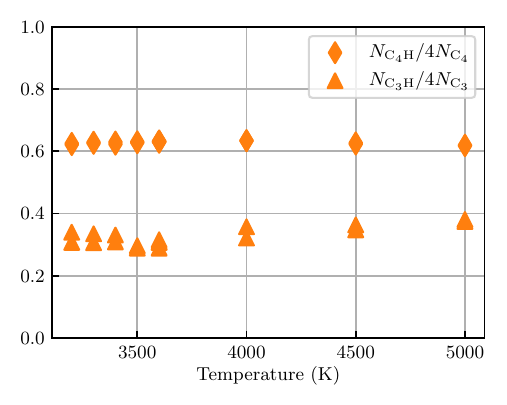}
	\caption{(left) Error given by Eq.~\eqref{eq:DegErr} of the degree distribution as estimated by the ten-reaction model from Previous Work~\cite{dufour-decieuxPredicting2023} as well as from the parameters $\phh$ and $\bp_3$ as introduced in Sec.~\ref{sub:skel_deg}. The parameters $\phh$ and $\bp_3$ are obtained by time-averaging MD data and by the equilibrium constants $\KHHE$ and $\KDoubleE$ fit to an Arrhenius law as discussed in Sec.~\ref{sub:deg_param_fit}. (right) Fraction of bonds starting from carbon atoms of degree 3 or 4 attaches to a hydrogen atom in \ch{C4H10} data. These fractions generalize the bond bias probability $\pch$ by conditioning on total degree. Stubs from carbon atoms bonded to 4 atoms are more likely to bond to hydrogen.}
	\label{fig:Deg_Validation}
\end{figure}

\begin{Lemma}
	Let $\{A_1,\hdots,A_k\},\{B_1,\hdots,B_k\}$ be two minimum cycle bases for a given graph $G$. Assume these cycle bases are given in ascending order, namely $|A_1|\le\hdots\le|A_k|$ and $|B_1|\le\hdots\le |B_k|$. Then, $|A_i|=|B_i|$ for $1\le i\le k$.
	\label{Lem:Unique_Cyc_Len}
\end{Lemma}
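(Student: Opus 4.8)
The plan is to recognize a cycle basis as a \emph{basis of a matroid} and then invoke the standard fact that, in any finite matroid with weights on the ground set, all minimum-weight bases have the same multiset of element weights.

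First I would set up the matroid. Recall that the incidence vectors of subgraphs in which every vertex has even degree form the cycle space $\mathcal{C}\subseteq\{0,1\}^{M}$, a vector space over $\mathrm{GF}(2)$ of dimension $M-N+\#CC$, and that every element of $\mathcal{C}$ is a sum modulo $2$ of incidence vectors of simple cycles (for instance, the fundamental cycles of a spanning forest span $\mathcal{C}$). Let $\mathcal{L}$ be the set of incidence vectors of all simple cycles of $G$. Linear independence over $\mathrm{GF}(2)$ defines the vector matroid on $\{0,1\}^{M}$, and its restriction to the ground set $\mathcal{L}$ is again a matroid $\mathcal{M}$; since $\mathcal{L}$ spans $\mathcal{C}$, the rank of $\mathcal{M}$ equals $\dim\mathcal{C}=M-N+\#CC$. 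A family of simple cycles $\{B_1,\dots,B_k\}$ satisfies the uniqueness-of-representation requirement in the cycle-basis definition above precisely when it is linearly independent and spans $\mathcal{C}$, i.e. precisely when it is a basis of $\mathcal{M}$. Equipping the ground set with the weight $w(C)=|C|$ (the number of edges of $C$), a minimum cycle basis is exactly a minimum-weight basis of $\mathcal{M}$.

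Next I would prove the weight lemma. For a threshold $t$ set
\begin{equation}
	\mathcal{L}_{\le t}=\{C\in\mathcal{L}: |C|\le t\},\qquad \rho(t)=\mathrm{rank}_{\mathcal{M}}(\mathcal{L}_{\le t}).
\end{equation}
I claim every minimum-weight basis $B$ of $\mathcal{M}$ satisfies $|B\cap\mathcal{L}_{\le t}|=\rho(t)$ for all $t$. The inequality $|B\cap\mathcal{L}_{\le t}|\le\rho(t)$ holds because $B\cap\mathcal{L}_{\le t}$ is an independent subset of $\mathcal{L}_{\le t}$. If it were strict, then, taking an independent $I\subseteq\mathcal{L}_{\le t}$ with $|I|=\rho(t)>|B\cap\mathcal{L}_{\le t}|$, the matroid augmentation axiom supplies $e\in I\setminus B$ (note $e\in\mathcal{L}_{\le t}$, so $e\notin B$) with $(B\cap\mathcal{L}_{\le t})\cup\{e\}$ independent. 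The unique circuit of $\mathcal{M}$ contained in $B\cup\{e\}$ is not contained in $(B\cap\mathcal{L}_{\le t})\cup\{e\}$, so it contains some $f\in B$ with $|f|>t\ge|e|$; then $(B\setminus\{f\})\cup\{e\}$ is a basis of strictly smaller total weight, contradicting minimality of $B$. Since the attained weights form a finite set $t_1<\dots<t_m$, the number of elements of weight exactly $t_j$ in any minimum-weight basis equals $\rho(t_j)-\rho(t_{j-1})$ (with $\rho$ of anything below $t_1$ equal to $0$), which does not depend on the basis; hence all minimum-weight bases share the same multiset of weights. Applying this to $\mathcal{M}$ and listing the cycle lengths of $\{A_1,\dots,A_k\}$ and $\{B_1,\dots,B_k\}$ in ascending order gives $|A_i|=|B_i|$ for every $1\le i\le k$.

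The step I expect to be the main obstacle is the bookkeeping in the reduction: confirming that the paper's cycle-basis definition coincides with ``basis of $\mathcal{M}$'', i.e. that a maximal linearly independent set of \emph{simple} cycles already spans the full cycle space and therefore has size equal to the circuit rank. Once that identification is secure, the remainder is the classical Kruskal-type exchange argument. Alternatively, one could cite the matroid-theoretic results on minimum-weight bases, or the stronger interchangeability statement of~\cite{gleissInterchangeability2000} together with the references already listed in this appendix.
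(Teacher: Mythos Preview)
Your proof is correct and takes a genuinely different route from the paper's. The paper gives a direct, elementary exchange argument inside the vector space over $\mathrm{GF}(2)$: assuming the sorted length vectors first differ at index $j$ with $|A_j|<|B_j|$, it observes that $\{A_1,\dots,A_j\}$ cannot lie in $\mathrm{span}\{B_1,\dots,B_{j-1}\}$, so some $A_{j_1}$ with $j_1\le j$ has an expansion in the $B$-basis that uses a $B_{i_\ell}$ with $i_\ell\ge j$; swapping $B_{i_\ell}$ for $A_{j_1}$ yields a cycle basis of strictly smaller total length, a contradiction. Your argument instead abstracts to the vector matroid on the set of simple cycles and proves the standard Kruskal-type lemma that every minimum-weight basis meets $\mathcal{L}_{\le t}$ in exactly $\rho(t)$ elements, via augmentation and the unique-circuit exchange. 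Both arguments are, at heart, single-element exchanges that produce a cheaper basis; the paper's version is shorter and stays within linear algebra, while yours is more general (it applies to any weighted matroid) and yields the slightly stronger statement that the number of basis cycles of each length is the rank increment $\rho(t_j)-\rho(t_{j-1})$, independent of the minimum cycle basis chosen. Your flagged ``main obstacle'' is a non-issue here: the fundamental cycles of any spanning forest are simple and span $\mathcal{C}$, so bases of $\mathcal{M}$ coincide with cycle bases in the paper's sense.
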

\begin{proof}
	Assume towards a contradiction that the vectors $(|A_1|,|A_2|,\hdots,|A_k|)$ and $(|B_1|,|B_2|,\hdots,|B_k|)$ are not equal. Let $j$ be the first index such that $|A_j|\ne |B_j|$, and without loss of generality assume $|A_j|<|B_j|$. The cycles $\{A_1,\hdots,A_j\}$ are linearly independent. Thus, these cycles are not contained in the span of $\{B_1,\hdots,B_{j-1}\}$. In particular, there exists $j_1\le j$ such that 
	\begin{equation}
		A_{j_1} = B_{i_1} \op B_{i_2} \op \hdots \op B_{i_\ell}, \quad i_1\le i_2\le\hdots\le i_\ell \label{eq:Ai_exp}
	\end{equation}
	where $i_1<i_2<\hdots<i_\ell$ and $i_\ell\ge j$. Next, we add
	$B_{i_1}\op\hdots\op B_{i_{\ell-1}}$ to both sides of Eq.~\eqref{eq:Ai_exp} to obtain
	\begin{align}
		A_{j_1} \op B_{i_1} \op \hdots \op B_{i_{\ell-1}} = B_{i_\ell}.
	\end{align}
	Here we used the fact that $C\op C=\emptyset$ for all $C\subset E$.
	Consider the set of cycles 
	\begin{equation}
		\mathcal{B} = \{B_1,\hdots,B_{i_\ell-1},A_{j_1},B_{i_\ell+1},\hdots,B_k\}.
	\end{equation}
	We have shown that $B_{j_2}$ is in the span of $\mathcal{B}$. This implies that $\{B_1,\hdots,B_k\}$ is contained in the span of $\mathcal{B}$, making it a cycle basis. We also have that $j_1\le j \le i_\ell$ so
	\begin{equation}
		|A_{j_1}| \le |A_j| < |B_j| \le |B_{i_\ell}|.
	\end{equation}
	This implies the sum of cycle lengths in $\mathcal{B}$ is smaller than $|B_1|+\hdots+|B_k|$. This contradicts the assumption that $\{B_1,\hdots,B_k\}$ is a minimum cycle basis proving our desired result.
\end{proof}
}

\section{Parameter Estimation}
\label{app:Params}

\subsection{Degree Distribution}
\label{app:Params_Deg}

\subsubsection{Error of the parametric model for the degree distribution}
\label{app:deg_error}

To measure error of the degree distribution we use
\begin{equation}
E(p_k) = \sum_{k=1}^4 \left|p_k - p_k^{\text{(MD)}}\right|
\label{eq:DegErr}
\end{equation}
where $p_k^{\text{(MD)}}$ is the time-averaged fraction of carbon atoms in MD data that are bonded to $k$ carbon atoms.
The error of the degree distribution is shown in Fig.~\ref{fig:Deg_Validation}. We find that the parametric model for the degree distribution introduced in Sec.~\ref{sub:skel_deg} improves upon previous work, especially for high temperature and low H/C ratio. 

We remark that this model does not capture all characteristics of the degree distribution for the global hydrocarbon network $\{\bar{p}_{k_{\rm C},k_{\rm H}}\}$. In particular, Fig.~\ref{fig:Deg_Validation} shows that bonds from degree 4 carbon atoms are more likely to lead to hydrogen than degree 3 carbon atoms. One cause of this is that many degree 3 carbon atoms are the result of double bonds, which is necessarily a carbon-carbon bond. We save these model refinements for future work. 

\subsubsection{Arrhenius fits for \texorpdfstring{$\KHHE$}{Khh} and \texorpdfstring{$\KDoubleE$}{Kc=c}}

\begin{figure}[!tbp]
\centering
\includegraphics{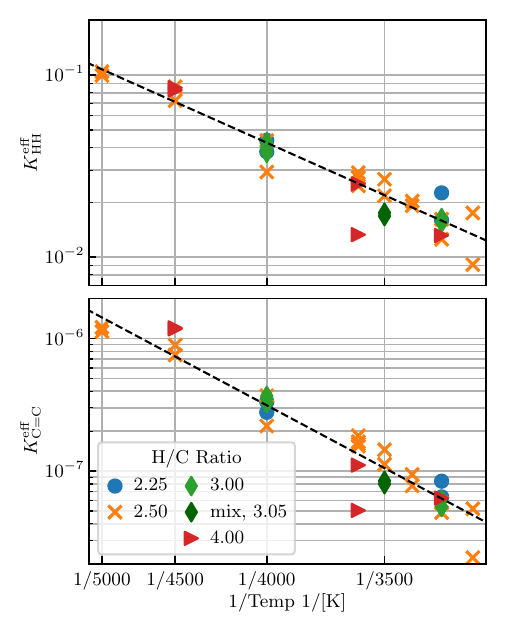}
\caption{Arrhenius fits (dashed, black lines) of the equilibrium constants $\KHHE{=}A_{\rm HH} e^{-C_{\rm HH}/T}$ (top) and $\KDoubleE {=} A_{\rm C=C} e^{-C_{\rm C=C}/T}$ (bottom) compared to estimates from time-averaged MD data (markers), where $A_{\rm HH} {=} 4.4056$, $C_{\rm HH} {=} 18705$, $A_{\rm C=C}{=}6.46{\times} 10^{-4}$, and $C_{\rm C=C} {=} 30526$.}
\label{fig:deg_arrh_fit}
\end{figure}

Recall, the effective equilibrium constants $\KHHE$ and $\KDoubleE$ are given by
\begin{align}
\KHHE &= \frac{\left(4\tfrac{\Nc}{\Nh} - (1-\phh)\right)\phh}{(1-\phh)^2}\\
\KDoubleE &= \frac{\bp_3\Nh\phh}{\bp_4(12.011\Nc+\Nh)T}.
\end{align}
We show the Arrhenius fits for $\KHHE$ and $\KDoubleE$ in Fig.~\ref{fig:deg_arrh_fit}. 
These plots are equivalent to the plots of $\phh$ and $\bp_3$. Specifically, $\KHHE$ and $\KDoubleE$ are estimated from the time-averaged values of $\phh$ and $\bp_3$, and the fit values of $\phh$ and $\bp_3$ are obtained from the Arrhenius fits of $\KHHE$ and $\KDoubleE$. 

\subsection{Loop Rates}
{\label{app:Params_Loop}
In this section, we fit loop counts to MD data. More specifically, we fit the rates at which loops of length $k$ appear per carbon atom, $\lambda \phi_k$. Given the loop rate distribution $\{\lambda\phi_k\}$, the loop rate per carbon atom and loop length distribution are given by $\lambda=\sum_k\phi_k$ and $\phi_k=\lambda\phi_k/\sum_{k'}\lambda\phi_{k'}$, respectively.
The parameters $\lambda$ and $\{\phi_k\}$ can then be used to sample from the Disjoint Loop Model. As outlined previously, we introduce a local reaction for each loop rate $\lambda\phi_k$. For each reaction, we fit the associated effective equilibrium constant to an Arrhenius law. 

\begin{figure}[!tbp]
	\includegraphics{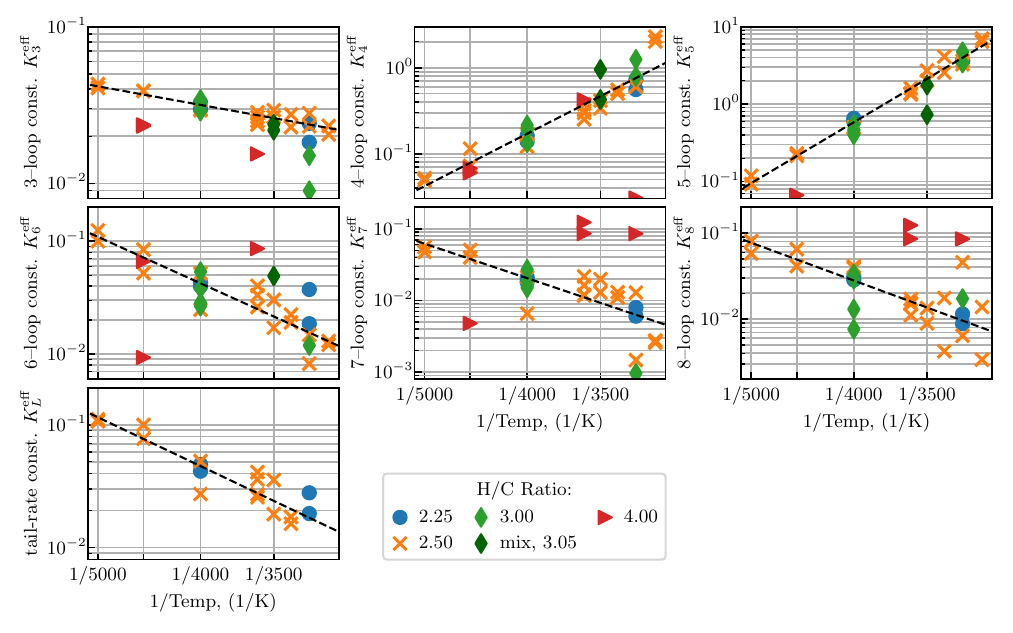}
	\caption{Arrhenius fits for effective equilibrium constants $K_\ell^\text{eff}$, $\ell=3,4,\hdots,8,$ and $K_L^\text{eff}$. For equilibrium constants of the form $K_\ell^\text{eff}$, all \ch{C4H10} data was used as training data. For $K_L^\text{eff}$, the \ch{C4H10} samples at 3200K and 3300K were omitted from training because there existed values $\phi_\ell=0$ for $8<\ell\le 15$.}
	\label{fig:loop_RR}
\end{figure}

We organize our reactions as follows. A set of local reactions is outlined for small cycles, specifically those of length $k=3,4,5$. For loops of length $k=6,7,\hdots$ we use a ring expansion reaction. A ring expansion reaction is a reaction that produces a larger ring from a smaller one. For the reactions discussed in this section, we do not estimate direct counts of structures in molecules. Instead, these reactions are used to develop approximate relationships between the parameters $\phh,\bp_3,\lambda,\{\phi_k\}$. To fully encapsulate all details about ring formation one would need to consider several factors, e.g. the presence of double bonds/radicals, fused rings, and conjugated systems.

We start by considering the formation of rings of length 3 from double bonds
\begin{equation}
	\chemfig{=[1]-[7]-R}\quad\ch{<=>}\quad \chemfig{((-[5]\charge{225=\.}{})-[7]\charge{315=\.}{})-[2]R}\quad\ch{<=>}\quad \chemfig{([:-150]*3(---))-[2]R}
	\label{reac:cyc_prop}
\end{equation}
where R is arbitrary. The intermediate state shows R moving its bond from the right atom to the central atom. This results in a diradical structure that is high energy and short-lived. It is more stable for this diradical to close into a ring or return to a double bond. Reactions of this form have been studied both theoretically~\cite{doubledayDirect1996} and experimentally~\cite{pedersenValidity1994}, although this does not necessarily generalize to the conditions studied here. The number of loops of length 3 is simply $\Nc\lambda\phi_3$. The main component in the reactant is the double bond. Thus, we estimate the number of reactants as the number of double bonds. Assuming most degree 3 atoms are in double bonds, this is approximately $\Nc \bp_3/2$. This motivates the effective equilibrium constant
\begin{equation}
	K_3^\text{eff} = \frac{\lambda \phi_3}{\bp_3}.
	\label{eq:cyc_prop_RR}
\end{equation}
Carbon atoms of degree $3$ may be a consequence of radicals. If they are in a chain of length 3 then they have the potential to close into a cycle. This may contribute to the equilibrium constant $K_3^\text{eff}$ as well.

For loops of length $4$ and $5$ we consider the reactions
\begin{align}
	\chemfig{[:-45]*4(=-[,,,,draw=none]=-)} \quad&\ch{<=>}\quad \chemfig{[:-45]*4(-=--)}\quad,
	\label{reac:cyc_but}\\
	\chemfig{[:-360/10]*5(=-[,,,,draw=none]=--)} \quad&\ch{<=>}\quad
	\chemfig{[:-360/10]*5(-=---)}\quad.
	\label{reac:cyc_pent}
\end{align}
More concretely, loops of length 4 (5) form by closing a carbon chain of length 4 (5) whose endpoints have degree 3. The reactants in Reaction~\eqref{reac:cyc_but} and~\eqref{reac:cyc_pent} are carbon chains with two double bonds. We estimate the number of reactants by $\Nc (\bp_3)^2$ resulting in the effective equilibrium constantss
\begin{align}
	K_4^\text{eff} &= \frac{\lambda \phi_4}{(\bp_3)^2},\label{eq:cyc_but_RR}\\
	K_5^\text{eff} &= \frac{\lambda\phi_5}{(\bp_3)^2}.
\end{align}
Alternatively, a ring of length $4$ can be obtained from two disconnected double bonds, e.g. the formation of cyclobutane from two ethylene molecules. This would result in a similar equilibrium constant to Eq.~\eqref{eq:cyc_but_RR} with an additional term corresponding to volume.

\begin{figure}[!tbp]
	\centering
	\includegraphics{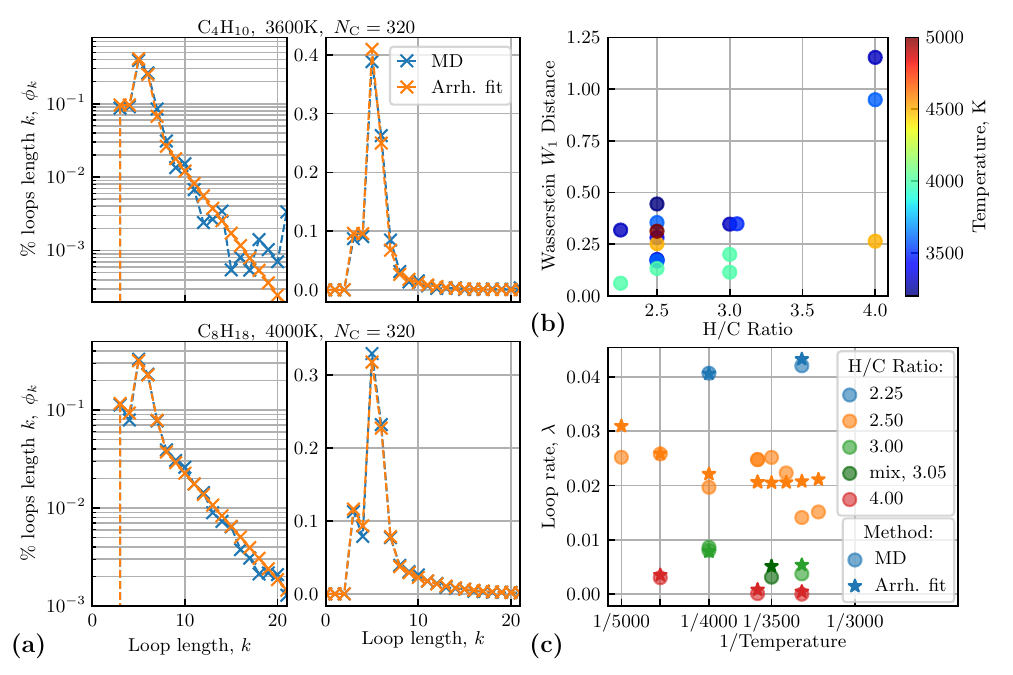}
	
	\caption{(a) Examples of the loop length distribution $\{\phi_k\}$ and its estimated value using the reaction rates $K_\ell^\text{eff}$ and $K_L^\text{eff}$. (b) Wasserstein distance of the estimated loop length distribution obtained by Arrhenius fits to the loop length distribution time-averaged from MD data. (c) Comparison of the loop rate per carbon atom $\lambda$ obtained from Arrhenius fit (stars) vs MD data (circles).}
	\label{fig:loop_reax_results}
\end{figure}

Unlike short chains, the endpoints of long carbon chains are not automatically close together. 
Thus, we consider a reaction that incrementally increases the size of an existing ring:
\begin{equation}
	\chemnameinit{\chemfig{[:-30]*6(------)}}
	\schemestart
	\chemname{\chemfig{[:-54]*5(--[,,,,dashed]---)}}{\text{$\ell$ ring}}
	\+
	\chemname{\chemfig{H-[1]-[7]H}}{}
	\ch{<=>}
	\chemname{\chemfig{[:-30]*6(--[,,,,dashed]----)}}{\text{$\ell+1$ ring}}
	\+
	\chemname{\chemfig{H_2}}{}.
	\schemestop
	\chemnameinit{}
	\label{reac:ring_exp}
\end{equation}
In more detail, one of the hydrogen atoms may break off of the \ch{H-C-H} triple resulting in a radical. This radical can then collide with the loop of length $\ell$ breaking it into a chain of length $\ell+1$. The ends of this chain are close together and one end of the chain is a radical. This chain can then close and break the other hydrogen off of the $\ch{H-C-H}$ reactant. This results in a loop of length $\ell+1$ and 2 hydrogen atoms that are free to bond together.

We may count $\ch{H-C-H}$ triples directly using the degree distribution of the global hydrocarbon network. In particular, we count $H-C-H$ triples by the central carbon atom. If a carbon atom is connected to $k_{\ch{H}}$ hydrogen atoms, then it participates in $\binom{k_{\ch{H}}}{2}$ of these triples. Thus, the rate of \ch{H-C-H} triples per carbon atom is
\begin{equation}
	r_{\ch{HCH}} = \sum_{k_{\ch{C}},k_{\ch{H}}} \binom{k_{\ch{H}}}{2}\bp_{k_{\ch{C}},k_{\ch{H}}} = \frac{1}{2}\frac{\partial^2 \bar{G}}{\partial y^2}(1,1).
\end{equation}
The effective reaction corresponding to Reaction~\eqref{reac:ring_exp} is then
\begin{equation}
	K_\ell^\text{eff} = \frac{\lambda \phi_\ell}{\lambda \phi_{\ell-1}} \frac{\Nh \phh}{\Nc r_{\ch{HCH}}},
\end{equation}
When $\bG$ is given by Eq.~\eqref{eq:DegDistGlobal}, we get that $r_{\ch{HCH}} = 3\pch^2(2-\bp_3)$. We use this value for sampling $r_{\ch{HCH}}$.

For small $\ell$, we fit $K_\ell^\text{eff}$ to an Arrhenius law $A_\ell e^{-C_\ell/T}$.
For large $\ell$, we conjecture that $K_\ell^\text{eff}$ converges to a constant $K_\ell^\text{eff}\to K_L^\text{eff}$ as $\ell\to\infty$. This assumption accounts for noisy data since large loops are rare. In practice, we assume $K_\ell^\text{eff} = K_L^\text{eff} = A_L e^{-C_L/T}$ for $\ell> \ell_{m}$. Then, for $\ell> \ell_{m}$ the ratio $\phi_\ell/\phi_{\ell-1}$ is constant for fixed temperature and H/C ratio. This suggests that the tail of $\{\phi_k\}$ is geometric. In particular we fit $\phi_\ell = \alpha \rho^\ell$ ($\ln \phi_\ell = \ln \alpha + \ell \ln \rho$) over the interval $\ell_m\le \ell \le \ell_M$ using linear least squares. Then, we fit the following to an Arrhenius law
\begin{equation}
	K_L^\text{eff} = \rho \frac{\Nh \phh}{\Nc r_{\ch{HCH}}}.
\end{equation}
The values $\ell_m$ and $\ell_M$ should be chosen such that $K_\ell^\text{eff}$ is roughly constant but $\rho$ is not too noisy.
In our case, we take $\ell_m = 8$ and $\ell_M = 15$.
This choice of geometric tail makes the generating function of $\{\phi_k\}$ a rational function
\begin{equation}
		\Phi_0(x) = \phi_3 x^3 {+} \phi_4 x^4{+}\hdots {+}\phi_{\ell_m}x^{\ell_m} \sum\nolimits_{j=0}^\infty (\rho x)^j
		= \phi_3 x^3 + \phi_4 x^4+\hdots + \frac{\phi_{\ell_m}x^{\ell_m}}{1-\rho x}.
	\label{eq:phi_gen}
\end{equation}
It is necessary that $\rho<1$ for $\{\phi_k\}$ to be a probability mass function. This makes Eq.~\eqref{eq:phi_gen} well defined over the disk $|x|< 1/\rho$.

The Arrhenius fits for the equilibrium constants $K_\ell^\text{eff}$ for $ \ell =3,4,\hdots,8$ and $K_L^\text{eff}$ are shown in Fig.~\ref{fig:loop_RR}. The coefficients for the Arrhenius fits are listed in {Table~\ref{tab:Arrh_Param}.} In our MD data, we find that the distribution $\{\phi_k\}$ is bimodal with peaks at $k=3$ and $k=5$. Loops of length $3$ are easy to form because two neighbors of an atom are automatically physically close. However, short loops experience ring strain (i.e. the bond angles are far from $109.5^\circ$). Thus, while loops of length $5$ and $6$ are larger, they have more stable bond angles. For \ch{C4H10} data loops of length 5 are more common than loops of length 3 except at the highest temperature (5000K). Two sample distributions of $\{\phi_k\}$ are shown in Fig.~\ref{fig:loop_reax_results}(a).

The error of the loop length distribution $\{\phi_k\}$ is depicted in Fig.~\ref{fig:loop_reax_results}(b). For the most part, the prediction of $\{\phi_k\}$ is better for low H/C ratio and high temperature. High temperatures are preferable as they reach equilibrium faster. A low H/C ratio is easier to fit to as loops are more common. In Fig.~\ref{fig:loop_reax_results}(c) we compare the loop rate per carbon atom to MD data.
}

\section{Supplementary Figures}
\label{app:extra_figs}
Here we include additional figures for the random graph models to predict the MD data. In Table~\ref{tab:summary+giant} we list summary statistics for the MD data considered here. In Figs.~\ref{fig:giant_hists_C4,C8} and~\ref{fig:giant_hists_C,C2,mix} we compare the size distribution of the giant connected component from Proposed Model to MD data. In Figs.~\ref{fig:small_dist_C,C2,mix} and~\ref{fig:small_dist_C4,C8} we compare the small molecule size distribution from Proposed Model, Model 2, and Previous Work~\cite{dufour-decieuxPredicting2023} to MD data.

\begin{landscape}
\begin{table*}
	\centering
	\setlength{\tabcolsep}{1.72pt}
	\footnotesize
	\begin{tabular}{|c|c|c|c||c|c|c|c|c|c|c|c|c|c||c|c|c|c|c|}\hline
		\multicolumn{4}{|c||}{Dataset} & \multicolumn{10}{c||}{Summary Data} & \multicolumn{5}{c|}{Largest Molecule, \#C}\\
		\hline
		\multirowcell{2}{Initial\\Composition} & \multirow{2}{*}{Temp.} & \multirow{2}{*}{$\Nc$} & \multirow{2}{*}{$\Nh$} &
		\multicolumn{2}{c|}{$\phh$} & \multicolumn{2}{c|}{$\bp_3$} & \multicolumn{2}{c|}{$\lambda$} & 
		\multicolumn{2}{c|}{$\langle\phi_k\rangle$} & \multicolumn{2}{c||}{$r$ (assort. coeff.)} &
		\multicolumn{2}{c|}{Previous Work~\cite{dufour-decieuxPredicting2023}} & \multicolumn{2}{c|}{Current Work} & \multirowcell{2}{MD}\\
		\cline{5-18} & & & & MD & Fit & MD & Fit & MD & Fit & MD & Fit & Global & Skeleton & Sampled & Gen. Func. & Sampled & Gen. Func. & \\
		\hline\hline
		\ch{C4H10} & 3200K & {256} & 640 & 0.020 & 0.020 & 0.032 & 0.042 & 0.015 & 0.021 & 5.789 & 5.575 & -0.60 & -0.12 & 109.6$\pm$35.3 & 124.0 & 64.3$\pm$26.8 & 34.4 & 45.0$\pm$17.2\\
		\ch{C4H10} & 3300K & {256} & 640 & 0.022 & 0.023 & 0.044 & 0.049 & 0.014 & 0.021 & 5.505 & 5.560 & -0.61 & -0.05 & 110.8$\pm$35.4 & 125.3 & 64.2$\pm$26.6 & 38.3 & 46.3$\pm$13.0\\
		\ch{C4H10} & 3400K & {256} & 640 & 0.030 & 0.027 & 0.054 & 0.056 & 0.022 & 0.021 & 5.839 & 5.543 & -0.60 & -0.04 & 112.0$\pm$35.8 & 126.7 & 65.1$\pm$26.7 & 41.4 & 59.1$\pm$23.8\\
		\ch{C4H10} & 3500K & {256} & 640 & 0.035 & 0.031 & 0.068 & 0.064 & 0.025 & 0.021 & 5.708 & 5.523 & -0.59 & -0.02 & 113.7$\pm$35.7 & 128.4 & 67.2$\pm$27.4 & 43.8 & 64.2$\pm$20.5\\
		\ch{C4H10} & 3600K & {256} & 640 & 0.041 & 0.036 & 0.083 & 0.073 & 0.025 & 0.021 & 5.892 & 5.500 & -0.59 & \phantom{-}0.00 & 115.7$\pm$35.9 & 130.8 & 67.3$\pm$27.5 & 45.6 & 69.6$\pm$22.6\\
		\ch{C4H10} & 3600K & \textbf{320} & 800 & 0.038 & 0.036 & 0.080 & 0.073 & 0.025 & 0.021 & 5.708 & 5.500 & -0.59 & -0.00 & 145.9$\pm$42.9 & 163.5 & 79.2$\pm$33.0 & 57.1 & 71.6$\pm$26.6\\
		\ch{C4H10} & 4000K & {256} & 640 & 0.050 & 0.056 & 0.118 & 0.115 & 0.020 & 0.022 & 5.359 & 5.387 & -0.58 & -0.00 & 127.3$\pm$35.7 & 142.3 & 66.0$\pm$27.1 & 47.2 & 52.1$\pm$19.2\\
		\ch{C4H10} & 4500K & {256} & 640 & 0.094 & 0.084 & 0.186 & 0.184 & 0.026 & 0.026 & 5.042 & 5.220 & -0.53 & \phantom{-}0.01 & 146.5$\pm$33.4 & 159.8 & 64.1$\pm$27.0 & 36.0 & 56.2$\pm$20.4\\
		\ch{C4H10} & 5000K & {256} & 640 & 0.113 & 0.115 & 0.232 & 0.267 & 0.025 & 0.031 & 4.817 & 5.047 & -0.51 & \phantom{-}0.02 & 165.7$\pm$29.2 & 176.2 & 57.3$\pm$24.5 & 9.0 & 51.9$\pm$19.1\\
		\hline
		\ch{C8H18} & 3300K & {288} & 648 & 0.023 & 0.018 & 0.063 & 0.066 & 0.042 & 0.043 & 6.768 & 6.068 & -0.55 & \phantom{-}0.00 & 206.0$\pm$21.9 & 211.8 & 126.6$\pm$39.6 & 141.4 & 135.0$\pm$36.3\\
		\ch{C8H18} & 4000K & \textbf{320} & 720 & 0.045 & 0.045 & 0.146 & 0.149 & 0.041 & 0.040 & 6.266 & 6.009 & -0.54 & -0.01 & 240.8$\pm$20.7 & 245.8 & 134.8$\pm$42.3 & 149.4 & 113.1$\pm$39.3\\
		\hline
		\ch{C2H6} & 3300K & {250} & 750 & 0.039 & 0.038 & 0.023 & 0.026 & 0.004 & 0.005 & 5.040 & 5.038 & -0.69 & -0.01 & 23.6$\pm$10.2 & 0.0 & 19.6$\pm$7.4 & 0.0 & 18.6$\pm$5.7\\
		\ch{C2H6} & 4000K & {250} & 750 & 0.080 & 0.083 & 0.080 & 0.070 & 0.009 & 0.008 & 4.591 & 4.700 & -0.63 & \phantom{-}0.00 & 31.7$\pm$14.4 & 0.0 & 24.2$\pm$9.7 & 0.0 & 21.3$\pm$6.8\\
		\ch{C2H6} & 4000K & \textbf{320} & 960 & 0.083 & 0.083 & 0.079 & 0.070 & 0.008 & 0.008 & 4.500 & 4.700 & -0.63 & \phantom{-}0.01 & 36.0$\pm$16.5 & 0.0 & 26.7$\pm$10.8 & 0.0 & 22.1$\pm$6.6\\
		\hline
		\ch{CH4} & 3600K & {160} & 640 & 0.120 & 0.135 & 0.009 & 0.014 & 0.000 & 0.001 & 3.154 & 4.103 & -0.76 & -0.00 & 5.9$\pm$1.7 & 0.0 & 5.8$\pm$1.5 & 0.0 & 5.4$\pm$1.3\\
		\ch{CH4} & 4500K & {160} & 640 & 0.225 & 0.208 & 0.087 & 0.060 & 0.003 & 0.004 & 3.525 & 3.784 & -0.57 & \phantom{-}0.03 & 9.5$\pm$3.2 & 0.0 & 8.4$\pm$2.6 & 0.0 & 8.9$\pm$2.4\\
		\hline
		\ch{mix} & 3500K & {240} & 732 & 0.044 & 0.052 & 0.031 & 0.034 & 0.003 & 0.005 & 4.553 & 4.901 & -0.69 & -0.00 & 22.4$\pm$9.6 & 0.0 & 18.9$\pm$7.1 & 0.0 & 17.5$\pm$4.9\\
		\hline
	\end{tabular}
	\caption{Summary analysis of ReaxFF MD data and random graph models. Initial conditions are organized by H/C ratio, temperature, and system size (pressure fixed at 40.5 GPa). The parameters for Proposed Model, $\phh$, $\bp_3$, $\lambda$, and $\langle \phi_k\rangle$, are obtained by (1) time-averaging two MD simulations, ``MD'', and (2) equilibrium constants fit to Arrhenius laws as described in Sec.~\ref{Sec:Parameter_Fit}, ``Fit''. The mean loop length $\langle\phi_k\rangle$ is listed to provide information about the loop length distribution $\{\phi_k\}$. The degree assortativity coefficient $r$, Eq.~\eqref{eq:assort_coeff}, is shown for both the global hydrocarbon network and carbon skeleton. The global hydrocarbon network is disassortative by degree and the carbon skeleton lacks assortative mixing by degree. The number of carbon atoms in the largest molecule in MD data is compared to the size of the largest molecule predicted by Previous Work~\cite{dufour-decieuxPredicting2023} and Proposed Model. The mean size of the largest molecule and one standard deviation is given for random graph sampling and MD data. The ten-reaction model from Previous Work~\cite{dufour-decieuxPredicting2023} was sampled 100,000 times and Proposed Model was sampled 50 times over 20,000 iterations of the rewiring algorithm~\ref{alg:rewire} over intervals of 50 rewiring steps at equilibrium. The size of the largest molecule predicted by generating functions is given by $S(1-H(1))$. For Previous Work~\cite{dufour-decieuxPredicting2023} $H=H_0$ is given by the configuration model~\cite{newmanRandom2001} and for Proposed Model $H$ is computed using Eq.~\eqref{eq:HSolveAssort} in Sec.~\ref{subsec:rewire}. If the generating function formalism for Previous Work~\cite{dufour-decieuxPredicting2023} or Proposed Model does not predict a giant connected component then the predicted fraction of nodes in the largest molecule is $S=0$.
	}
	\label{tab:summary+giant}
\end{table*}
\end{landscape}

\begin{figure}[!tbp]
\centering
\includegraphics[width=\textwidth]{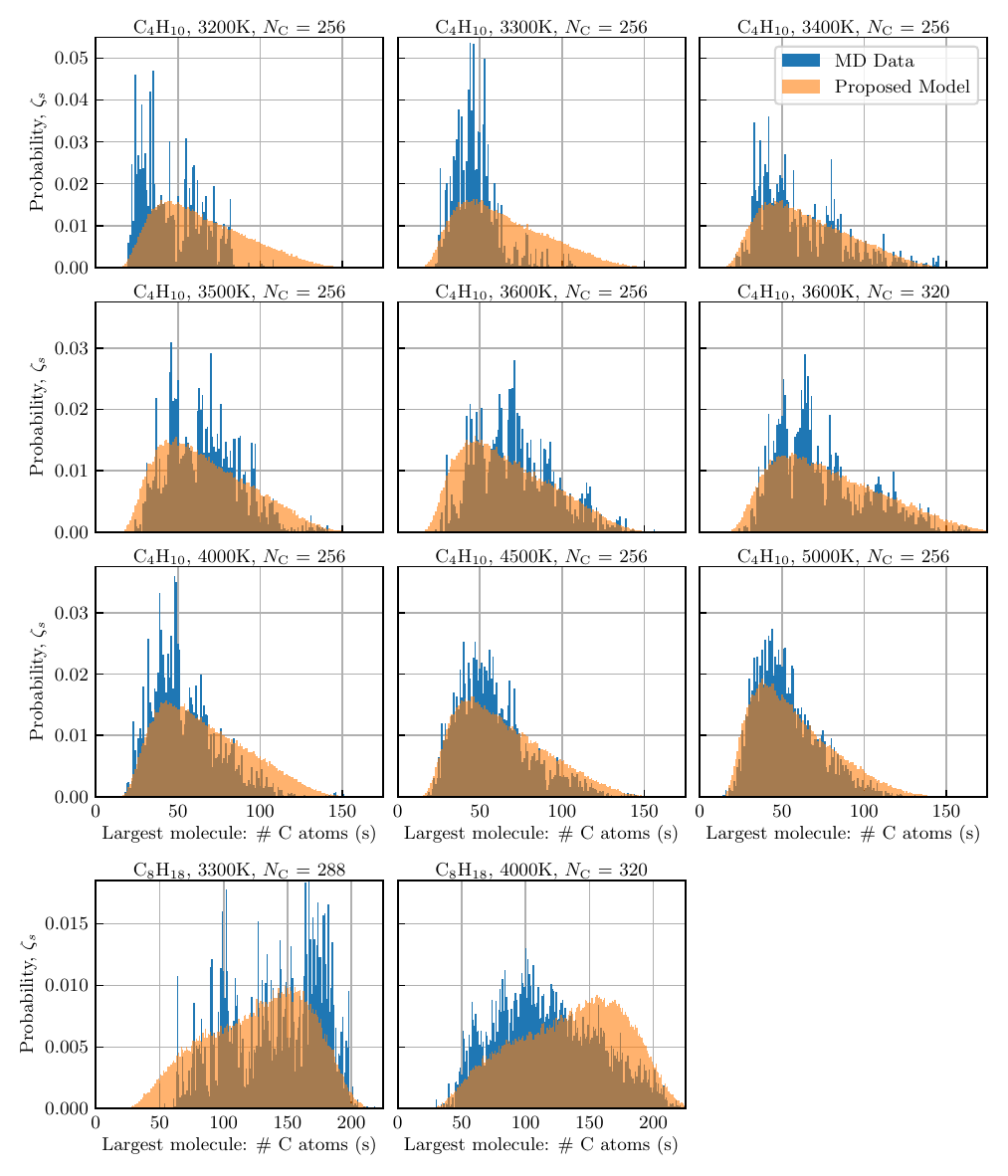}
\caption{Size distribution of the largest molecule from MD data compared to Proposed Model (via random graph sampling) with initial molecular composition \ch{C4H10} and \ch{C8H18}. Under these conditions Proposed Model predicts a giant connected component.}
\label{fig:giant_hists_C4,C8}
\end{figure}

\begin{figure}[!tbp]
\centering
\includegraphics[width=.9\textwidth]{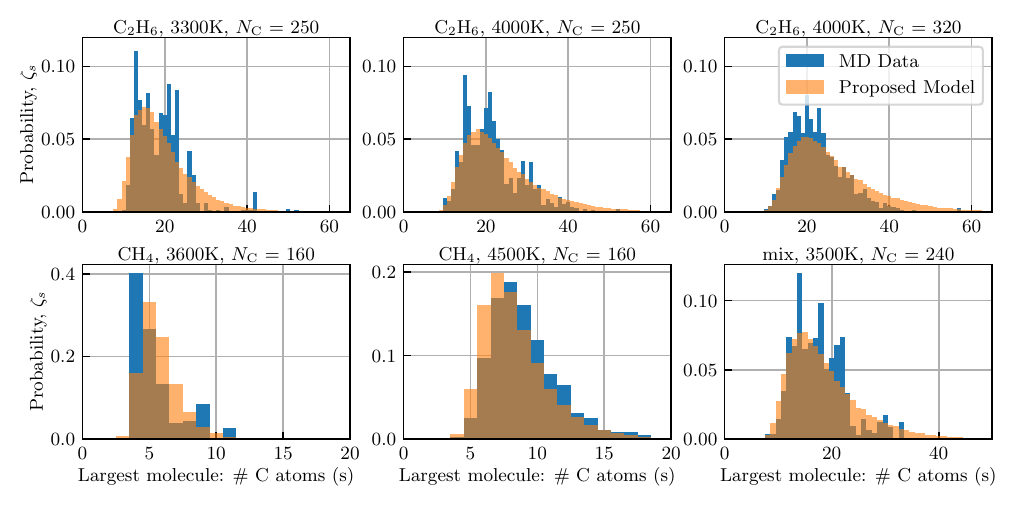}
\caption{Size distribution of the largest molecule from MD data compared to Proposed Model (via random graph sampling) with initial molecular composition \ch{CH4}, \ch{C2H6}, and mix (H/C ratio 3.05). Under these conditions Proposed Model does not predict a giant connected component.}
\label{fig:giant_hists_C,C2,mix}
\end{figure}

\begin{figure}[!tbp]
\centering
\includegraphics[width=.9\textwidth]{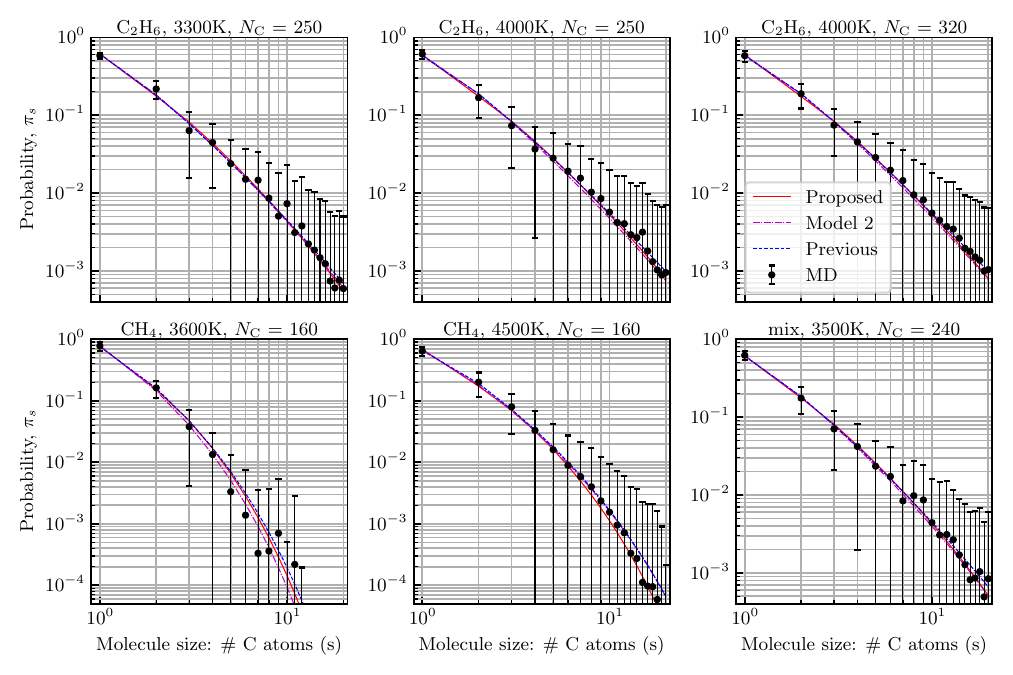}
\caption{Small molecule size distribution from MD data compared to Model 2, Proposed Model with initial molecular composition \ch{CH4}, \ch{C2H6}, and mix (H/C ratio 3.05). The small molecule size distribution for all random graph models are obtained via generating functions.}
\label{fig:small_dist_C,C2,mix}
\end{figure}

\begin{figure}[!tbp]
\centering
\includegraphics[width=\textwidth]{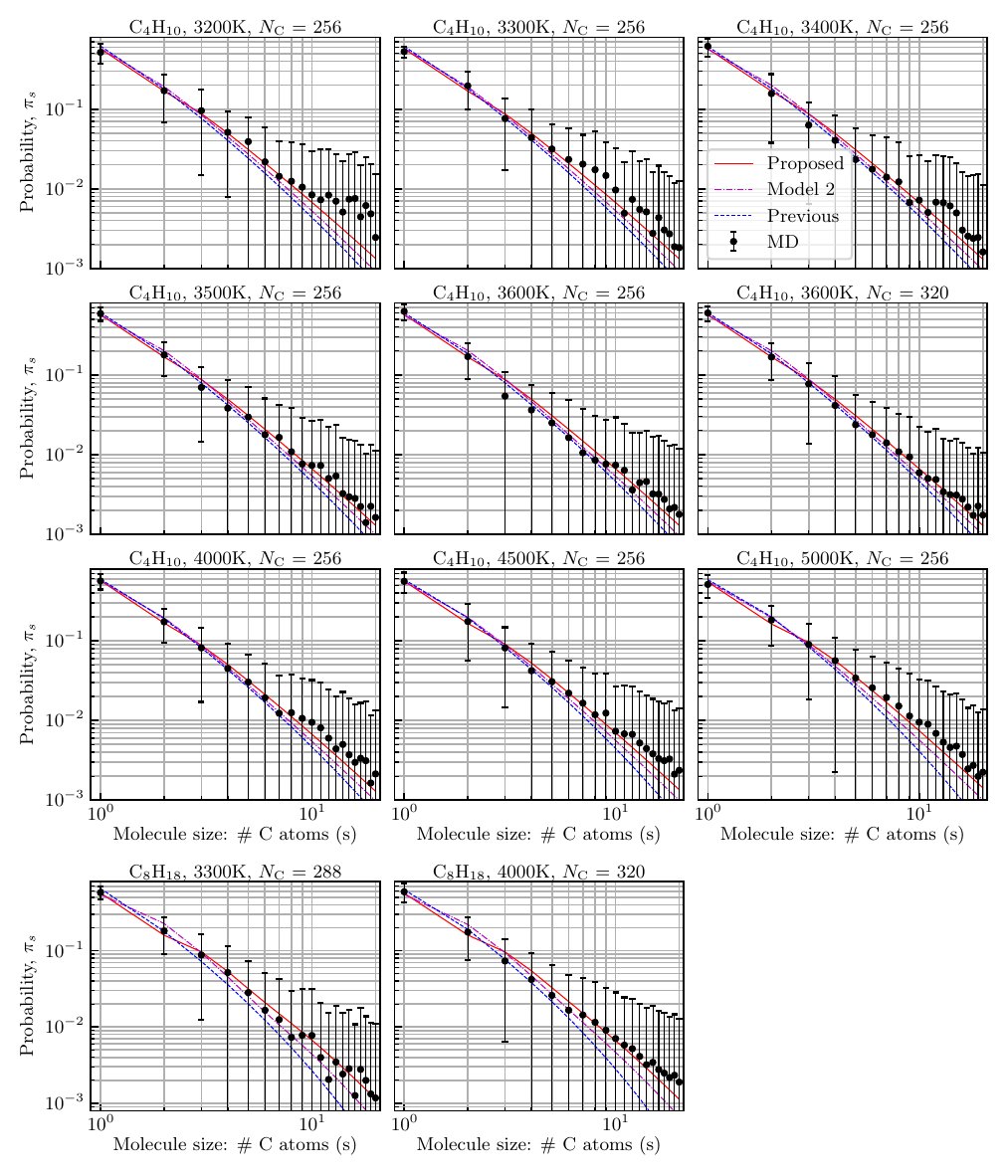}
\caption{Small molecule size distribution from MD data compared to Model 2, Proposed Model with initial molecular composition \ch{C4H10} and \ch{C8H18}. The small molecule size distribution for all random graph models are obtained via generating functions.}
\label{fig:small_dist_C4,C8}
\end{figure}

\end{document}